\def\N{\mathbb N}
\def\N0{\mathbb N_0}
\def\Z{\mathbb Z}
\def\real{\mathbb R}
\def\complex{\mathbb C}
\def\graph{{\Gamma}}
\def\vertexset{{\mathcal V}}
\def\edgeset{{\mathcal E}}
 \def\tree{{\mathcal T}}
\def\Ex{{\mathbb X}}
\def\group{\mathbb G}
\def\genset{\mathbb S}
\def\free{\mathbb F}
\def\Id{\iota}
\title{\bf Quantum Cayley Graphs \\
for Free Groups}
\author{Robert Carlson \\
Department of Mathematics \\ 
University of Colorado at Colorado Springs \\
rcarlson@uccs.edu}
\newtheorem{thm}{Theorem}[section]
\newtheorem{cor}[thm]{Corollary}
\newtheorem{lem}[thm]{Lemma}
\newtheorem{prop}[thm]{Proposition}
\theoremstyle{definition}
\theoremstyle{remark}
\newcommand{\thmref}[1]{Theorem~\ref{#1}}
\newcommand{\lemref}[1]{Lemma~\ref{#1}}
\newcommand{\propref}[1]{Proposition~\ref{#1}}
\newcommand{\corref}[1]{Corollary~\ref{#1}}
\newcommand{\figref}[1]{Figure~\ref{#1}}
 \numberwithin{equation}{section}
\begin{document}

\maketitle

\date{}

\begin{abstract}
Differential operators $\Delta + q$ are considered on metric Cayley graphs of the finitely generated free groups $\free _M$.
The  function $q$ and the graph edge lengths may vary with the $M$ edge types.
Using novel methods, a set of $M$ multipliers $\mu _m(\lambda )$ depending on the spectral parameter is
found.  These multipliers are used to construct the resolvent and characterize the spectrum.   
\end{abstract}

\vskip 25pt

2010 Mathematics Subject Classification 34B45, 58J50

\vskip 25pt

Keywords: quantum graphs, analysis on graphs, spectral geometry.

\newpage



\section{Introduction}

The interplay between a group action and the spectral analysis of a differential operator invariant under the action is a 
popular theme in analysis.  If the group acts on a metric graph, the operators $-D^2+q$ with invariant $q$  
are obvious candidates for a spectral theoretic analysis.  This work treats operators $-D^2 + q$ on a metric Cayley graph 
$\tree _M$ of the nonabelian free group $\free _M$ on $M$ generators.  These Cayley graphs are regular trees, with each vertex having degree $2M$.   
In the present work the $M$ edge types of $\tree _M$ associated to the generators of $\free _M$
may have different lengths, with even functions $q$ varying over the $M$ edge types.  
Remarkably, novel techniques show that there is a system of $M$ multipliers $\mu _m(\lambda )$, resembling those of Hill's equation \cite{MagW} ,  
which can be used to construct the resolvents of the operators.
Echoing the Hill's equation analysis, the location of the spectrum is encoded in the behavior of the multipliers on the real axis.   

There is a large literature treating various aspects of analysis on symmetric infinite graphs. 
Homogeneous trees were considered as discrete graphs in \cite{Brooks}. 
The quantum graph spectral theory of $-D^2 + q$ on homogenous trees was studied in  \cite{Carlson97}, 
assuming that each edge had length $1$, and that $q(x)$ was the same even function on each edge.
These assumptions meant that the graph admitted radial functions, a structure which facilitated a Hill's equation type analysis of the spectral theory.
The spectral theory of radial tree graphs was considered in \cite{Carlson00} and \cite{NS}.
A sampling of work exploiting this structure includes \cite{BrF, Exner, Frank, Hislop, Sob}.
Certain physical models can also lead to graphs of lattices in Euclidean space where the 
group (e.g. $\Z ^2$) is abelian \cite{KP, Montroll}.

The quantum Cayley graph analysis begins in the second section with a review of quantum graphs and the definition of the self-adjoint Hilbert space
operator $\Delta +q$ which acts by sending $f$ in its domain to $-D^2f + qf$. 
The third section reviews basic material on Cayley graphs, and in particular the Cayley graphs $\tree _M$ of the free groups $\free _M$.
For each edge $e$ of the Cayley graph and each $\lambda \in \complex \setminus [0,\infty )$,
a combination of operator theoretic and differential equations arguments identifies a
one dimensional spaces of 'exponential type' functions which are initially defined on 
half of $\tree _M$.  The translational action of generators of $\free _M$ on subtrees of $\tree _M$ induces linear maps on the one-dimensional spaces of exponential functions,
thus producing multipliers $\mu _m(\lambda )$ for $m = 1,\dots ,M$.  
A square integrability condition shows that $|\mu _m(\lambda )| \le 1$ for all $\lambda \in \complex \setminus [0,\infty )$.       

The fourth section starts by linking the multipliers and rather explicit formulas for the resolvent of $\Delta + q$.
Recall that the multipliers for the classical Hill's equation satisfy quadratic polynomial equations with coefficients which are entire 
functions of the spectral parameter $\lambda $.  In this work the multipliers $\mu _m(\lambda )$ satisfy a coupled system
of quadratic equations with coefficients that are entire functions of $\lambda $.  An elimination procedure shows that the equations can be decoupled, 
leading to higher order polynomial equations with entire coefficients for individual multipliers $\mu _m(\lambda )$.
The multipliers $\mu _m(\lambda )$ have extensions from above and below to real  $\sigma $.
The extension is generally holomorphic, but as in the classical Hill's equation the difference $\delta _m(\sigma )$ of the limits from above and below can 
be nonzero.  Except for a discrete set, the spectrum of $\Delta +q$ is characterized by the condition $\delta _m(\sigma ) \not= 0$ for some $m$.

In the final section the system of multiplier equations is explicitly decoupled for  the case $M=2$.
Computer based calculations are used to generate several spectral plots. 
 
\section{Quantum graphs}

Suppose $\graph $ is a locally finite graph with a finite or countably infinite vertex set $\vertexset $ and an directed edge set $\edgeset $. 
 In the usual manner of metric graph construction \cite{BK}, a collection of intervals $\{ [0,l_e], e \in \edgeset \} $ is indexed by the graph edges.
Consistent with the directions of the graph edges $(v,w)$,
the initial endpoint $v$ is associated with $0$, and $w$ is associated with $l_e$.      
Assume that each unordered pair of distinct vertices is joined by at most one edge.  As a result, the map from the directed graph 
to the undirected graph which simply replaces a directed edge $(u,v)$ with an undirected edge $[u,v]$ is one-to-one on the edges.
A topological graph results from the identification of interval endpoints associated to a common vertex.

The Euclidean metric on the intervals is extended to a metric on this topological 
graph by defining the length of a path joining two points to be the sum of its (partial) edge lengths.
The (geodesic) distance between two points is the infimum of the lengths of the paths joining them. 
The resulting metric graph will also be denoted $\graph $.  

To extend the topological graph $\graph $ to a quantum graph, function spaces and differential operators are included.
A function $f:\graph \to \complex $ has restrictions to components $f_e:[0,l_e] \to \complex $.
Let $L^2(\graph )$ denote the complex Hilbert space $\oplus _e L^2[0,l_e]$ with the inner product
\[\langle f, g \rangle = \int_{\graph } f\overline g
= \sum_e \int_{0}^{l_e} f_e(x)\overline{g_e(x)} \ dx .\]
Given a bounded real-valued function $q$ on $\graph $, measurable on each edge, 
differential operators $ -D^2 + q$ are defined to act component-wise on functions $f \in L^2(\graph )$ in their domains.
The functions $q$ are also assumed to be even on each edge, $q_e(l_e-x) = q_e(x)$.  This assumption plays an important role
as the analysis becomes more detailed. 

Self-adjoint operators acting by $-D^2 + q$ can be defined using standard vertex conditions.
The construction of the operator begins with a domain ${\cal D}_{com}$ of compactly supported continuous functions 
$f \in L^2(\graph )$ such that $f_e'$ is absolutely continuous on each edge $e$,  
and $f_e'' \in  L^2[0,l_e]$.  In addition, functions in ${\cal D}_{com}$ are required to be continuous at the graph vertices,
and to satisfy the derivative condition
\begin{equation} \label{derivcond}
\sum_{e \sim v} \partial _{\nu} f_e(v) = 0,
\end{equation}
where $e \sim v$ means the edge $e$ is incident on the vertex $v$,
and $\partial _{\nu } = \partial /\partial x $ in outward pointing local coordinates.  

Since the addition of a constant will make the potential nonegative, but have only a trivial effect on the spectral theory, the assumption
\begin{equation}
q \ge 0
\end{equation}
is made for convenience.
With the domain ${\cal D}_{com}$, the operators $-D^2+ q$ are symmetric and bounded below, with quadratic form
\begin{equation} \label{qform}
\langle (-D^2+q)f,f \rangle = \int_{\graph} |f'|^2 + q|f|^2 .
\end{equation}
These operators always have a self-adjoint
Friedrich's extension, denoted $\Delta + q$, whose spectrum  is a subset of $[0,\infty )$.
When the edge lengths of $\graph $ have a positive lower bound the Friedrich's extension is the unique self adjoint extension \cite[p. 30]{BK}. 

Say that an edge $e = (v_-,v_+) \in \edgeset $ of a connected graph $\graph $ is a bridge if the removal of (the interior of ) $e$ separates the graph into 
two disjoint connected subgraphs.  If $e$ is a bridge, let $\graph _{e}^{\pm}$ denote the closure of the connected component of $\graph \setminus v_{\mp}$ 
which contains the vertex $v_{\pm}$.  Less formally, $\graph _{e}^{\pm}$ includes $e$, the vertices $v_{\pm}$, and the $v_+$ side of $\graph $.
  
For $\lambda \in \complex \setminus [0,\infty )$, the resolvents $R(\lambda ) = (\Delta + q - \lambda I)^{-1}$ of the self adjoint operators $\Delta +q$ provide special solutions 
of $ -y'' + qy = \lambda y$ on $\graph _{e}^{\pm}$.  Let $\Ex _{e}^{\pm}$ denote 
the space of functions $y_{\pm}:\graph_{e}^{\pm} \to \complex $ 
which (i) satisfy 
\begin{equation} \label{eveqn}
 -y'' + qy = \lambda y
 \end{equation}
on each edge $e \in \graph _{\epsilon}^{\pm} $, 
(ii) are continuous and square integrable on $\graph _{\epsilon }^{\pm}$, 
and (iii) which satisfy the derivative conditions \eqref{derivcond} at each vertex of  
$\graph _{\epsilon }^{\pm}$ except possibly $v_{\mp}$.
 
\begin{lem} \label{bridgelem}
Suppose $e$ is a bridge of the connected graph $\graph $, and $\lambda \in \complex \setminus [0,\infty )$.
Then  $\Ex _{e}^{\pm}$ is one dimensional.
\end{lem}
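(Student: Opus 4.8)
The plan is to regard $\graph_e^+$ as a metric graph in its own right, in which the vertex $v_-$ has degree one: every edge incident on $v_-$ other than $e$ belongs to the discarded $v_-$ side, so inside $\graph_e^+$ only $e$ meets $v_-$. In these terms $\Ex_e^+$ is exactly the space of square integrable solutions of \eqref{eveqn} that are continuous and satisfy the derivative condition \eqref{derivcond} at every vertex of $\graph_e^+$, \emph{except} that no condition is imposed at the boundary vertex $v_-$ (continuity there being vacuous). I would prove the two bounds $\dim \Ex_e^+ \le 1$ and $\dim \Ex_e^+ \ge 1$ separately; the argument for $\graph_e^-$ is identical with the roles of $v_+$ and $v_-$ reversed.

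For the upper bound I would argue by contradiction. Suppose $y_1,y_2 \in \Ex_e^+$ are linearly independent. Because $v_-$ has degree one, the condition \eqref{derivcond} at $v_-$ reduces to the single Neumann condition $\partial_\nu y_e(v_-)=0$, so there is a nontrivial combination $y=c_1y_1+c_2y_2$ with $\partial_\nu y_e(v_-)=0$. This $y$ is a nonzero square integrable solution of \eqref{eveqn} satisfying \eqref{derivcond} at \emph{every} vertex of $\graph_e^+$. Multiplying \eqref{eveqn} by $\overline y$ and integrating by parts over a finite exhaustion of $\graph_e^+$, then letting the exhaustion fill the graph, the interior boundary terms cancel by continuity and \eqref{derivcond}, the term at $v_-$ vanishes by the Neumann condition, and the terms along the receding frontier tend to zero (see the final paragraph). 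This yields $\int_{\graph_e^+}|y'|^2+q|y|^2=\lambda\int_{\graph_e^+}|y|^2$, whose left side is real and nonnegative while the right side $\lambda\|y\|^2$ is either non-real or strictly negative, since $\lambda\in\complex\setminus[0,\infty)$ and $y\neq 0$. This contradiction forces $\dim\Ex_e^+\le 1$.

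For the lower bound I would extract an element of $\Ex_e^+$ from the resolvent $R(\lambda)=(\Delta+q-\lambda I)^{-1}$, which exists because $\sigma(\Delta+q)\subseteq[0,\infty)$. Choose $h\in L^2(\graph)$ supported in the interior of the discarded side, away from $e$, and set $u=R(\lambda)h$. On each edge of $\graph_e^+$ one has $h=0$, so $u$ solves \eqref{eveqn} there; moreover $u$ lies in the domain of $\Delta+q$, hence is square integrable and satisfies continuity and \eqref{derivcond} at every vertex of $\graph$. Restricting to $\graph_e^+$ preserves all of these except at $v_-$: the full-graph condition \eqref{derivcond} at $v_-$ couples $e$ with edges on the $v_-$ side, so its restriction retains only the $e$-contribution, which need not vanish. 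Thus $u|_{\graph_e^+}\in\Ex_e^+$, and since $\graph$ is connected through the bridge $e$, $h$ can be chosen so that this restriction is not identically zero; otherwise the resolvent kernel would vanish across $e$, which is impossible for a connected graph. Hence $\dim\Ex_e^+\ge 1$.

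The step I expect to require the most care is the passage to infinity: justifying that the frontier boundary terms in the integration by parts vanish, and, implicitly, that the count of independent $L^2$ solutions is exactly one rather than larger. Both rest on the endpoint at each of the infinitely many ends of the tree being limit point, which here is automatic because $q$ is bounded. Concretely, for a square integrable solution $y$ the identity $y''=(q-\lambda)y$ gives $y''\in L^2$, whence $y'\in L^2$; then $(y_e\overline{y_e}')'=|y_e'|^2+(q-\bar\lambda)|y_e|^2\in L^1$, so $y_e\overline{y_e}'$ has a limit along each ray that must be zero, and the frontier sums vanish as the exhaustion expands. This limit point behavior is precisely what confines the freedom in choosing an $L^2$ solution to the single dropped condition at $v_-$, yielding the one-dimensional space $\Ex_e^+$.
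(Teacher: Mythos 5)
Your overall architecture matches the paper's: an upper bound obtained by using a linear combination to kill one boundary datum at $v_-$ and then invoking positivity, and a lower bound extracted from the resolvent $R(\lambda)$. The upper bound is fine and even slightly different in flavor from the paper's: you impose a Neumann condition at $v_-$ and run the quadratic-form computation directly over an exhaustion, whereas the paper imposes a Dirichlet condition $h(v_-)=0$ and absorbs all the behavior at infinity into the self-adjointness of the modified operator on $L^2(\graph_e^+)$. Your version needs the frontier terms to vanish, and your sketch of that (via $y''=(q-\lambda)y\in L^2$, hence $y'\in L^2$, hence $y\overline{y}'$ of bounded variation) has the right ingredients, though "the frontier sums vanish" requires choosing the cut points on the infinitely many frontier edges so that $\sum|y\overline{y}'|$ is controlled by the tail of $\int(|y|^2+|y'|^2)$; this is routine but should be said.

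The genuine gap is in the lower bound, at exactly the point where the paper does most of its work. You choose $h$ supported in the interior of $\graph_e^-\setminus e$, set $u=R(\lambda)h$, and assert that $h$ can be chosen so that $u|_{\graph_e^+}\not\equiv 0$ because "otherwise the resolvent kernel would vanish across $e$, which is impossible for a connected graph." No reason is given why that is impossible, and it is not self-evident: on trees one cannot appeal to unique continuation (compactly supported eigenfunctions exist on trees at other energies), so "the Green's kernel cannot vanish identically between the two sides of a bridge" is a claim that needs a proof, not a remark. The paper's route around this is concrete: it takes $z$ solving $-z''+qz=\bar\lambda z$ \emph{on the bridge edge $e$ itself}, extends by zero, and shows by integration by parts that $f=R(\lambda)z$ cannot be supported in $e$ (else $\int_e|z|^2=0$); then a second argument, using the already-proved one-dimensionality bound, shows that among the two independent choices of $z$ at least one produces an $f$ that is nontrivial on the $\graph_e^+$ side. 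Until you supply an argument of comparable substance for your nonvanishing claim, the existence half of the lemma is unproved. (A secondary, fixable point: with your choice of $h$ the function $u$ does solve the homogeneous equation on all of $\graph_e^+$ including $e$, which is tidier than the paper's need to re-solve the ODE on $e$; but that advantage is moot without the nontriviality step.)
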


\begin{proof}
Suppose two linearly independent functions $g_1, g_2$ on $\graph _{e}^+ $ satisfy (i) - (iii). 
Then a nontrivial linear combination $h = \alpha _1g_1+\alpha _2g_2$ would satisfy $h(v_-) = 0$.
Consider altering the domain of $\Delta + q$ by replacing the vertex conditions at $v_-$ by the
condition $f(v_-) = 0$ for each edge incident on $v_-$.  The resulting operator is still self-adjoint and nonnegative on $L^2(\graph )$,
and restricts to a self-adjoint operator on $L^2(\graph_{e}^+)$.
The function $h$ is then a square integrable eigenfunction with eigenvalue $\lambda $, which is impossible. 
A similar argument applies to $\graph _{e}^-$.  Thus $\Ex _{e}^{\pm}$ is at most one dimensional.

Suppose $z$ is a nontrivial solution of the equation 
\begin{equation} \label{zeqn}
-z'' + qz = \overline{\lambda} z
\end{equation}
on the interval $[0,l_{e}]$, and
$f$ is a function in the domain of $\Delta + q$ with support in $e$.
As a function on $[0,l_{e}]$ the function $f$ then satisfies $f(0) = f(l_{e}) = 0$
and $f'(0) = f'(l_e) = 0$.  Integration by parts and the vanishing boundary conditions for $f$ lead to 
\[ 0 = \int_0^{l_{e}} f \overline{[-z'' + qz - \overline{\lambda }z]} \ dx 
=  \int_0^{l_{e}} [-f'' + qf - \lambda f ] \overline{z} \ dx .\]

Since $\lambda $ is not in the spectrum of $\Delta + q$, the resolvent $R(\lambda ) = (\Delta + q - \lambda  I)^{-1}$ maps $L^2(\graph)$
onto the domain of $\Delta +q$.   Extend the functions $z$ by zero to the other edges of $\graph $ to obtain an element of $L^2(\graph)$.
Suppose $f = R(\lambda )z \in L^2(\graph)$ had support in $e $; then the above calculation would give
\[\int_0^{l_{e}} |z|^2 \ dx = 0,\]
which is impossible.  Also, since $z$ does vanish outside of $e$, the function $f = R(\lambda )z $ satisfies
$ -f'' + qf = \lambda f $ on each edge other than $e$.

For $i = 1,2$ there are two independent solutions $z_i$ of \eqref{zeqn} on $e$, and after extension of $z_i$ by zero, there are
two independent functions $f_i = R(\lambda )z_i $.  
As noted above, the functions $f_i$ must be nonzero somewhere on $\graph  \setminus e$; 
without loss of generality suppose $f_1$ is not identically zero on $\graph _{e}^+ \setminus e$.    

An argument by contradiction shows that at least one of the functions $f_1,f_2$ must be nonzero $\graph _{e}^- \setminus e$.
Suppose both $f_1$ and $f_2$ are identically zero on $\graph _{e}^- \setminus e$.  
Define functions $g_i \in \Ex _e^+$ which agree with $f_i$ on $\graph _e^+ \setminus e$,
but which satisfies \eqref{eveqn} on $e$, with initial data 
$g_i(v_+) = f_+(v_+)$ and with $\partial _{\nu} g_i(v_+)$ chosen so the derivative conditions \eqref{derivcond} at $v_+$ are satisfied for $g_i$.  
Since $\Ex _e^+$ is at most one-dimensional,
a nontrivial linear combination $g = \alpha _1g_1+\alpha _2g_2 = 0$ on $\graph _e^+$, and so 
$f = \alpha _1f_1+\alpha _2f_2$ is zero on $\graph \setminus e$.
But the existence of a nontrivial function $f = R(\lambda ) z$ with support in $e$ was ruled out above.

The space $\Ex _e^+$ is then the span of $g_1$, and the case of $\Ex _e^-$ is similar.

\end{proof}

The construction of \lemref{bridgelem} also provides the next result.

\begin{lem} \label{hololem}
Suppose $e$ is a bridge of the connected graph $\graph $, and $\lambda \in \complex \setminus [0,\infty )$.
A basis $g(\lambda )$ of $\Ex _{e}^{\pm}$ may be chosen holomorphically in an open disc centered at $\lambda $,
and real valued if $\lambda \in (-\infty , 0)$.
If $x \in \graph _e^{\pm}$ then $g(x,\lambda )$ is holomorphic.
\end{lem}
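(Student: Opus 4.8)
The plan is to show that the single construction used in \lemref{bridgelem} already depends holomorphically on $\lambda$, and then to arrange a reflection symmetry that forces real values on the negative axis. I treat $\Ex_e^+$; the case of $\Ex_e^-$ is identical after interchanging the roles of $v_+$ and $v_-$.

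First I would fix a base point $\lambda_0 \in \complex \setminus [0,\infty)$. The argument in \lemref{bridgelem} produces a fixed $\phi \in L^2(\graph)$ supported in $e$ for which $R(\lambda_0)\phi$ does not vanish identically on $\graph_e^+ \setminus e$; I hold this $\phi$ fixed (independent of $\lambda$) and set $f(\cdot,\lambda) = R(\lambda)\phi$. Since $\lambda \mapsto R(\lambda)$ is a holomorphic $B(L^2(\graph))$-valued function on the resolvent set, both $R(\lambda)\phi$ and $(\Delta+q)R(\lambda)\phi = \phi + \lambda R(\lambda)\phi$ are holomorphic into $L^2(\graph)$, so $\lambda \mapsto f(\cdot,\lambda)$ is holomorphic into the domain of $\Delta + q$ with its graph norm. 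Because domain elements lie in $H^2$ on each edge, point evaluation and evaluation of the first derivative at the endpoints of any edge are bounded functionals there; composing these with the holomorphic map $\lambda \mapsto f(\cdot,\lambda)$ shows that $f(x,\lambda)$, together with the boundary data $f(v_+,\lambda)$ and $\partial_\nu f_{e'}(v_+,\lambda)$ for each $e' \sim v_+$, are all holomorphic in $\lambda$.

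Next I would carry out the patching step of \lemref{bridgelem} with attention to holomorphy. On $\graph_e^+ \setminus e$ the function $f$ already solves \eqref{eveqn}, satisfies \eqref{derivcond} at every interior vertex, and is square integrable, since $\phi$ vanishes there and $f$ lies in the domain of $\Delta+q$. I then define $g(\cdot,\lambda)$ to equal $f$ on $\graph_e^+ \setminus e$ and, on $e$, to be the solution of \eqref{eveqn} with Cauchy data $g(v_+) = f(v_+,\lambda)$ and $\partial_\nu g(v_+) = -\sum_{e' \sim v_+,\, e' \neq e}\partial_\nu f_{e'}(v_+,\lambda)$, so that \eqref{derivcond} holds at $v_+$. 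Exactly as in \lemref{bridgelem}, $g(\cdot,\lambda) \in \Ex_e^+$. Holomorphy of $g(x,\lambda)$ for $x \in e$ then follows from the standard theorem on analytic dependence of solutions of $-y'' + qy = \lambda y$ on the parameter $\lambda$ and on the (now holomorphic) Cauchy data, so $g(x,\lambda)$ is holomorphic for every $x \in \graph_e^+$.

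To see that $g$ is a genuine basis on a disc, I note that $g(\cdot,\lambda_0) = f(\cdot,\lambda_0)$ is nonzero on $\graph_e^+ \setminus e$ by the choice of $\phi$, so $g(x_0,\lambda_0) \neq 0$ for some $x_0$; by holomorphy $g(x_0,\lambda) \neq 0$, hence $g(\cdot,\lambda) \not\equiv 0$, throughout a sufficiently small disc about $\lambda_0$, and since $\Ex_e^+$ is one dimensional this $g(\cdot,\lambda)$ is the desired holomorphic basis. Finally, for $\lambda_0 \in (-\infty,0)$ I would take $\phi$ real (possible since some complex $\phi$ works, hence some real part or imaginary part does too); because $q$ is real, complex conjugation intertwines $\Delta + q - \lambda I$ with $\Delta + q - \overline{\lambda} I$, giving $\overline{R(\lambda)\phi} = R(\overline{\lambda})\phi$ and thus $\overline{f(x,\lambda)} = f(x,\overline{\lambda})$. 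The real coefficients of \eqref{eveqn} propagate this reflection identity through the ODE patch to $g$, so $\overline{g(x,\lambda)} = g(x,\overline{\lambda})$; in particular $g(x,\lambda)$ is real for real $\lambda$ near $\lambda_0$. The main obstacle is the passage from operator-norm holomorphy of the resolvent to pointwise holomorphy of $f(x,\lambda)$ together with its boundary derivatives, which is what makes the ODE patch on $e$ vary holomorphically; once the evaluation functionals are recognized as bounded on the graph-norm domain, the remaining steps are routine.
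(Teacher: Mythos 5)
Your proposal is correct and follows essentially the same route as the paper: holomorphy of the operator-valued resolvent, boundedness of the point and derivative evaluation functionals on the graph-norm domain, holomorphic propagation through the ODE patch on $e$, and a conjugation symmetry for real negative $\lambda$. The only differences are expository — you fix the source $\phi$ independently of $\lambda$ and spell out the nonvanishing-on-a-disc step, both of which the paper leaves implicit.
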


\begin{proof}
For $\lambda \in \complex \setminus [0,\infty )$ the resolvent $R(\lambda )$ is a holomorphic operator valued function, so the 
functions $f = R(\lambda )z$ are holomorphic with values in the domain of $\Delta + q$.  
For $x \in \graph ^{\pm} \setminus e$ the evaluations $f(x),f'(x)$ are continuous functionals \cite[p. 191-194]{Kato} on the domain of $\Delta + q$,
so the values $g(v_+) $ and $\partial _{\nu} g_i(v_+)$ from \lemref{bridgelem} are holomorphic, 
as are the $L^2(\graph ^{\pm})$ functions $g(\lambda )$ and the values $g(x,\lambda )$ for $x \in e$.
All of these functions can be chosen to be real valued if $\lambda \in (-\infty , 0)$.
\end{proof}

\section{Quantum Cayley graphs}

\subsection{General remarks}

Suppose $\group $ is a finitely generated group with identity $\Id$.  Let $\genset \subset \group $ be a finite generating set for $\group $,
meaning that every element of $\group $ can be expressed as a product of elements of $\genset $ and their inverses.
Following \cite{Meier}, the Cayley graph $\graph _{\group , \genset }$ for the group $\group $ with generating set $\genset $ 
is the directed graph whose vertex set $\vertexset $ is the set of elements of $\group $.  The edge set of $\graph _{\group , \genset }$  
is the set $\edgeset $ of directed edges $(v,vs)$ with $s \in \genset $, {\it initial vertex} $v \in \group $ and {\it terminal vertex} $vs$.
When confusion is unlikely we will simply write $\graph $ for  $\graph _{\group , \genset }$.
Assume that if $s \in \genset $, then $s^{-1} \notin \genset $.
This condition avoids loops $(v,v\Id )$, and insures that at most one directed edge connects any (unordered) pair of vertices.
We will often consider $\graph $ to have undirected edges $[v,vs]$, with the directions given above available when needed.   

$\group $ acts transitively by left multiplication on the vertices of $\graph $; 
that is, for every $v,w \in \vertexset$ there is a $g \in \group$ such that $w = gv$.  
If $e = (v,vs) \in \edgeset $, then $ge = (gv,gvs) \in \edgeset $, so 
$\group $ also acts on $\edgeset $, although this action is not generally transitive.    
Say that two directed edges $e_1,e_2$ are equivalent if there is a $g \in \group $ such that $e_2 = ge_1$.
The equivalence classes will be called edge orbits of the $\group $ action on $\edgeset $.

\begin{prop}
A set $B$ of directed edges is an edge orbit if and only if there is a unique $s \in \genset $ such that $B = \{ (v,vs), v \in \group \} $.
It follows that the number of edge orbits is the cardinality of $\genset $.
\end{prop}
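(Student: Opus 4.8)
The plan is to compute the orbit of a single edge explicitly and to observe that it is precisely one of the candidate sets in the statement. For $s \in \genset$ write $B_s = \{(v,vs) : v \in \group\}$. The heart of the matter is the formula for the action, $g\cdot(v,vs) = (gv,gvs) = (gv,(gv)s)$, valid for every $g \in \group$. This shows immediately that the orbit of $(v,vs)$ is contained in $B_s$, since each image is again of the form $(w,ws)$. Conversely, given any $w \in \group$, the choice $g = wv^{-1}$ yields $g\cdot(v,vs) = (w,ws)$, so the orbit of $(v,vs)$ exhausts $B_s$. Hence the orbit of any edge equals exactly the set $B_s$ attached to its generator $s$.

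This one computation delivers both implications. If $B = B_s$, then $B$ is the orbit of any of its edges and is therefore an edge orbit. Conversely, every edge has the form $(v,vs)$ for some $v \in \group$ and $s \in \genset$ by the definition of $\edgeset$, so every edge orbit coincides with some $B_s$. For uniqueness, I would recover the generator from the edge: if $(v,vs) = (v',v's')$ as ordered pairs of vertices, then $v = v'$, and cancellation in $\group$ gives $s = s'$. Thus $B_s = B_{s'}$ forces $s = s'$, the sets $\{B_s : s \in \genset\}$ are pairwise disjoint and partition $\edgeset$, and the assignment $s \mapsto B_s$ is a bijection from $\genset$ onto the collection of edge orbits. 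The final count, that the number of edge orbits equals the cardinality of $\genset$, is then immediate.

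There is no deep obstacle here; the argument is a direct application of the left-multiplication formula for the $\group$-action on $\edgeset$. The only point meriting attention will be the unambiguous recovery of the generator label from a directed edge, which rests on the standing convention that an edge is an ordered pair of vertices together with group cancellation, ensuring that the map $s \mapsto B_s$ is genuinely injective. Everything else reduces to the transitivity of left multiplication on each $B_s$ established by the computation above.
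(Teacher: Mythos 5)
Your proof is correct and follows essentially the same route as the paper: transitivity on each set $B_s$ via left multiplication by $wv^{-1}$, and recovery of the generator by cancellation in $\group$ to show the sets $B_s$ are distinct orbits. No issues.
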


\begin{proof}
If $v,w \in \group $, then $wv^{-1}(v,vs) = (w,ws)$,  so for a fixed $s \in \genset $ all edges of the form $(v,vs)$ are in the same orbit.
If $g(v,vs_1) = (w,ws_2)$, then $gv = w$ and $gvs_1 = ws_2$, so $ws_1 = ws_2$ and $s_1 = s_2$. 
\end{proof}

\begin{prop} \label{connect}
If $\group $ is generated by the finite set $\genset $, then the undirected graph $\graph _{\group , \genset }$ is path connected.
\end{prop}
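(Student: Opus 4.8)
The plan is to reduce the statement to connecting the identity vertex $\Id$ to an arbitrary vertex $g \in \group$, and then to build an explicit path by reading off a word that expresses $g$ in terms of the generators in $\genset$ and their inverses.

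First I would exploit the group action. As recorded in the text, left multiplication by any $h \in \group$ sends the directed edge $(v,vs)$ to $(hv,hvs)$, so $\group$ acts on $\graph _{\group ,\genset }$ by graph automorphisms preserving the undirected edges. Consequently, to prove connectedness it suffices to connect $\Id$ to every $h \in \group $: given arbitrary vertices $v,w \in \vertexset $, I would take a path from $\Id$ to $v^{-1}w$ and apply the automorphism induced by left multiplication by $v$ to obtain a path from $v$ to $w$.

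The main step is then the construction of a path from $\Id$ to a fixed $g$. Because $\genset $ generates $\group $, I can write $g = s_1^{\epsilon _1}\cdots s_n^{\epsilon _n}$ with each $s_k \in \genset $ and $\epsilon _k \in \{+1,-1\}$. Setting $v_0 = \Id$ and $v_k = s_1^{\epsilon _1}\cdots s_k^{\epsilon _k}$ produces vertices with $v_n = g$, and the crux is to verify that consecutive vertices $v_{k-1}$ and $v_k$ are joined by an undirected edge of $\graph _{\group ,\genset }$. Since $v_k = v_{k-1}s_k^{\epsilon _k}$, there are two cases: if $\epsilon _k = +1$ this is precisely the directed edge $(v_{k-1},v_{k-1}s_k)$, while if $\epsilon _k = -1$ then $v_{k-1} = v_k s_k$, so the directed edge $(v_k, v_k s_k) = (v_k,v_{k-1})$ furnishes the undirected edge $[v_{k-1},v_k]$. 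Concatenating these edges gives a walk in the combinatorial graph, which in the metric graph $\graph $ is realized as a continuous path.

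The only place demanding care is this last verification, and it is worth emphasizing \emph{why} it works: the directed edges of the Cayley graph record only multiplication by generators in $\genset $, never by their inverses, so passing to the \emph{undirected} graph is exactly what permits the inverse letters $s_k^{-1}$ occurring in the word for $g$ to be traversed as edges. I expect no genuine obstacle beyond this case split and the reduction to paths emanating from $\Id$; the argument is essentially a translation of the defining property of a generating set into adjacency in $\edgeset $.
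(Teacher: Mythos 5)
Your proposal is correct and is essentially the paper's own argument: the key point in both is that an inverse letter $s^{-1}$ in a word for $g$ is traversed via the undirected edge $[v,vs^{-1}] = [vs^{-1},(vs^{-1})s]$, i.e.\ a directed edge of the Cayley graph read backwards. The paper packages this as passing to the symmetrized generating set $\{s,s^{-1}: s\in\genset\}$ while you do an explicit case split on the exponent, and your initial reduction to paths from $\Id$ via the left-translation automorphisms is a harmless variant of the paper's observation that every vertex connects to $\Id$.
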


\begin{proof}
If our requirements on generating sets are momentarily relaxed and $\genset$ is extended to the set $\genset _1 = \{ s,s^{-1}, s \in \genset \}$, 
then the Cayley graph $\graph _{\group , \genset _1}$ will have a directed path from every
element of $\group $ to $\Id $.  An edge of this graph has one of the forms $(v,vs)$ or $(v,vs^{-1})$.  As an undirected edge,
$[v,vs^{-1}] = [vs^{-1},v] = [vs^{-1}, vs^{-1}s]$, so for every directed edge of $\graph _{\group , \genset _1}$ there is
an undirected edge of $\graph _{\group ,\genset }$ with the same vertices.   Consequently, the undirected graph $\graph _{ \group ,\genset }$ is path connected.
\end{proof}

Cayley graphs $\graph $ can be linked with the spectral theory of differential operators. 
To maintain a strong connection with the group $\group $,  the edges of $\graph $ in the same $\group $ orbit  
will have the same length.  The action of $\group $ on the combinatorial edges may then be extended to the
edges of the metric graph by taking $x \in (0, l_{(v,vs)})$ to $gx = x \in (0,l_{(gv,gvs)})$.
This group action also provides a $\group $ action on the functions $f$ on $\graph $.   
The action simply moves the edge index, so that in terms of function components 
$gf_{ge}(gx) = gf_{ge}(x) = f_e(x)$.
Functions are $\group $-invariant if $f_{ge}(x)  = f_e(x)$ for all directed edges $e$ and all $g \in \group$.
A quantum Cayley graph will be a quantum graph whose underlying combinatorial graph is 
the Cayley graph of a finitely generated group, whose edge lengths are constant on edge orbits, and 
whose differential operator $\Delta + q$ commutes with the group action on functions.
Since there is little chance of confusion, the same notation, e.g. $\graph $, will be used for the corresponding quantum, metric, and combinatorial graphs.

\subsection{Free groups and their graphs $\tree _M$}

This work will focus on Cayley graphs with $\group = \free _M$, the free group \cite{Meier} with rank $M$. 
Recall that the elements of $\free _M$ are equivalence classes of finite length words generated by $M$ distinct symbols 
$s_1,\dots ,s_M$ and their inverse symbols $s_1^{-1},\dots ,s_M^{-1}$.  Two words are equivalent if they have a common
reduction achieved by removing adjacent symbol pairs $s_ms_m^{-1}$ or  $s_m^{-1}s_m$.  The group identity is the empty word class,
the group product of words $w_1,w_2$ is the class of the concatenation $w_1w_2$, and inverses are formed by using
inverse symbols in reverse order, e.g. $(s_2s_3^{-1}s_1)^{-1} = s_1^{-1}s_3s_2^{-1}$.

Given a free group $\free _M$ with generating set $\genset = \{ s_1,\dots ,s_M \} $, let $\tree _M$ denote the corresponding Cayley graph.   
These (undirected) graphs (see \figref{F2graph}) have a simple structure \cite[p. 56]{Meier}.

\begin{prop} The undirected graph $\tree _M$ is a tree whose vertices have degree $2M$.
\end{prop}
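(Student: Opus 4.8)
The plan is to establish two facts: that every vertex of $\tree _M$ has degree exactly $2M$, and that $\tree _M$ contains no cycle. Since \propref{connect} already gives connectivity, a connected acyclic graph is a tree, so these two facts together yield the statement.

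First I would count the edges meeting a fixed vertex $v \in \free _M$. By the definition of the Cayley graph the directed edges touching $v$ are the $M$ outgoing edges $(v, vs_m)$, $m = 1,\dots,M$, together with the incoming edges $(w, ws_m) = (vs_m^{-1}, v)$, since $ws_m = v$ forces $w = vs_m^{-1}$. In the undirected graph the neighbors of $v$ are therefore the $2M$ elements $vs_1,\dots,vs_M,vs_1^{-1},\dots,vs_M^{-1}$. To see that the degree equals $2M$ I must check that these are pairwise distinct and distinct from $v$, and this is exactly where freeness enters: cancelling $v$ on the left, distinctness amounts to the assertion that $s_1,\dots,s_M,s_1^{-1},\dots,s_M^{-1}$ are $2M$ distinct nonempty reduced words. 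Each such symbol is already reduced, no two coincide (in particular $s_m \ne s_n^{-1}$, so no outgoing neighbor equals an incoming one), and none equals $\Id$. Hence $v$ has degree $2M$.

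Next I would rule out cycles by translating walks into the group law. A walk in the undirected graph starting at $v_0$ crosses a sequence of edges, and crossing $[w, ws_m]$ from $w$ to $ws_m$ is right multiplication by $s_m$, while crossing it in reverse is right multiplication by $s_m^{-1}$. Thus a walk of length $n$ determines a word $a_1\cdots a_n$ with each $a_i \in \{ s_1^{\pm 1},\dots,s_M^{\pm 1}\}$, passing through the vertices $v_0, v_0a_1, v_0a_1a_2, \dots$; a walk with no immediate backtracking corresponds precisely to a word containing no adjacent cancelling pair, i.e. a reduced word. A cycle would be such a nonbacktracking closed walk of positive length, so that $v_0 a_1\cdots a_n = v_0$ and hence the nonempty reduced word $a_1\cdots a_n$ equals $\Id$ in $\free _M$. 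This contradicts the defining property of the free group that only the empty reduced word represents the identity, so $\tree _M$ is acyclic.

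The step that demands the most care is this last correspondence: one must verify that deleting an immediate backtrack from a walk matches removing an adjacent pair $s_ms_m^{-1}$ or $s_m^{-1}s_m$ from the associated word, so that a genuine cycle really does yield a nonempty reduced relation and cannot silently collapse to the empty word. With acyclicity in hand, combining it with the connectivity of \propref{connect} shows that $\tree _M$ is a connected graph without cycles, hence a tree with every vertex of degree $2M$.
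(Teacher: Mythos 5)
Your proof is correct and takes essentially the same route as the paper: acyclicity follows because a cycle would yield a nonempty reduced word equal to $\Id$ in $\free_M$, and this is combined with the connectivity supplied by \propref{connect}. You are in fact somewhat more careful than the paper's own proof, which asserts the degree count $2M$ in passing without checking that the neighbors $vs_1^{\pm 1},\dots,vs_M^{\pm 1}$ are pairwise distinct, and which phrases the cycle argument via repeated vertices rather than via your (equivalent) nonbacktracking-walk/reduced-word correspondence.
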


\begin{proof}
Suppose $\tree _M$ had a cycle with distinct vertices $w_1,\dots ,w_K$ and edges $[w_K,w_1]$ and $[w_k,w_{k+1}]$ for $k = 1,\dots, K-1$. 
In the undirected graph $\tree _M$ edges extend from $w_k$ by some $s_m$ or $s_m^{-1}$,
so that $w_{k+1} = w_ks_m$ or $w_{k+1} = w_ks_m^{-1}$, so each vertex has degree $2M$. 
The sequence of visited vertices $w_1,\dots ,w_K, w_1$ is described by a word of right multiplications by the generators and their inverse symbols equal to $\Id $ in $\free _M$.  
Since this word can be reduced to the empty word, it must have adjacent symbols $s_ms_m^{-1}$ or $s_m^{-1}s_m$.
This means the vertices $w_1,\dots ,w_K$ are not distinct, so no such cycle exists.  Since $\tree _M$ is connected by \propref{connect} and 
has no cycles, $\tree _M$ is a tree.  
\end{proof}

\begin{figure}
\begin{center}
\includegraphics[height=80mm, width=120mm]{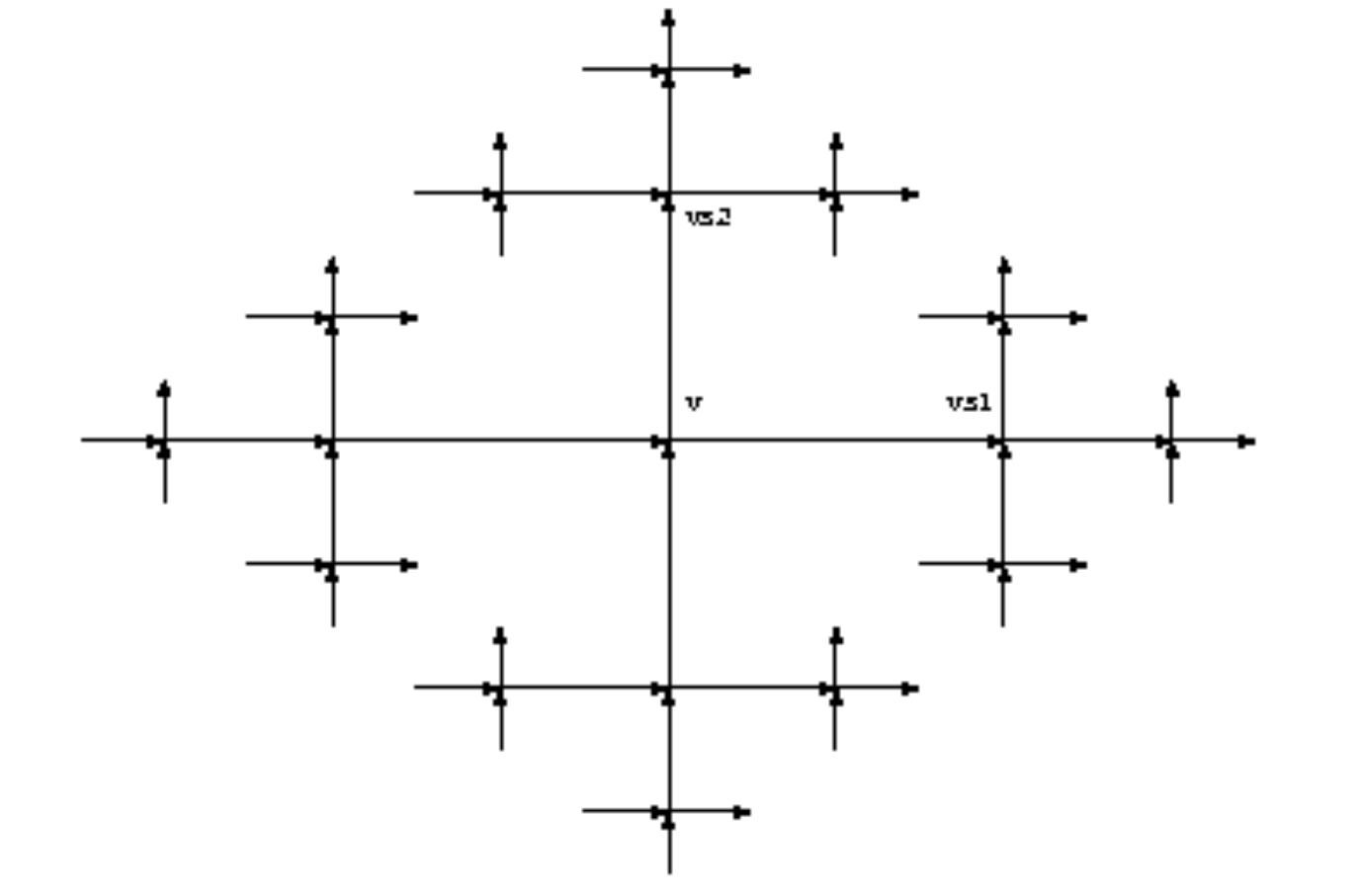}
\caption{A rescaled graph $\tree _2$ } \label{F2graph}
\end{center}
\end{figure}

\subsection{Abelian subgroups and multipliers for $\tree _M$}

Each edge $e = (v,vs) \in \tree _M$ is a bridge.  With $v = v_-$ and $vs = v_+$, the subgraphs $\graph _e^{\pm}$ described above will be
subtrees of $\tree _M$, denoted by $\tree _e^{\pm}$.  
The vector spaces $\Ex _{\epsilon }^{\pm}$ are as in \lemref{bridgelem}.

\begin{lem} \label{nozero}
Suppose $e = (v,vs)$ is an edge of $\tree _M$ and $\lambda \in \complex \setminus [0,\infty )$.
If $y^{\pm}$ is a nontrivial element of $\Ex _{e}^{\pm}$, then $y^{\pm}$ is nowhere vanishing on $\tree _{e}^{\pm}$. 
\end{lem}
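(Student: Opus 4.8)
The plan is to argue by contradiction, treating $\Ex_e^+$ (the case of $\Ex_e^-$ being symmetric with the roles of $v_+$ and $v_-$ interchanged). Suppose $y$ is a nontrivial element of $\Ex_e^+$ that vanishes at some point $p \in \tree_e^+$. I want to show that this single zero forces $y \equiv 0$ on all of $\tree_e^+$, which is the desired contradiction. The engine throughout is the self-adjointness argument already used in \lemref{bridgelem}: imposing a Dirichlet condition at a cut point and restricting to a subtree on the far side from $v_-$ produces a self-adjoint nonnegative operator, whose spectrum therefore lies in $[0,\infty)$ and cannot contain $\lambda$.

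The first step is the downstream cut. Let $\tree''$ be the closure of the union of those connected components of $\tree_e^+ \setminus \{p\}$ that do not contain $v_-$, with $p$ adjoined as a boundary vertex. Since $p \ne v_-$ (or $p=v_-$, a case handled as the base of the induction below), every interior vertex of $\tree''$ is a vertex of $\tree_e^+$ other than $v_-$, so $y$ satisfies the derivative condition \eqref{derivcond} there; moreover $y$ solves \eqref{eveqn} on each edge, is square integrable, and satisfies $y(p)=0$. Thus $y|_{\tree''}$ lies in the domain of $-D^2+q$ on $L^2(\tree'')$ determined by the standard vertex conditions at the interior vertices together with the Dirichlet condition $f(p)=0$. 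As in \lemref{bridgelem}, this operator is self-adjoint and nonnegative, its quadratic form being $\int_{\tree''} |f'|^2 + q|f|^2 \ge 0$, so its spectrum lies in $[0,\infty)$. Were $y|_{\tree''}$ not identically zero it would be an eigenfunction with eigenvalue $\lambda \notin [0,\infty)$, which is impossible; hence $y \equiv 0$ on $\tree''$.

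The second step is to propagate the zero toward $v_-$ along the unique path joining $p$ to $v_-$, by induction on the finite number of edges between them. Let $w$ be the first vertex met on the way from $p$ to $v_-$ and let $e'$ be the incident edge lying on that path. If $p$ is interior to $e'$, then $y$ vanishes on the portion of $e'$ lying in $\tree''$, hence on all of $e'$ by uniqueness for the initial value problem associated with \eqref{eveqn}, so in particular $y(w)=0$. If instead $p$ is a vertex, then $y$ vanishes identically on every edge at $p$ other than $e'$, so those edges contribute nothing to \eqref{derivcond} at $p$; since $p \ne v_-$, the derivative condition then forces $\partial_\nu y_{e'}(p)=0$, and combined with $y(p)=0$ and uniqueness for \eqref{eveqn} this yields $y\equiv 0$ on $e'$, so again $y(w)=0$. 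As $w$ is strictly closer to $v_-$, the induction reduces to the base case $p=v_-$, where $\tree''$ is all of $\tree_e^+$ with $v_-$ removed and the Dirichlet cut is precisely the operator of \lemref{bridgelem}; there $y\equiv 0$ follows at once, contradicting nontriviality.

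I expect the main obstacle to be the bookkeeping at a branch vertex in the propagation step: one must correctly combine the vanishing of $y$ on all downstream edges (which annihilates their terms in \eqref{derivcond}) with the uniqueness theorem for the second-order equation \eqref{eveqn} in order to transfer the zero across the vertex and onto the upstream edge $e'$. Everything else is a faithful reuse of the self-adjointness and nonnegativity mechanism of \lemref{bridgelem}, and the finiteness of the combinatorial distance from $p$ to $v_-$ guarantees that the induction terminates.
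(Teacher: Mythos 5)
Your proof is correct and follows essentially the same route as the paper's: a Dirichlet cut at the zero yields a nonnegative self-adjoint operator on the downstream subtree, forcing $y$ to vanish there, after which continuity, the derivative condition \eqref{derivcond}, and ODE uniqueness propagate the zero upstream. Your explicit induction along the path to $v_-$ merely spells out the final propagation step that the paper compresses into one sentence.
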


\begin{proof}
Suppose $y^{\pm}(x_0) = 0$ for some $x_0 \in \tree _{e}^{\pm}$.  First notice that $y^{\pm}$ must then vanish identically
on the subtree $\tree _0$ consisting of points $x_1$ of $\tree _{e}^{\pm}$ with the property that paths from $x_1$ to $v$ must include
$x_0$.    Otherwise, a nonnegative self-adjoint  operator $\Delta + q$ could be obtained on $L^2(\tree _0)$ by using the
boundary condition $f(x_0) = 0$.
This operator would have a nontrivial square integrable eigenfunction, the restriction of $y^{\pm}$ to $\tree _0$, 
with the eigenvalue $\lambda \in \complex \setminus [0,\infty )$, which is impossible.

Since solutions of $-y'' + qy = \lambda y$ are identically zero on an edge $e$ if $y(x_0) = y'(x_0) = 0$ for some $x_0 \in e$,
we may assume $x_0$ is a vertex.  Since the function $y^{\pm}$ vanishes identically on $\tree _0$, the continuity and derivative conditions
at $x_0$ force $y$ to vanish on all the edges with $x_0$ as a vertex.  The function $y^{\pm}$ must now be identically zero on $\tree _{e}^{\pm}$,
contradicting the assumption that the function was nontrivial. 
\end{proof}

The structure of the elements of $\Ex _{e}^{\pm}$ is strongly constrained by the symmetries of $\tree _{e}^{\pm}$ 
combined with the fact that $\Ex _{e}^{\pm}$ is one dimensional.  A simple observation is the following.

\begin{lem}
Suppose $e_1 = (v_1,v_1s_m)$, $g \in \group $, and $e_2 = (v_2,v_2s_m) = (gv_1,gv_1s_m)$.  For $j=1,2$ let $y_j \in \Ex _{e_j}^{\pm}$
with $y_j(v_j) = 1$.  Then $gy_1 = y_2$.  
\end{lem}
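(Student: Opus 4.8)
The plan is to exploit the one-dimensionality of $\Ex_{e_2}^{\pm}$ established in \lemref{bridgelem}: it suffices to show that $gy_1$ lies in $\Ex_{e_2}^{\pm}$ and takes the value $1$ at $v_2$. Indeed, $gy_1$ and $y_2$ would then be two elements of a one-dimensional space agreeing at the single point $v_2$, where by \lemref{nozero} neither vanishes, which forces $gy_1 = y_2$.

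First I would record the geometric action of $g$. Since $ge_1 = (gv_1, gv_1 s_m) = e_2$, and $g$ acts as an isometry of the metric tree that permutes vertices and preserves edge types, it carries the component $\tree_{e_1}^{\pm}$ of $\tree_M \setminus v_{\mp}$ onto $\tree_{e_2}^{\pm}$, sending the distinguished vertex $v_1$ to $v_2$ and $v_1 s_m$ to $v_2 s_m$. Consequently $gy_1$ is a well-defined function on $\tree_{e_2}^{\pm}$, and the action identity $(gy_1)(gx) = y_1(x)$, evaluated at $x = v_1$, gives $(gy_1)(v_2) = y_1(v_1) = 1$ as required.

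The substance of the argument is verifying that $gy_1$ satisfies the conditions (i)--(iii) defining $\Ex_{e_2}^{\pm}$. For (i), I would invoke the defining feature of a quantum Cayley graph, namely that $\Delta + q$ commutes with the group action; this encodes the orbit-constancy of the edge lengths together with the edge-type invariance of $q$. Hence $-(gy_1)'' + q(gy_1) = g(-y_1'' + qy_1) = \lambda(gy_1)$ on each edge of $\tree_{e_2}^{\pm}$, so \eqref{eveqn} holds. Condition (ii) is immediate: because $g$ preserves edge lengths it acts by a measure-preserving homeomorphism, so $gy_1$ is continuous and square integrable on $\tree_{e_2}^{\pm}$ precisely because $y_1$ has these properties on $\tree_{e_1}^{\pm}$.

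The one point requiring care --- and the main obstacle --- is condition (iii), the derivative condition \eqref{derivcond} at every vertex of $\tree_{e_2}^{\pm}$ except possibly the distinguished one. Here I would argue that $g$, acting as a local isometry in outward-pointing coordinates, carries the set $\{e \sim w\}$ of edges incident at a vertex $w$ bijectively onto $\{e' \sim gw\}$ and intertwines the normal derivatives $\partial_\nu$. Thus \eqref{derivcond} holds for $gy_1$ at $gw$ exactly when it holds for $y_1$ at $w$. Since the vertices of $\tree_{e_2}^{\pm}$ other than $v_2$ are precisely the $g$-images of the vertices of $\tree_{e_1}^{\pm}$ other than $v_1$ --- the vertices at which $y_1$ is guaranteed to satisfy the derivative condition --- the condition transfers to all the required vertices of $\tree_{e_2}^{\pm}$. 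With (i)--(iii) established, the opening remark completes the proof.
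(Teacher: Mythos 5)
Your proposal is correct and follows essentially the same route as the paper: the paper's proof simply states that the action of $g$ is an isomorphism of $\tree_{e_1}^{\pm}$ onto $\tree_{e_2}^{\pm}$, so by one-dimensionality of $\Ex_{e_2}^{\pm}$ the function $gy_1$ is a scalar multiple of $y_2$, and agreement of the two functions at $v_2$ forces equality. Your verification of conditions (i)--(iii) is a careful unpacking of what the paper compresses into the single word ``isomorphism,'' but the logical skeleton is identical.
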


\begin{proof}
The action by $g$ is an isomorphism of $\tree _{e_1}^{\pm}$ and $\tree _{e_2}^{\pm}$.  Since $\Ex _{e_2}^{\pm}$ is one dimensional,
$gy_1$ is a scalar multiple of $y_2$.  These two functions agree at $v_2$, so are equal.
\end{proof}

For each vertex $v$ and integers $k$, left multiplication by the abelian subgroup of elements $vs_m^kv^{-1}$ acts on $\tree _M$.
These maps carry the edge $e(0) =  (v,vs_m)$ to the edges $e(k) = (vs_m^k,vs_m^{k+1})$.  
The key role of these group actions is related to the following geometric observation.

\begin{lem} \label{exhaust}
The trees $\tree _{e(k)}^+$ are nested, with $\tree _{e(k+1)}^+ \subset \tree _{e(k)}^+$. 
In addition, 
\[\tree _M = \bigcup_{k = 0}^{\infty} \tree _{e(-k)}^+ .\]
\end{lem}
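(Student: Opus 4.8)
The plan is to argue purely from the tree geometry of $\tree_M$, using the description of $\tree_e^{\pm}$ as the closure of the component of $\tree_M \setminus v_{\mp}$ on the $v_{\pm}$ side of the bridge $e$. Throughout I write $v_k = vs_m^k$ for the vertices along the $s_m$-axis through $v$, so that $e(k) = (v_k,v_{k+1})$ has $v_- = v_k$ and $v_+ = v_{k+1}$, and $\tree_{e(k)}^+$ is the closure of the component of $\tree_M \setminus v_k$ that contains $v_{k+1}$.

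For the nesting claim I would first note that $\tree_{e(k+1)}^+$ is a subtree of $\tree_M$ (the closure of a connected component together with its bridging edge), hence connected, that it contains $v_{k+1}$, and that it does not contain the point $v_k$. The last point holds because $v_k$ and $v_{k+2}$ lie in different components of $\tree_M \setminus v_{k+1}$ --- the geodesic between them runs $v_k \to v_{k+1} \to v_{k+2}$ through the deleted vertex --- so $v_k$ is absent from $\tree_{e(k+1)}^+$, which is built from the component containing $v_{k+2}$. A connected set avoiding the point $v_k$ must lie in a single component of $\tree_M \setminus v_k$; since $\tree_{e(k+1)}^+$ meets that complement at $v_{k+1}$, it sits inside the component containing $v_{k+1}$, whose closure is exactly $\tree_{e(k)}^+$. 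As $\tree_{e(k+1)}^+$ is already closed, $\tree_{e(k+1)}^+ \subseteq \tree_{e(k)}^+$ follows. (Shifting the index shows the exhausting family $\tree_{e(-k)}^+$ is an increasing union.)

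For the exhaustion, the inclusion $\bigcup_k \tree_{e(-k)}^+ \subseteq \tree_M$ is immediate, so the work is to place every point of $\tree_M$ in some $\tree_{e(-k)}^+$ with $k \ge 0$. The key tool is the nearest-point projection of a vertex $w$ onto the axis $A = \{v_j : j \in \Z\}$: in a tree this is a unique vertex $v_{j(w)} \in A$ through which every geodesic from $w$ to a vertex of $A$ must pass. I would then check that $w$ lies strictly on the $+$ side of $e(-k)$ precisely when $j(w) \ge -k+1$, by tracing the geodesic from $w$ to $v_{-k}$: it reaches $A$ at $v_{j(w)}$ and runs monotonically along $A$ to $v_{-k}$, passing through $v_{-k+1}$ exactly in that range of indices. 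Choosing $k = \max(0,\, 1 - j(w))$ gives $-k+1 = \min(1,\, j(w)) \le j(w)$ and hence $w \in \tree_{e(-k)}^+$. Finally, since $\tree_M$ is the union of its edges, each closed edge lies in $\tree_{e(-k)}^+$ once $k$ is large enough that both endpoints project to indices $\ge -k+2$, so the whole edge falls strictly on the $+$ side and every point of $\tree_M$ is covered.

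The main obstacle is not any single computation but getting the geometric bookkeeping exactly right: carefully distinguishing the boundary vertex $v_{-k}$, which belongs to $\tree_{e(-k)}^+$ only through the closure and carries none of its other incident edges, from the strict $+$ side, and confirming that the index criterion $j(w) \ge -k+1$ detects membership correctly, including the boundary case $j(w) = -k$. I would therefore isolate the projection statement as the one structural fact about trees on which the remaining steps rest.
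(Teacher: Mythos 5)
Your proof is correct, and it takes a genuinely different route from the paper's. The paper argues combinatorially with reduced words: it characterizes the vertex set of $\tree_{e(k)}^+$ explicitly as the words $vs_m^{k}s_mg$ with $s_mg$ reduced (together with $vs_m^k$), after which nesting is the observation that $s_m^2g$ is reduced whenever $s_mg$ is, and exhaustion comes from peeling the maximal leading power $s_m^{j}$ off the reduced form of $v^{-1}w$ and setting $k=j-1$. You instead work purely with the tree topology: nesting follows because $\tree_{e(k+1)}^+$ is a connected set missing the cut vertex $v_k$ and meeting the component of $v_{k+1}$, and exhaustion follows from the nearest-point projection of a vertex onto the axis $\{vs_m^j\}$, with the index criterion $j(w)\ge -k+1$ and the choice $k=\max(0,1-j(w))$. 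The two arguments encode the same fact -- your projection index $j(w)$ is exactly the paper's leading $s_m$-exponent of the reduced word $v^{-1}w$ -- but yours is more general (it applies to any bi-infinite geodesic in any tree, with no group structure needed) and makes the cut-vertex geometry explicit, including the boundary bookkeeping at $v_{-k}$ that the paper passes over silently; the paper's is shorter and stays inside the free-group formalism it has already set up. One small remark: the paper's exhaustion step, like yours, implicitly needs the nesting claim to handle vertices whose leading exponent $j$ is positive (so that $\tree_{e(j-1)}^+\subset\tree_{e(0)}^+$); your explicit $\max(0,1-j(w))$ handles this cleanly.
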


\begin{proof}
Other than $vs_m^k$, the vertices of the trees $\tree _{e(k)}^+$ are those elements of $\free _M$ which have a representation $vs_m^{k}s_mg$, 
where $s_mg$ is a reduced word in $\free _M$.  If $w = vs_m^{k+1}s_mg \in \tree _{e(k+1)}^+$ is a vertex with $s_mg$ reduced, then
$s_m^2g$ is reduced and $w = vs_m^{k}s_ms_mg \in \tree_{e(k)}^+$.  Thus the trees $\tree _{e(k)}^+$ are nested.

More generally, for any integer $k$, a word $w \in \free _M$ may be represented as 
$w = vs_m^{k}s_mg$ with $s_mg = s_m^{-k}v^{-1}w $.   First take a reduced representative $u$ for $v^{-1}w$.
Suppose $u$ begins on the left with a string $s_m^j$, followed by an element of $\genset$ different from $s_m^{\pm 1}$. 
Taking $k = j-1$ gives the desired form, and every vertex $w$ is in some $\tree _{e(k)}^+$.

\end{proof}

Suppose $y \in \Ex _{e(0)}^{+}$ satisfies $y(v) = 1$, and $z \in \Ex _{e(1)}^+$ satisfies $z(vs_m) = 1$.
Since $\tree _{e(1)}^+ \subset \tree _{e(0)}^+$, the restriction of $y$ to $\tree _{e(1)}^+ $ is an element of $\Ex_{e(1)}^{+}$.
Because $y$ is nonvanishing, there is a nonzero multiplier $\mu _{m}(\lambda ) \in \complex $ associated to each generator $s_m$
such that  $y = \mu _m(\lambda ) z$ on $\tree _{e(1)}^+ $.  In particular $\mu _m(\lambda ) = y(l_m)/y(0)$.

\begin{lem} \label{holomult}
The multipliers $\mu _m(\lambda )$ are holomorphic for $\lambda \in \complex \setminus [0,\infty )$, with
$\mu _m(\lambda ) =  \overline{\mu _m(\overline{\lambda })}$.
\end{lem}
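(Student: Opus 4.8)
The plan is to write $\mu_m(\lambda)$ as a ratio of two holomorphic, nowhere-vanishing endpoint values and then read off both assertions from the lemmas already in hand. Fix the edge $e(0) = (v,vs_m)$ used to define $\mu_m$. By construction the multiplier is the ratio of the values at the terminal and initial vertices of the normalized generator of $\Ex_{e(0)}^+$: if $y$ is any nontrivial element of $\Ex_{e(0)}^+$, then
\[ \mu_m(\lambda) = \frac{y(vs_m)}{y(v)} = \frac{y(l_m)}{y(0)}, \]
and since $\Ex_{e(0)}^+$ is one dimensional by \lemref{bridgelem}, this ratio is independent of the choice of $y$.

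First I would establish holomorphy. Around each fixed $\lambda_0 \in \complex \setminus [0,\infty)$, \lemref{hololem} supplies a basis $g(\lambda)$ of $\Ex_{e(0)}^+$ that is holomorphic on a disc centered at $\lambda_0$, and whose endpoint values $g(v,\lambda)$ and $g(vs_m,\lambda)$ are holomorphic in $\lambda$. Substituting $y = g(\lambda)$ into the ratio above gives, on that disc,
\[ \mu_m(\lambda) = \frac{g(vs_m,\lambda)}{g(v,\lambda)} . \]
By \lemref{nozero} every nontrivial element of $\Ex_{e(0)}^+$ is nowhere vanishing on $\tree_{e(0)}^+$, so $g(v,\lambda) \neq 0$ throughout the disc. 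Thus $\mu_m$ is locally a quotient of holomorphic functions with nonvanishing denominator, hence holomorphic near $\lambda_0$; since $\lambda_0$ was arbitrary, $\mu_m$ is holomorphic on all of $\complex \setminus [0,\infty)$.

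For the reality relation I would use the symmetry of the defining data under complex conjugation. Because $q$ is real, if $y$ solves $-y''+qy = \lambda y$ then $\overline{y}$ solves $-\overline{y}''+q\overline{y} = \overline{\lambda}\,\overline{y}$, and continuity, square integrability, and the derivative conditions \eqref{derivcond} are all preserved under conjugation. Hence conjugation carries $\Ex_{e(0)}^+$ at $\lambda$ onto $\Ex_{e(0)}^+$ at $\overline{\lambda}$, sending a function normalized by $y(v)=1$ to one normalized by $\overline{y}(v)=1$. Evaluating the ratio gives $\mu_m(\overline{\lambda}) = \overline{y(l_m)} = \overline{\mu_m(\lambda)}$, which is the asserted identity. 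Equivalently, since \lemref{hololem} allows $g$ to be chosen real valued for $\lambda \in (-\infty,0)$, the multiplier $\mu_m$ is real there, and the identity theorem on the connected domain $\complex \setminus [0,\infty)$ upgrades this to $\mu_m(\lambda) = \overline{\mu_m(\overline{\lambda})}$.

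Since every ingredient has been prepared by \lemref{bridgelem}, \lemref{hololem}, and \lemref{nozero}, I do not expect a genuine obstacle. The only point requiring care is the passage from the merely local holomorphic bases of \lemref{hololem} to a single globally holomorphic multiplier: this succeeds precisely because $\mu_m$ is a ratio of endpoint values and so is independent of the normalization of the local basis, which is what guarantees that the local representations agree on overlaps and leave no gluing ambiguity.
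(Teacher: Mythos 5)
Your proof is correct and follows essentially the same route as the paper: holomorphy via the holomorphic basis of \lemref{hololem} applied to the ratio $y(l_m)/y(0)$, and the conjugation identity via real values on $(-\infty,0)$ plus the identity theorem. Your explicit appeal to \lemref{nozero} for the nonvanishing denominator and the alternative direct conjugation-symmetry argument are small refinements of the same argument, not a different approach.
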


\begin{proof}
By \lemref{hololem} the formula $\mu _m(\lambda ) = y(l_m)/y(0)$ shows that $\mu _m(\lambda )$ is holomorphic
when $\lambda \in \complex \setminus [0,\infty )$.  If $\lambda \in (-\infty ,0)$ and $y$ is chosen real, then
$\mu (\lambda )$ is real.  The two functions $\mu _m(\lambda )$ and  $ \overline{\mu _m(\overline{\lambda })}$ are holomorphic 
and agree for $\lambda \in (-\infty ,0)$, so agree for all $\lambda \in \complex \setminus [0,\infty )$.
\end{proof}

Because the function $q$ is even on each edge, that is $q(l_m - x) = q(x)$, the same multipliers will arise
when comparing elements of  $\Ex_{e}^{-}$ if the edge directions are reversed by using the generators $s_m^{-1}$ of $\free _M$ instead of  $s_m$. 
These multipliers provide a global extension of functions in $\Ex _{e(0)}^{\pm}$.

\begin{lem}
Suppose $\tree ^+_{e(0)} \subset \tree ^+_{e(j)} \subset \tree ^+_{e(k)} $.  If $y_j \in  \Ex _{e(j)}^+$ with $y_j(v) = 1$, and
$y_k \in  \Ex _{e(k)}^+$ with $y_k(v) = 1$, then $y_j = y_k$ on $\tree _{e(j)}$.   Elements $y^{\pm}$ of  $\Ex _{e(0)}^{\pm}$ 
may be extended via the multipliers to functions defined on all of $\tree _M$. 
\end{lem}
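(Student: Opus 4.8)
The statement contains two assertions, and I would treat them in sequence. The first is a uniqueness-on-overlaps claim; the plan is to restrict the solution living on the larger subtree down to the smaller one, recognize the restriction as a member of the relevant one-dimensional space from \lemref{bridgelem}, and fix the resulting scalar using the shared normalization at $v$. The second is an existence claim; the plan is to run the first assertion along the increasing exhaustion of $\tree _M$ furnished by \lemref{exhaust}, gluing a compatible family of local solutions, with the multipliers $\mu _m(\lambda )$ supplying the explicit description of the gluing.

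For the first assertion, since $\tree _{e(j)}^+ \subset \tree _{e(k)}^+$ I would restrict $y_k$ to $\tree _{e(j)}^+$ and verify that $y_k|_{\tree _{e(j)}^+} \in \Ex _{e(j)}^+$. The restriction inherits \eqref{eveqn} on each edge, together with continuity and square integrability, so only the derivative condition \eqref{derivcond} needs attention. Here I would use the fact that the sole non-interior vertex of $\tree _{e(j)}^+$ is $v_- = vs_m^j$, which retains only the edge $e(j)$; every other vertex $w$ of $\tree _{e(j)}^+$ carries all $2M$ of its incident edges inside $\tree _{e(j)}^+$ and is therefore also interior in $\tree _{e(k)}^+$, so the derivative condition for the restriction at $w$ is exactly the one already satisfied by $y_k$. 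At $vs_m^j$ the definition of $\Ex _{e(j)}^+$ imposes no derivative condition, so membership holds. Then $\Ex _{e(j)}^+$ is one dimensional by \lemref{bridgelem} and $y_j$ spans it, so $y_k = c\,y_j$ on $\tree _{e(j)}^+$; evaluating at $v$ (which lies in $\tree _{e(0)}^+ \subset \tree _{e(j)}^+$ and where both values equal $1$, nonzero by \lemref{nozero}) forces $c = 1$.

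For the second assertion, fix $y \in \Ex _{e(0)}^+$ and, dividing by $y(v) \neq 0$, assume $y(v) = 1$. For each $k \ge 0$ let $y_{-k}$ be the unique element of $\Ex _{e(-k)}^+$ with $y_{-k}(v) = 1$, so $y_0 = y$. The subtrees increase, $\tree _{e(0)}^+ \subset \tree _{e(-1)}^+ \subset \cdots$, and applying the first assertion to the triple $\tree _{e(0)}^+ \subset \tree _{e(-k')}^+ \subset \tree _{e(-k)}^+$ for $k' \le k$ shows $y_{-k} = y_{-k'}$ on $\tree _{e(-k')}^+$. Hence the family $\{y_{-k}\}_{k\ge 0}$ is compatible on overlaps and determines a single function on $\bigcup_{k\ge 0}\tree _{e(-k)}^+$, which is all of $\tree _M$ by \lemref{exhaust}; it restricts to $y$ on $\tree _{e(0)}^+$, and since every vertex of $\tree _M$ is interior in $\tree _{e(-k)}^+$ for all large $k$, it satisfies \eqref{eveqn} and \eqref{derivcond} at every vertex (such an extension is generally not in $L^2(\tree _M)$, in keeping with its exponential-type nature). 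The multipliers give the explicit form: translation by $vs_m^{-1}$ carries the pair $(e(0),e(1))$ to $(e(-1),e(0))$, so the translation identity $gy_1 = y_2$ together with the defining relation $y = \mu _m(\lambda )z$ forces the extension to scale by $\mu _m(\lambda )$ from one level to the next, with $y(vs_m^n) = \mu _m(\lambda )^n$ along the spine. The case of $\Ex _{e(0)}^-$ is identical after reversing orientations with the generators $s_m^{-1}$, the evenness $q(l_m - x) = q(x)$ guaranteeing that the same multipliers occur.

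The main obstacle is the verification in the first assertion that restriction preserves membership in $\Ex _{e(j)}^+$: this is the one place where the geometry of the excised vertex $vs_m^j$ and the precise list of vertices at which \eqref{derivcond} is required must be tracked carefully. Once this is in hand, one-dimensionality from \lemref{bridgelem} does all the remaining work, and the exhaustion and gluing in the second assertion are routine.
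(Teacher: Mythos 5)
Your proof is correct and follows essentially the same route as the paper: restrict the function on the larger subtree, recognize the restriction as an element of the one-dimensional space $\Ex _{e(j)}^+$, pin it down using the common normalization at $v$ (nonvanishing from \lemref{nozero}), and glue along the exhaustion of \lemref{exhaust}. The paper's version is terser --- it applies \lemref{nozero} directly to the difference $y_k - y_j$ rather than solving for the scalar $c$ --- but your careful verification that restriction preserves the vertex conditions (only the excised vertex $vs_m^j$ needs attention, and no condition is imposed there) is precisely the content the paper leaves implicit.
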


\begin{proof}
The function $y_k$ restricts to an element of  $ \Ex _{e(j)}^+$.  Since nontrivial elements of  $ \Ex _{e(j)}^+$ never vanish, but 
$y_k(v) - y_j(v) = 0$, the difference is the zero element of   $ \Ex _{e(j)}^+$.
Since these extensions of  $\Ex _{e(0)}^+$ are consistent, elements $y^{\pm}$ of  $\Ex _{e(0)}^{\pm}$ extend via the multipliers to 
functions defined on all of the trees $\tree _{e(k)}^+$.   
\end{proof}

\begin{lem} \label{multsize}
For $\lambda \in \complex \setminus [0,\infty )$, the multipliers satisfy $|\mu _m(\lambda )| < 1$.
\end{lem}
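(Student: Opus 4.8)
The plan is to exploit the self-similar structure of $\tree_{e(0)}^+$ under the cyclic translations $g_k = vs_m^kv^{-1}$ and to read off the size of $\mu_m(\lambda)$ from a geometric series for the $L^2$ norm of a generating function. Fix $\lambda \in \complex \setminus [0,\infty)$ and let $y \in \Ex_{e(0)}^+$ be the normalized generator with $y(v) = 1$, so that $\mu_m(\lambda) = y(l_m)/y(0) = y(vs_m)$. By \lemref{exhaust} the trees $\tree_{e(k)}^+$ with $k \ge 0$ are nested and decreasing, and since any point of $\tree_{e(0)}^+$ lies at finite distance from $v$ it belongs to only finitely many of them; hence the blocks $B_k = \tree_{e(k)}^+ \setminus \tree_{e(k+1)}^+$ partition $\tree_{e(0)}^+$ up to a set of measure zero. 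The translation $g_k$ is an isometry of metric graphs (edge lengths are constant on orbits) carrying $\tree_{e(0)}^+$ onto $\tree_{e(k)}^+$ and $B_0$ onto $B_k$.

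First I would establish the one-step relation $z_k = \mu_m(\lambda)\,z_{k+1}$ on $\tree_{e(k+1)}^+$, where $z_k := g_ky$ is the generator of $\Ex_{e(k)}^+$ normalized by $z_k(vs_m^k) = 1$, so that $z_0 = y$. For $k = 0$ this is precisely the defining relation $y = \mu_m(\lambda)z_1$ of the multiplier on $\tree_{e(1)}^+$, together with $g_1y = z_1$ from the earlier observation that translation by a group element carries one normalized generator to another for edges of the same type $s_m$. Applying the isometry $g_k$, which commutes with $\Delta + q$ and therefore carries $\Ex_{e(1)}^+$ onto $\Ex_{e(k+1)}^+$, transports this relation outward to $z_k = \mu_m(\lambda)z_{k+1}$ on $\tree_{e(k+1)}^+$; the same multiplier recurs because every edge $e(k)$ has type $s_m$. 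Telescoping then gives $y = \mu_m(\lambda)^k z_k$ on $\tree_{e(k)}^+$ for every $k$.

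With this scaling the block norms are immediate. Since $g_k$ is measure-preserving and carries $B_0$ onto $B_k$ with $|z_k(g_kx)| = |y(x)|$, we have $\|z_k\|_{L^2(B_k)} = \|y\|_{L^2(B_0)} =: c$, and hence on the $k$-th block
\[\int_{B_k}|y|^2 = |\mu_m(\lambda)|^{2k}\int_{B_k}|z_k|^2 = c^2\,|\mu_m(\lambda)|^{2k}.\]
Summing over the partition gives
\[\|y\|_{L^2(\tree_{e(0)}^+)}^2 = \sum_{k=0}^{\infty}\int_{B_k}|y|^2 = c^2\sum_{k=0}^{\infty}|\mu_m(\lambda)|^{2k}.\]
By \lemref{nozero} the generator $y$ never vanishes, and $B_0$ contains the whole edge $e(0)$ of positive length $l_m$, so $c > 0$. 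The left-hand side is finite because $y$ is square integrable on $\tree_{e(0)}^+$ by the defining properties of $\Ex_{e(0)}^+$, so the geometric series must converge. This happens exactly when $|\mu_m(\lambda)| < 1$, and the strict inequality is automatic, since the value $1$ already produces divergence.

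I expect the main obstacle to be the careful bookkeeping in the inductive scaling relation, in particular verifying that conjugation by $g_k$ really carries the normalized generator of $\Ex_{e(1)}^+$ to that of $\Ex_{e(k+1)}^+$ and reproduces the same multiplier rather than a $k$-dependent one; the identity $g_kg_1 = g_{k+1}$ and the fact that all the $e(k)$ share the type $s_m$ are what make this work. Everything else, namely the tiling of $\tree_{e(0)}^+$ by the blocks $B_k$, the isometry invariance of the $L^2$ norm under the group action on functions, and the positivity of $c$, is routine once the group action is used consistently.
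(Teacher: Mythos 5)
Your proposal is correct and follows essentially the same route as the paper: both arguments iterate the defining relation $y=\mu_m(\lambda)z$ along the chain of edges $e(k)=(vs_m^k,vs_m^{k+1})$ to get $|y|$ scaling like $|\mu_m(\lambda)|^{k}$, then combine square integrability of $y$ with the non-vanishing guaranteed by \lemref{nozero} to force the geometric series $\sum_k|\mu_m(\lambda)|^{2k}$ to converge. The only cosmetic difference is that you tile all of $\tree_{e(0)}^+$ by the blocks $B_k=\tree_{e(k)}^+\setminus\tree_{e(k+1)}^+$, whereas the paper simply sums over the spine edges $[vs_m^k,vs_m^{k+1}]$, which already suffices.
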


\begin{proof}
Recall that $y$ is nowhere vanishing, so 
\[\int_{v}^{vs_m} |y|^2 \not= 0.\] 
A nontrivial element $y$ of $\Ex _e^{\pm}$ is square integrable on $\tree _e^+$, so in particular
\[\sum_{k = 0}^{\infty} \int_{vs_m^k}^{vs_m^{k+1}} |y|^2 = \int_{v}^{vs_m} |y|^2 \sum_{k = 0}^{\infty} |\mu (\lambda )^{2k}| < \infty ,\]
and $|\mu ^{\pm}(\lambda )| < 1$.
\end{proof}

\section{Analysis of the multipliers}

On each edge $[0,l_m]$ the space of solutions to the eigenvalue equation \eqref{eveqn} has a basis $C_m(x,\lambda ),S_m(x,\lambda )$  
satisfying $C_m(0,\lambda ) = 1 = S_m'(0,\lambda )$ and $S_m(0,\lambda ) = 0 = C_m'(0,\lambda )$.
These solutions satisfy the Wronskian identity
\begin{equation} \label{Wronski}
C_m(x,\lambda )S_m'(x,\lambda ) - S_m'(x, \lambda )C_m(x,\lambda ) = 1.
\end{equation}
If $q = 0 $ and $\omega = \sqrt{\lambda }$, these functions are simply $C_m(x,\lambda ) = \cos (\omega x)$, $S_m(x,\lambda ) = \sin (\omega x)/\omega  $.

If $S_m(l_m,\lambda ) = 0$ then $\lambda $ is an eigenvalue for a classical Sturm-Liouville problem, implying $\lambda \in [0,\infty ) $.
For $\lambda \in \complex \setminus [0,\infty )$ there is a unique solution of \eqref{eveqn} with boundary values 
$y_m(0,\lambda ) = \alpha $, $y_m(l_m,\lambda ) = \beta $ given by
\begin{equation} \label{interp}
y_m(x,\lambda ) = \alpha C_m(x,\lambda ) + \frac{\beta - \alpha C_m(l_m,\lambda )}{S_m(l_m,\lambda )} S_m(x,\lambda ).
\end{equation}
 
Because $q_m(x) = q_m(l_m-x)$ for each edge, there is an identity 
\[C_m(l_m - x,\lambda ) = S_m'(l_m,\lambda )C_m(x,\lambda )   - C_m'(l_m,\lambda )S_m(x,\lambda )\]
since both sides of the equation are solutions of \eqref{eveqn} with the same initial data at $x = l_m$.
Setting $x = 0$ leads to the identity
\begin{equation} \label{evenid}
C_m(l_m, \lambda ) = S_m'(l_m,\lambda ) . 
\end{equation}

In addition to the coordinates originally given to the edges of $\tree _M$, it will be helpful to also consider 
local coordinates for $\tree _e^+$ which identify edges with the same intervals $[0,l_k]$, but with the local coordinate
increasing with distance from a given vertex $v$.
Since $q$ is assumed even on each edge, the operators $\Delta +q$ are unchanged
despite the coordinate change.

\subsection{Multipliers and the resolvent}

The next results show that edges in the same orbit have the same multipliers.

\begin{thm} \label{onetype1}
Assume $e = (v,vs_m)$, $\lambda \in \complex \setminus [0,\infty )$, and $y \in  \Ex _{e}^+$ with $y(v) = 1$.   
Suppose the edge  $\epsilon $ in $\tree _e^+$ is in the same edge orbit as $e$,
with the local coordinate for $\epsilon $ increasing with the distance from $v$.
Using the identifications of $e$ and $\epsilon $ with $[0,l_m]$, the restriction $y_1$
of $y$ to $\epsilon $ satisfies 
\[\frac{y_1(l_m)}{y_1(0)}   =  \frac{y(l_m)}{y(0)} = \mu _m(\lambda ).\] 
\end{thm}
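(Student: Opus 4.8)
The plan is to realize the arbitrary same-orbit edge $\epsilon$ as the image $ge$ of $e$ under a single element $g \in \free _M$, and then to combine the equivariance of the spaces $\Ex $ under the $\free _M$ action with the one-dimensionality of $\Ex _\epsilon ^+$. Write $\epsilon = (w,ws_m)$, where the orientation hypothesis guarantees that $ws_m$ is farther from $v$ than $w$, so that the coordinate on $\epsilon$ increasing with distance from $v$ identifies $w$ with $0$ and $ws_m$ with $l_m$. Setting $g = wv^{-1}$, one checks that $ge = (gv,gvs_m) = (w,ws_m) = \epsilon$, so $g$ carries $e$ onto $\epsilon$ preserving orientation and carries $\tree _e^+$ isomorphically onto $\tree _\epsilon ^+$; since $\epsilon \in \tree _e^+$ we also have $\tree _\epsilon ^+ \subset \tree _e^+$.

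First I would observe that the restriction of $y$ to $\tree _\epsilon ^+$ lies in $\Ex _\epsilon ^+$: it satisfies \eqref{eveqn} on each edge, is continuous and square integrable, and obeys the derivative conditions \eqref{derivcond} at every vertex of $\tree _\epsilon ^+$ except possibly $w$, all inherited from the corresponding properties of $y$ on the larger tree $\tree _e^+$. Next I would invoke the equivariance lemma following \lemref{nozero}: if $y^\epsilon \in \Ex _\epsilon ^+$ is normalized by $y^\epsilon (w) = 1$, then $gy = y^\epsilon$, because $y(v) = 1$ and $g$ sends $v$ to $w$. Since $ws_m = g(vs_m)$, the action on function components $gf_{ge}(gx) = f_e(x)$ evaluated at the terminal endpoint gives
\[ y^\epsilon (ws_m) = (gy)(ws_m) = y(vs_m) = y(l_m) = \mu _m(\lambda ), \]
using $y(0) = y(v) = 1$ together with the defining formula $\mu _m(\lambda ) = y(l_m)/y(0)$.

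Finally, since $\Ex _\epsilon ^+$ is one dimensional by \lemref{bridgelem}, the restriction of $y$ to $\tree _\epsilon ^+$ must be a scalar multiple of $y^\epsilon $; comparing values at $w$ shows the scalar is $y(w)$, which is nonzero by \lemref{nozero}. In the identification of $\epsilon$ with $[0,l_m]$ this reads $y_1(0) = y(w)$ and $y_1(l_m) = y(w)\,y^\epsilon (ws_m) = y(w)\mu _m(\lambda )$, whence $y_1(l_m)/y_1(0) = \mu _m(\lambda )$, as claimed.

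I expect the main obstacle to be the orientation and coordinate bookkeeping rather than any analytic difficulty. One must confirm that the hypothesis that the local coordinate for $\epsilon$ increases with distance from $v$ forces the orientation $\epsilon = (w,ws_m)$ with $ws_m$ the far endpoint---equivalently, that the reduced word $v^{-1}w$ does not end in $s_m^{-1}$---so that $g = wv^{-1}$ matches initial and terminal vertices correctly and $\tree _\epsilon ^+ \subset \tree _e^+$. The reduced-word description of the vertices of the nested trees in \lemref{exhaust} is what makes this precise; once the orientations are aligned, the equivariance lemma and one-dimensionality do the rest.
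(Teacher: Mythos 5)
Your argument correctly handles one of the two cases, but it omits the other, and the step you flag as "to be confirmed" is in fact false. A directed edge in the orbit of $e=(v,vs_m)$ has the form $(u,us_m)$, and for such an edge lying in $\tree_e^+$ the terminal vertex $us_m$ may be either farther from or \emph{closer to} $v$ than $u$. For example, with $e=(v,vs_1)$ in $\tree_2$, the edge joining $vs_1s_2$ to $vs_1s_2s_1^{-1}$ is the directed edge $(vs_1s_2s_1^{-1},\,vs_1s_2)$: it is of type $1$, lies in $\tree_e^+$, but its terminal vertex is the one nearest $v$. Nothing in the hypotheses excludes such edges --- the hypothesis about the local coordinate increasing with distance from $v$ is a choice of parametrization of $\epsilon$, not a constraint on the combinatorial orientation of the directed edge --- and the theorem must cover them: the proof of the multiplier system (\thmref{muform}) uses precisely the fact that the two type-$k$ edges incident at $vs_m$, one of each orientation, carry the same restriction of $y$ as functions of distance from $vs_m$.

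In the bad case, writing $\epsilon=(ws_m^{-1},w)$ with $w$ the vertex of $\epsilon$ nearest $v$, your translation argument breaks down: $\tree_\epsilon^+$ is the component containing the terminal vertex $w$ after deleting $ws_m^{-1}$, so it contains $v$ and is \emph{not} a subtree of $\tree_e^+$; the subtree of $\tree_e^+$ hanging off $\epsilon$ is $\tree_\epsilon^-$, and no left translation carries $\Ex_e^+$ onto it with the right orientation. The paper handles this by a separate reflection argument: the undirected trees $\tree_{(w,ws_m)}^+$ and $\tree_{(ws_m^{-1},w)}^-$ are isomorphic via interchanging $s_m$ and $s_m^{-1}$, the induced involution on the one-dimensional space $\Ex$ is a scalar, and since it fixes the nonzero value $y(w)$ it is the identity; this reduces the second case to the first. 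Note that this step is exactly where the standing assumption $q_e(l_e-x)=q_e(x)$ is needed (the isomorphism reverses the direction of every type-$m$ edge), and your proof never invokes the evenness of $q$ --- a sign that the case requiring it has been lost. The portion of your argument covering $\epsilon=(w,ws_m)$ with $ws_m$ the far vertex coincides with the paper's first case and is fine.
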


\begin{proof}
If $w$ is the vertex of the edge $\epsilon $ closest to $v$ (see \figref{F2Agraph}), then $\epsilon $ has one of the forms $(w,ws_m)$ or 
$(ws_m^{-1},w)$.  In the first case, where $wv^{-1}e = \epsilon $, the tree $\tree _{\epsilon }^+$ is a subtree of $\tree _e^+$, 
and translation by $wv^{-1}$ carries $\Ex _e^+$ to $\Ex _{\epsilon}^+$.  As functions on $[0,l_m]$, $y_1$ is a nonzero multiple of $y$
since $\Ex _e^+$ and $\Ex _{\epsilon}^+$ are one dimensional.  

\begin{figure}
\begin{center}
\includegraphics[height=75mm, width=100mm]{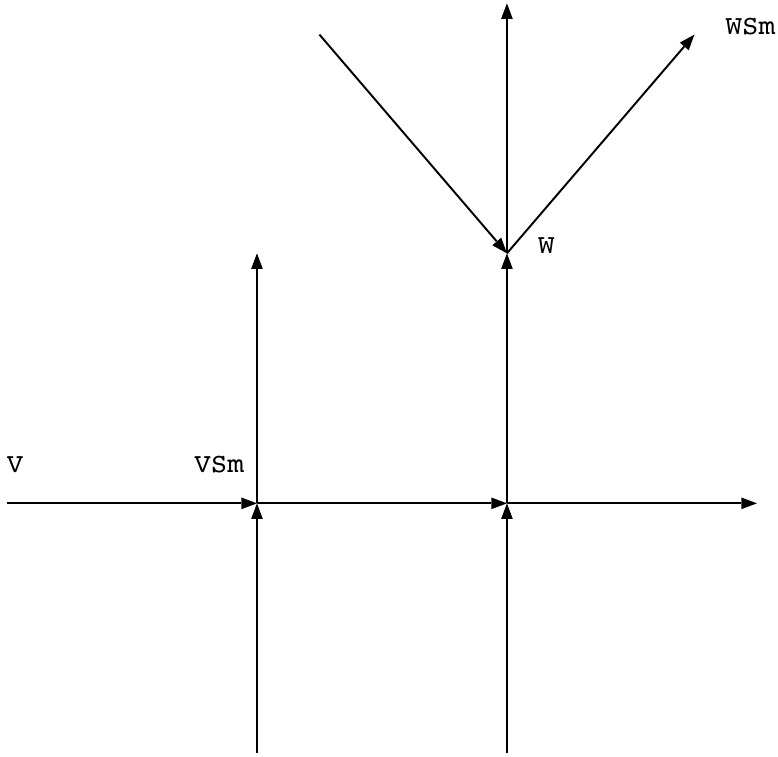}
\caption{Edges in a common orbit} \label{F2Agraph}
\end{center}
\end{figure}

In the second case, when $\epsilon = (ws_m^{-1},w)$, the tree $\tree _{\epsilon}^+$ is generally not a subtree of $\tree _e^+$, 
but $\tree _{(ws_m^{-1},w) }^-$ is.  A different argument will reduce the second case to the first.  
As undirected graphs there are isomorphisms between the trees $\tree _{(w,ws_m)}^+$ and $\tree _{(ws_m^{-1},w)}^-$.
One such is obtained by interchanging the roles of $s_m$ and $s_m^{-1}$ 
There is a corresponding involution of $\Ex _{e}^+$ obtained by 
interchanging function values on the isomorphic trees. 
Since $\Ex _{\epsilon }^+$ is one dimensional, this involution is given by a constant factor.
The nonzero value of $y$ at the vertex $w$ is fixed by the involution, so 
the tree interchange must leave the functions fixed.   
\end{proof}

\begin{cor} \label{onetype2}
Assume $e = (v,vs_m)$, $\lambda \in \complex \setminus [0,\infty )$, and $y \in  \Ex _{e}^+$ with $y(v) = 1$. 
Suppose that for $j = 1,2$, the edges  $\epsilon _j = (w_j,w_js_k)$ in $\tree _e^+$ are in the same edge orbit,
with the local coordinates for $\epsilon _j$ increasing with the distance from $v$.
The restrictions $y_j$ of $y$ to $\epsilon _j$ satisfy 
\[\frac{y_1(l_m)}{y_1(0)}   =  \frac{y_2(l_m)}{y_2(0)} = \mu _k(\lambda ).\] 
\end{cor}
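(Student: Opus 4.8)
The plan is to handle the two edges $\epsilon_1$ and $\epsilon_2$ separately and to recognize each of them, on its own, as a base edge of the orbit determined by $s_k$. The corollary is then immediate from the orbit-invariance of the multiplier $\mu_k(\lambda)$: since all edges of type $s_k$ form a single $\free_M$-orbit, the ratio computed from any one of them is the same number $\mu_k(\lambda)$. I do not expect to need the two-case (orientation-reversing) argument used in the proof of \thmref{onetype1}, because the hypothesis that the local coordinate on $\epsilon_j$ increases with distance from $v$ already pins down the forward orientation.

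First I would record the geometry. Since $\epsilon_j = (w_j, w_j s_k)$ with its local coordinate increasing away from $v$, the initial vertex $w_j$ is the endpoint of $\epsilon_j$ closer to $v$, so $\tree_{\epsilon_j}^+$ is the subtree hanging off $\epsilon_j$ on the far side, and $\tree_{\epsilon_j}^+ \subset \tree_e^+$. Next I would check that the restriction $y_j$ of $y$ to $\tree_{\epsilon_j}^+$ lies in $\Ex_{\epsilon_j}^+$: it solves \eqref{eveqn} on each edge, is continuous and square integrable, and satisfies \eqref{derivcond} at every vertex of $\tree_{\epsilon_j}^+$ other than $w_j$, since all such vertices are interior vertices of $\tree_e^+$ (distinct from $v$), where $y$ already obeys \eqref{derivcond}. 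By \lemref{nozero}, $y$ is nowhere vanishing on $\tree_e^+$, so $y_j$ is a nontrivial element of $\Ex_{\epsilon_j}^+$, and in particular $y_j(0) = y(w_j) \ne 0$, making the stated ratio well defined.

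With $y_j$ in hand I would invoke the translation lemma (the map carrying $\Ex_{e_1}^{\pm}$ to $\Ex_{e_2}^{\pm}$ for edges $e_1, e_2$ of a common type). Translation by a group element $g_j$ carrying a fixed reference edge of type $s_k$ to $\epsilon_j$ sends the normalized generator of the reference space to the normalized generator of $\Ex_{\epsilon_j}^+$ and preserves function values along corresponding edges. Since the ratio $y_j(l_k)/y_j(0)$, computed using the identification of $\epsilon_j$ with $[0,l_k]$, is insensitive to normalization, it equals the ratio for the reference edge, namely $\mu_k(\lambda)$. Applying this once to $\epsilon_1$ and once to $\epsilon_2$ yields $y_1(l_k)/y_1(0) = y_2(l_k)/y_2(0) = \mu_k(\lambda)$.

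The step demanding the most care is the last one: I must be sure the multiplier is genuinely independent of the choice of base edge within the orbit of $s_k$, and that the ``distance from $v$'' convention orienting $\epsilon_j$ matches the forward orientation used to define $\mu_k(\lambda)$. Both facts are already contained in the development leading to \thmref{onetype1}; once they are in place, the corollary is the single-edge conclusion of that theorem applied, in turn, to the two base edges $\epsilon_1$ and $\epsilon_2$.
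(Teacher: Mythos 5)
There is a genuine gap at the very first step, where you assert that because the local coordinate on $\epsilon_j$ increases with distance from $v$, the initial vertex $w_j$ must be the endpoint of $\epsilon_j$ nearer to $v$. That does not follow: every edge of type $k$ is canonically written $(u,us_k)$ for a unique $u$, so the notation $\epsilon_j=(w_j,w_js_k)$ carries no information about which endpoint is closer to $v$, and the local coordinate is merely the analytic re-parametrization (introduced just before the resolvent discussion) that places $0$ at the nearer endpoint regardless of the edge's group-theoretic direction. When $w_js_k$ is the nearer endpoint, i.e. $\epsilon_j=(us_k^{-1},u)$ with $u$ closest to $v$, the subtree of $\tree_e^+$ hanging off $\epsilon_j$ is $\tree_{\epsilon_j}^-$, not $\tree_{\epsilon_j}^+$; the restriction of $y$ lies in $\Ex_{\epsilon_j}^-$, and the local coordinate runs opposite to the direction used to define $\mu_k$. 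Identifying the resulting ratio with $\mu_k(\lambda)$ in that situation is exactly the content of the second, orientation-reversing case in the proof of \thmref{onetype1}; it relies on the evenness of $q$ (via the involution interchanging $s_k$ and $s_k^{-1}$, equivalently the identity \eqref{evenid}) and is not a matter of convention. Your proof never invokes the evenness of $q$, which is a sign that the hard case has been defined away. Note also that the corollary is applied in \thmref{muform} precisely to the pair $(vs_m,vs_ms_k)$ and $(vs_ms_k^{-1},vs_m)$ --- one edge of each orientation --- so the reversed case cannot be excluded.

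The remaining ingredients of your outline (the restriction lands in a one-dimensional space, nonvanishing by \lemref{nozero} makes the ratio well defined, translation invariance of the normalized generator) are sound and match the paper's toolkit. The paper closes the gap differently: it introduces the type-$k$ edge $e_1=(vs_k^{-1},v)$, observes that $\tree_e^+\subset\tree_{e_1}^+$ so that $y$ restricts to a scalar multiple of the generator of $\Ex_{e_1}^+$, and then applies \thmref{onetype1} with base edge $e_1$ and $\epsilon=\epsilon_j$; since that theorem already contains the two-case orientation analysis, nothing further is needed. Either repair --- invoking \thmref{onetype1} in full as the paper does, or reproducing its involution argument for the reversed edges --- would complete your proof.
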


\begin{proof}
If $e_1 = (vs_k^{-1},v)$, then $\tree _e^+ $ is a subtree of $\tree _{e_1}^+$.  
Since $e_1$ and $\epsilon _j$ lie in the same edge orbit, the previous theorem may now be applied.
\end{proof}

As a consequence of \thmref{onetype1} and \corref{onetype2} the functions $y \in \Ex _e^+$ have the following description.

\begin{thm} \label{Yval}
Assume $\lambda \in \complex \setminus [0,\infty )$, $e = (v,vs_m)$ and $y_+ \in \Ex _e^+$ with $y_+(v) = 1$.
Suppose $w$ is a vertex in $\tree _e^+$, and the path from $v$ to $w$ is given by the
reduced word $s_ms_{k(1)}^{\pm 1} \dots s_{k(n)}^{\pm 1}$.  Then 
\begin{equation} \label{yvform}
y_+(w) = \mu _m\mu _{k(1)} \dots \mu _{k(n)} .
\end{equation}
Using \eqref{interp}, the vertex values of $y_+$ can be interpolated to the edges.
\end{thm}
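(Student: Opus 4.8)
The plan is to induct on the position $j$ along the path, letting each successive edge contribute one multiplier by way of \corref{onetype2}. The reduced word $s_ms_{k(1)}^{\pm 1}\dots s_{k(n)}^{\pm 1}$ names the unique path from $v$ to $w$ in the tree $\tree _M$; for $0 \le j \le n$ write $w_j$ for the vertex reached by the prefix $s_ms_{k(1)}^{\pm 1}\dots s_{k(j)}^{\pm 1}$, so that $w_0 = vs_m$ and $w_n = w$. Because the word is reduced there is no cancellation between adjacent symbols, so every prefix is itself reduced and the combinatorial distance from $v$ to $w_j$ equals $j+1$, increasing by one at each step. Hence the path never backtracks, and the edge joining $w_{j-1}$ to $w_j$ is always crossed in the direction of increasing distance from $v$; this is exactly the hypothesis under which \corref{onetype2} computes a crossing factor.

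For the base case $j = 0$ the relevant edge is $e = (v,vs_m)$ itself, and the definition of the multiplier gives $y_+(w_0) = y_+(vs_m) = \mu _m y_+(v) = \mu _m$, since $\mu _m(\lambda ) = y(l_m)/y(0)$ with the endpoint $0$ identified with $v$ and $l_m$ with $vs_m$. Assume inductively that $y_+(w_{j-1}) = \mu _m\mu _{k(1)}\dots \mu _{k(j-1)}$. The edge joining $w_{j-1}$ and $w_j = w_{j-1}s_{k(j)}^{\pm 1}$ lies in $\tree _e^+$, belongs to the edge orbit of $s_{k(j)}$, and is crossed away from $v$, so its local coordinate may be taken to increase with distance from $v$ with the value $0$ at $w_{j-1}$. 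By \thmref{onetype1} and \corref{onetype2} its crossing factor is $y_+(w_j)/y_+(w_{j-1}) = \mu _{k(j)}(\lambda )$, whence $y_+(w_j) = \mu _m\mu _{k(1)}\dots \mu _{k(j)}$. Taking $j = n$ gives \eqref{yvform}. The closing remark is immediate: on each edge $y_+$ satisfies \eqref{eveqn}, so its restriction there is the unique solution with the prescribed endpoint values and is recovered from those two vertex values by the interpolation formula \eqref{interp}.

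The only point that is not mere bookkeeping is the orientation of the traversed edges. When $s_{k(j)}^{\pm 1} = s_{k(j)}$ the edge is $(w_{j-1},w_{j-1}s_{k(j)})$ in its intrinsic direction, the away-from-$v$ coordinate agrees with the intrinsic one, and \corref{onetype2} applies directly. When $s_{k(j)}^{\pm 1} = s_{k(j)}^{-1}$ the edge is $(w_{j-1}s_{k(j)}^{-1},w_{j-1})$, whose intrinsic direction points toward $v$; the away-from-$v$ coordinate is then the reversed one, and the equality of the crossing factor with $\mu _{k(j)}$ is precisely the content of the second, involution-based case of \thmref{onetype1} (reached, as in the proof of \corref{onetype2}, by passing to $e_1 = (vs_{k(j)}^{-1},v)$ with $\tree _e^+ \subset \tree _{e_1}^+$). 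I expect this reversed-orientation case to be the sole genuine obstacle, and it is exactly where the evenness of $q$ on each edge is indispensable; everything else is an induction over the letters of the reduced word.
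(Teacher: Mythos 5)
Your proof is correct and follows exactly the route the paper intends: the paper states \thmref{Yval} without a written proof, presenting it as an immediate consequence of \thmref{onetype1} and \corref{onetype2}, and your induction along the reduced word (with the careful treatment of the two edge orientations via the involution case of \thmref{onetype1}) is precisely the bookkeeping the paper leaves implicit. Nothing is missing.
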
 

Because the functions in $\Ex _{e}^+$ are continuous, the multipliers $\mu _k(\lambda )$ are simply the value at
$l_k$ of the solution $y_k$ in $\Ex _{e}^+$ with initial value $1$ at $x = 0$ on edges of type $k$.  That is, 
\begin{equation} \label{multder}
\mu _k(\lambda ) = C_k(l_k,\lambda ) + y_k'(0)S_k(l_k,\lambda ).
\end{equation}

\thmref{Yval} may also be used to describe the functions $y_- \in \Ex _e^-$.  The functions $y_+,y_-$ can be used to construct
the resolvent $R(\lambda ) = [\Delta + q - \lambda ]^{-1}$ on $\complex \setminus [0,\infty) $.  
If the (nonvanishing) functions $y_-$ and $y_+$ were linearly dependent on $e$,
then there would is a nonzero constant $c$ such that $y_-(x) = cy_+(x)$ for $x \in e$, and the function
\[\Bigl \{ \begin{matrix} y_-(x),& x \in \tree _e^-, \cr 
cy_+(x),& x \in \tree _e^+, 
\end{matrix} \Bigr \}\]
would be a square integrable eigenfunction for $\Delta + q$.
Consequently, the functions $y_-$ and $y_+$ must be linearly independent on $e$ if $\lambda \in \complex \setminus [0,\infty )$.

In particular for each $\lambda \in \complex \setminus [0,\infty )$ the Wronskian $W_k(\lambda ) = y_+y_-' - y_+'y_-$  for edges of each type $k$ 
is nonzero, and independent of $x$.  By using \eqref{multder} the Wronskian $W_k(\lambda )$ can be expressed in terms of the multipliers.
Consider evaluation of $W_k(\lambda )$ at $x= l_k$.  Compared to $y_+$, which satisfies \eqref{multder}, with $y_+(0,\lambda ) = 1$,
the function $y_-$ would have the edge direction reversed.  This function has $y_-(l_k) = 1$, and because of the reversed edge direction,
\[y_-(x,\lambda ) = C_k(l_k - x,\lambda )  - y_-'(l_k)S_k(l_k-x,\lambda ),\] so that
\[\mu _k(\lambda ) = y_-(0,\lambda ) = C_k(l_k,\lambda ) - y_-'(l_k)S_k(l_k,\lambda ).\]
Evaluation at $x= l_k$ gives
\[ W_k(\lambda ) = (y_+y_-' - y_+'y_-)(l_k) \]
\[ = \mu _k (\lambda ) \frac{C_k(l_k,\lambda ) - \mu _k(\lambda )}{S_k(l_k,\lambda )}
- [C_k'(l_k,\lambda )+ \frac{\mu _k(\lambda ) - C_k(l_k,\lambda )}{S_k(l_k,\lambda )} S_k'(l_k,\lambda )] \]
and the identities \eqref{Wronski} and \eqref{evenid} give the simplification 
\begin{equation} \label{Weval}
W_k(\lambda ) =  \frac{1 - \mu _k^2(\lambda )}{S_k(l_k,\lambda )}. 
 \end{equation}

For $\lambda  \in \complex \setminus [0,\infty )$ and $t \in [0, l_e]$, define the kernel
\begin{equation} \label{rkernel}
R_e(x,t,\lambda  ) = 
\begin{matrix}
y_-(x,\lambda  )y_+(t,\lambda  )/W_k(\lambda ), \quad 0 \le x \le t \le l_e, \cr 
y_-(t, \lambda  )y_+(x, \lambda  )/W_k(\lambda ), \quad 0 \le t \le x \le l_e.  
\end{matrix} 
\end{equation}
The following observations show that the values of $x$ can be extended to the whole of $\tree _M$.

If $f_e$ is supported in the interior of $e$ the function 
$$h_e(x) = \int_0^{l_e} R_e(x,t,\lambda  )f_e(t) \, dt $$
satisfies $[\Delta +q-\lambda ]h_e = f_e$, and in neighborhoods of $0$ and $l_e$ the function $h_e$  
satisfies \eqref{eveqn}.
The kernel $R_e(x,t,\lambda )$ and the function $h_e$ can then be extended to $\tree _M$ using the values of 
$y_-$ and $y_+$ on $\tree _e^{\pm}$.  The extended function $h_e$ 
is square integrable on $\tree _M$ and satisfies the 
vertex conditions, so $h_e$ agrees with the image of the resolvent acting on $f_e$, that is
$h_e = R(\lambda  )f_e$.  Since the linear span of functions $f_e$ is dense in  $L^2(\tree _M)$,
and the resolvent is a bounded operator for  $\lambda  \in \complex \setminus [0,\infty )$, the discussion above implies the next result.

\begin{thm} \label{theres} 
For $\lambda  \in \complex \setminus [0,\infty )$,
\[R(\lambda  )f = \sum_e \int_0^{l_e} R_e(x,t,\lambda  )f_e(t) \, dt , \quad f \in L^2(\tree _M), \] 
the sum converging in $L^2({\tree _M})$. 
\end{thm}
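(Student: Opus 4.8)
The plan is to prove the identity first for a source $f$ supported in the interior of a single edge and then to pass to general $f \in L^2(\tree _M)$ by density together with the boundedness of the resolvent. Fix an edge $e = (v_-,v_+)$ of type $k$ and a function $f_e$ supported in the interior of $e$, viewed as an element of $L^2(\tree _M)$ by extension by zero. On $e \cong [0,l_e]$ I set $h_e(x) = \int_0^{l_e} R_e(x,t,\lambda )f_e(t)\,dt$, with $R_e$ the kernel \eqref{rkernel}, so that $h_e(x) = W_k^{-1}y_+(x)\int_0^x y_-f_e\,dt + W_k^{-1}y_-(x)\int_x^{l_e}y_+f_e\,dt$, and I extend $h_e$ to all of $\tree _M$ by the recipe already indicated: on $\tree _e^-$ put $h_e = c_-\,y_-$ and on $\tree _e^+$ put $h_e = c_+\,y_+$, where $c_- = W_k^{-1}\int_0^{l_e} y_+f_e\,dt$ and $c_+ = W_k^{-1}\int_0^{l_e} y_-f_e\,dt$ are the coefficients forced by the two regimes of \eqref{rkernel}. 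The goal is to show that the glued function lies in the domain of $\Delta + q$ and satisfies $(\Delta + q - \lambda )h_e = f_e$, whence $h_e = R(\lambda )f_e$ by the invertibility of $\Delta + q - \lambda $.

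The first step is routine one-dimensional Sturm--Liouville theory on $e$. Since $y_+$ and $y_-$ are linearly independent solutions of \eqref{eveqn} on $e$ with constant nonzero Wronskian $W_k(\lambda )$ given by \eqref{Weval}, the kernel \eqref{rkernel} is the classical Green's function for $-D^2 + q - \lambda $ on $[0,l_e]$; variation of parameters gives $(\Delta + q - \lambda )h_e = f_e$ on $e$, the normalization by $W_k$ producing the correct unit jump of $h_e'$ across $t = x$. On every edge of $\tree _e^{\pm}$ other than $e$ the extended function equals $c_{\pm}y_{\pm}$ and hence solves the homogeneous equation \eqref{eveqn}, consistent with $f_e$ vanishing there.

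The substantive step is to check that $h_e$ genuinely belongs to the domain of $\Delta + q$, that is, that it is square integrable and meets the continuity and derivative conditions \eqref{derivcond} at every vertex. Square integrability is immediate: $y_{\pm}$ are nontrivial elements of $\Ex _e^{\pm}$ and hence lie in $L^2(\tree _e^{\pm})$ by condition (ii) in their definition, their geometric decay being quantified by $|\mu _m(\lambda )| < 1$ in \lemref{multsize}, so $c_{\pm}y_{\pm} \in L^2$. At every vertex of $\tree _e^{\pm}$ other than the bridge endpoints $v_{\pm}$ the conditions are inherited directly from the defining properties of $\Ex _e^{\pm}$. The crux is the matching at $v_-$ and $v_+$, and this is the step I expect to be the main obstacle. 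Here I would differentiate the variation-of-parameters expression, observing that the two boundary terms cancel, to obtain $h_e(0) = c_-y_-(0)$, $h_e'(0) = c_-y_-'(0)$ and $h_e(l_e) = c_+y_+(l_e)$, $h_e'(l_e) = c_+y_+'(l_e)$. Thus the Cauchy data of $h_e$ at the $v_-$ end of $e$ coincide with those of $c_-y_-$, and since all other edges at $v_-$ carry $h_e = c_-y_-$, the continuity and the condition \eqref{derivcond} at $v_-$ reduce to $c_-\sum_{e'\sim v_-}\partial _{\nu}y_-(v_-) = 0$, which holds because $v_-$ is an admissible vertex for $y_- \in \Ex _e^-$; the symmetric computation at $v_+$ uses $c_+y_+ \in \Ex _e^+$. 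This is precisely where the Wronskian normalization, the one-dimensionality of $\Ex _e^{\pm}$, and the even-potential identity \eqref{evenid} underlying \eqref{Weval} must conspire to make the two half-tree extensions glue into a single admissible function. It follows that $h_e \in \domain (\Delta + q)$ with $(\Delta + q - \lambda )h_e = f_e$, so $h_e = R(\lambda )f_e$.

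Finally I would globalize. Writing $f = \sum_e f_e$, with $f_e$ the restriction of $f$ to edge $e$ extended by zero, the partial sums converge to $f$ in $L^2(\tree _M)$; since $R(\lambda )$ is a bounded operator for $\lambda \in \complex \setminus [0,\infty )$, applying it commutes with this limit, yielding $R(\lambda )f = \sum_e R(\lambda )f_e = \sum_e \int_0^{l_e} R_e(x,t,\lambda )f_e(t)\,dt$ with convergence in $L^2(\tree _M)$. The only loose end is that the single-edge identity was first derived for interior-supported $f_e$; the endpoint computation above in fact holds verbatim for arbitrary $f_e \in L^2[0,l_e]$, since the cancellation of boundary terms needs no support hypothesis, so no separate density argument on the individual edge is required.
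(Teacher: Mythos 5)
Your proposal is correct and follows essentially the same route as the paper: a single-edge source, the variation-of-parameters kernel on that edge, extension to the half-trees by multiples of $y_{\pm}$, verification of square integrability and the vertex conditions, and then density plus boundedness of $R(\lambda)$ to pass to general $f$. You in fact supply the endpoint Cauchy-data computation and the matching at $v_{\pm}$ that the paper only asserts, and your observation that the interior-support hypothesis on $f_e$ can be dropped is a harmless refinement of the same argument.
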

 
 \subsection{Equations for the multipliers}
 
 \begin{thm} \label{muform}
 For $\lambda \in \complex \setminus [0,\infty )$ and $m = 1,\dots ,M$, the multipliers $\mu _m(\lambda )$ satisfy the system of equations
 \begin{equation} \label{recurse3}
\frac{\mu _m^2(\lambda ) - 1 }{S_m(l_m,\lambda )\mu _m(\lambda )}  - 2\sum_{k = 1}^M \frac{\mu _k(\lambda ) - C_k(l_k,\lambda )}{S_k(l_k,\lambda )} =0 . 
 \end{equation}
 \end{thm}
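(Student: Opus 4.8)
The plan is to recognize \eqref{recurse3} as nothing more than the single vertex condition \eqref{derivcond} evaluated at the \emph{far} endpoint $v_+ = vs_m$ of the bridge $e$, rewritten in terms of the multipliers. Fix $e = (v,vs_m)$ and let $y \in \Ex_e^+$ with $y(v) = 1$. Since every nonzero element of $\Ex_e^+$ satisfies the derivative condition at each vertex of $\tree_e^+$ other than possibly $v_-=v$, the condition \eqref{derivcond} holds at $v_+$, and the whole argument consists in computing $\sum_{\epsilon \sim v_+} \partial_\nu y_\epsilon(v_+) = 0$ explicitly.

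First I would catalogue the $2M$ edges incident on $v_+$. One is the bridge $e$ itself, pointing back toward $v$; the remaining $2M-1$ lie in $\tree_e^+$ and split into the $M$ outgoing edges $(v_+, v_+ s_k)$, $k = 1,\dots,M$, and the $M-1$ incoming edges $(v_+ s_k^{-1}, v_+)$ with $k \ne m$ (the incoming edge for $k=m$ is $e$). On each forward edge of type $k$, parametrized by a coordinate increasing away from $v_+$, \thmref{Yval} supplies the two endpoint values $y(v_+) = \mu_m$ and $\mu_m\mu_k$, since the reduced path from $v$ to the far vertex is $s_m s_k^{\pm 1}$ (reduced because $k \ne m$ in the incoming case); hence the ratio of far to near value is $\mu_k$ in every case. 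Inserting these endpoint values into the interpolation formula \eqref{interp} and differentiating at the near end yields
\[\partial_\nu y_\epsilon(v_+) = \mu_m \frac{\mu_k - C_k(l_k,\lambda)}{S_k(l_k,\lambda)}\]
for each forward type-$k$ edge. The hypothesis that $q_k$ is even is precisely what lets me use the same solutions $C_k, S_k$ on the reversed incoming edges, so incoming and outgoing type-$k$ edges contribute identically.

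The decisive combinatorial observation is the asymmetry in the count: type $m$ occurs only once among the forward edges (its incoming copy is the bridge $e$), while each $k \ne m$ occurs twice. Summing the forward contributions therefore gives $2\mu_m \sum_{k=1}^M (\mu_k - C_k(l_k))/S_k(l_k)$ with one extra copy of the $k=m$ term subtracted off. For the bridge $e$ itself the outward derivative at $v_+$ is $-y'(l_m)$, which \eqref{interp} expresses through $\mu_m$, $C_m(l_m)$, $S_m(l_m)$, $C_m'(l_m)$ and $S_m'(l_m)$. Substituting everything into \eqref{derivcond} and collecting the terms carrying type-$m$ data, I would simplify using the even-potential identity \eqref{evenid}, $C_m(l_m) = S_m'(l_m)$, together with the Wronskian relation \eqref{Wronski} in the form $C_m'(l_m)S_m(l_m) = C_m(l_m)^2 - 1$; these cancel all the $C_m$, $C_m'$, $S_m'$ terms and collapse the type-$m$ contribution to $(1-\mu_m^2)/S_m(l_m)$. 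What remains is
\[\frac{1 - \mu_m^2}{S_m(l_m,\lambda)} + 2\mu_m \sum_{k=1}^M \frac{\mu_k - C_k(l_k,\lambda)}{S_k(l_k,\lambda)} = 0,\]
and dividing by $-\mu_m$ — legitimate since $\mu_m \ne 0$ and $S_k(l_k,\lambda) \ne 0$ off $[0,\infty)$ — produces \eqref{recurse3}.

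I expect the main obstacle to be the bookkeeping rather than any single hard estimate: correctly orienting each of the $2M$ incident edges, tracking the sign of each outward derivative, and above all not overlooking that the type-$m$ edge is counted once while every other type is counted twice. The second delicate point is justifying that the reversed incoming edges may be handled with the unreflected solutions $C_k, S_k$; this is exactly where the evenness of $q$ enters, and I would invoke the orientation-reversal remark already established for $\Ex_e^-$. Once these points are settled, the algebraic collapse via \eqref{evenid} and \eqref{Wronski} is routine.
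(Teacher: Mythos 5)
Your proposal is correct and is essentially the paper's own argument: both impose the Kirchhoff condition at $v_+ = vs_m$, use the key count that the type-$m$ direction contributes once while each $k \ne m$ contributes twice (justified by \corref{onetype2} and the evenness of $q$ for the reversed edges), express the outward derivatives through \eqref{interp}/\eqref{multder}, and collapse the type-$m$ terms with \eqref{Wronski} and \eqref{evenid}. The only difference is presentational — the paper normalizes $y(vs_m)=1$ and closes the system via the proportionality of the restriction to $e$ with the next type-$m$ restriction, whereas you normalize $y(v)=1$ and sum outward derivatives directly — and the two computations are algebraically identical.
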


\begin{proof}

Begin with an edge $e = (v,vs_m)$.   In addition to $e$, the vertex $vs_m$ has $2M-1$ other incident edges.
One is the type $m$ edge $(vs_m,vs_m^2)$, while the others have one of the type $k$ forms
$(vs_m,vs_ms_k)$ or $(vs_ms_k^{-1},vs_m)$, where $k \in \{ 1,\dots ,M \} \setminus \{ m \} $.  
Let $ y \in \Ex _{e}^+$ satisfy $y(vs_m) = 1$, and let $y_k$, respectively $y_m$ denote the restriction of $y$
to the subtrees with root vertex $vs_m$ and initial edges of type $k$, respectively $m$.
As a consequence of \corref{onetype2}, for a fixed value of $k$ the two functions $y_k$ agree as functions of the distance from $vs_m$
on the two edges of type $k$ incident on $vs_m$.

Let $z_m$ denote the restriction of $y$ to the edge $e$.  The 
continuity and derivative vertex conditions at $vs_m$ relate $z_m$ to the restrictions $y_k,y_m$.
Using local edge coordinates which identify $vs_m$ with $l_m$ for the edge $e$, and which identify $vs_m$ with $0$ for the
other incident edges, the initial data for $z_m$ at $vs_m$ is
\[z_m(l_m,\lambda ) = 1, \quad z_m'(l_m,\lambda ) = y_m'(0,\lambda ) + 2\sum_{k \not= m} y_k'(0,\lambda ). \]
The function $z_m(x,\lambda )$ may be written as
\[z_m(x,\lambda ) = C_m(l_m - x, \lambda ) - [y_m'(0,\lambda ) + 2\sum_{k \not= m} y_k'(0,\lambda )]S_m(l_m-x,\lambda ).\]
Evaluation at $x = 0$ gives
\[z_m(0,\lambda ) = C_m(l_m,\lambda ) - [y_m'(0,\lambda ) + 2\sum_{k \not= m} y_k'(0,\lambda )]S_m(l_m,\lambda ).\]
\[z_m'(0,\lambda ) = -C_m'(l_m,\lambda ) + [y_m'(0,\lambda ) + 2\sum_{k \not= m} y_k'(0,\lambda )]S_m'(l_m,\lambda ).\]

As noted above, on $[0,l_m]$ the function $z_m(x,\lambda)$ is a scalar multiple of $y_m(x,\lambda )$,
so that $z_m(x,\lambda )/z_m(0,\lambda ) = y_m(x,\lambda )$.  In particular,
\[y_m'(0,\lambda ) = \frac{z_m'(0,\lambda )}{z_m(0,\lambda )} \]
\[= \frac{-C_m'(l_m,\lambda ) + [y_m'(0,\lambda ) + 2\sum_{k \not= m} y_k'(0,\lambda )]S_m'(l_m,\lambda )}
{C_m(l_m,\lambda ) - [y_m'(0,\lambda ) + 2\sum_{k \not= m} y_k'(0,\lambda )]S_m(l_m,\lambda )}.\]
This can be rewritten as 
\begin{equation} \label{recurse}
 C_m(l_m,\lambda )y_m'(0,\lambda ) + C_m'(l_m,\lambda ) 
 \end{equation}
\[ = [y_m'(0,\lambda ) + 2\sum_{k \not= m} y_k'(0,\lambda )][S_m(l_m,\lambda )y_m'(0,\lambda ) + S_m'(l_m,\lambda )] . \]
 
Using \eqref{multder} to substitute for $y_k'(0)$ in \eqref{recurse} gives
\begin{equation} 
 C_m(l_m,\lambda )\frac{\mu _m(\lambda ) - C_m(l_m,\lambda )}{S_m(l_m,\lambda )} + C_m'(l_m,\lambda ) 
 \end{equation}
 \[ = [\frac{\mu _m(\lambda ) - C_m(l_m,\lambda )}{S_m(l_m,\lambda )} + 2\sum_{k \not= m} \frac{\mu _k(\lambda ) - C_k(l_k,\lambda )}{S_k(l_k,\lambda )} ] \]
\[ \times [\mu _m(\lambda ) - C_m(l_m,\lambda ) + S_m'(l_m,\lambda )] .\]
With the help of the identities \eqref{Wronski} and \eqref{evenid}, these equations can be rewritten as \eqref{recurse3}.

\end{proof}

The solutions $\mu _1(\lambda ),\dots ,\mu _M(\lambda )$ of \eqref{recurse3} coming from the resolvent of $\Delta +q$
can be recognized by a square summability condition. 

\begin{thm} \label{intmult}
Suppose $\lambda \in \complex \setminus [0,\infty )$, $e = (v,vs_m)$, and $w$ denotes a vertex in $\tree _e^+$.
Assume that $\mu _1(\lambda ),\dots ,\mu _m(\lambda )$ satisfy \eqref{recurse3}.
Define a function $y_+ : \tree _e^+ \to \complex $ by taking $y_+(v) = 1$, defining $y_+(w)$ by \eqref{yvform}, and 
interpolating the vertex values of $y_+$ to the edges using \eqref{interp}.

The function $y_+$ is square integrable on $\tree _e^+$ if and only if 
\begin{equation} \label{summcond0}
\sum_{w \in \tree _e^+} |y_+(w)|^2 < \infty .
\end{equation}

If $y_-:\tree _e^- \to \complex $ is defined similarly, and \eqref{summcond0} is satisfied, then the formula
\eqref{rkernel} gives the resolvent of $\Delta +q$ as in \thmref{theres}.
\end{thm}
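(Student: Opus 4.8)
The plan is to treat the two assertions separately: first the $L^2$ criterion \eqref{summcond0}, then the identification of \eqref{rkernel} with the resolvent.

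For the square-integrability criterion I would compute the $L^2$ norm of $y_+$ edge by edge. Fix an edge $\epsilon$ of type $k$ in $\tree _e^+$, oriented so that its local coordinate increases with distance from $v$, and let $w_-$ be its vertex nearest $v$. By the definition \eqref{yvform} of the vertex values, $y_+(w_+) = \mu _k(\lambda )y_+(w_-)$ (recall $S_k(l_k,\lambda )\ne 0$ for $\lambda \in \complex \setminus [0,\infty )$, so \eqref{interp} is valid), and hence \eqref{interp} gives $y_+ = y_+(w_-)\phi _k$ on $\epsilon $, where $\phi _k(\cdot ,\lambda )$ solves \eqref{eveqn} with $\phi _k(0,\lambda )=1$ and $\phi _k(l_k,\lambda )=\mu _k(\lambda )$. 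Thus $\int _\epsilon |y_+|^2 = |y_+(w_-)|^2 c_k(\lambda )$, where $c_k(\lambda ) = \int _0^{l_k}|\phi _k(x,\lambda )|^2\,dx$ depends only on the edge type; since $\phi _k$ is continuous with $\phi _k(0,\lambda )=1$, one has $0<c_k(\lambda )<\infty $. Summing over edges and grouping by near vertex gives $\int _{\tree _e^+}|y_+|^2 = \sum _w |y_+(w)|^2 C(w)$, with $C(w)$ the sum of the $c_k$ over the edges leaving $w$ away from $v$. Each vertex has at least one such outgoing edge, and each type occurs at most twice among them, so $\min _k c_k \le C(w)\le 2\sum _k c_k$ uniformly in $w$. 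The resulting two-sided bound $\min _k c_k\,\sum _w|y_+(w)|^2 \le \int _{\tree _e^+}|y_+|^2 \le 2\bigl(\sum _k c_k\bigr)\sum _w|y_+(w)|^2$ yields the stated equivalence.

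For the resolvent assertion I would show that the constructed $y_+$ is the essentially unique element of $\Ex _e^+$ furnished by \lemref{bridgelem}, and similarly for $y_-$. By construction $y_+$ solves \eqref{eveqn} on each edge, since \eqref{interp} produces solutions, and it is continuous at the vertices because the interpolated edge functions share the prescribed value at each common endpoint. Granting \eqref{summcond0} for $y_+$ and its analogue for $y_-$, the first part gives $y_+\in L^2(\tree _e^+)$ and $y_-\in L^2(\tree _e^-)$. The crux --- and the step I expect to demand the most care --- is verifying the derivative conditions \eqref{derivcond} at every vertex of $\tree _e^+$ other than $v$, without circularity, since \eqref{recurse3} was originally derived in \thmref{muform} assuming $y$ already lay in $\Ex _e^+$. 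Here the value-ratio structure \eqref{yvform} is imposed directly, so every type-$k$ edge oriented away from $v$ carries the same multiplier $\mu _k$ by definition rather than by \corref{onetype2}. Fix a vertex $w\ne v$, let $m'$ be the type of its edge toward $v$, and read the outward derivatives from the edge restrictions $y_+(w)\phi _k$ via \eqref{multder}, namely $\phi _k'(0)=(\mu _k-C_k(l_k,\lambda ))/S_k(l_k,\lambda )$. The local configuration at $w$ --- one incoming type-$m'$ edge, one outgoing type-$m'$ edge, and a pair of outgoing type-$k$ edges for each $k\ne m'$ --- is identical to that at $vs_m$ in \thmref{muform}; after factoring the common nonzero value $y_+(w)$, the identical algebra (using \eqref{Wronski} and the even-potential identity \eqref{evenid}) reduces \eqref{derivcond} at $w$ to the $m'$-th equation of the system \eqref{recurse3}. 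Since all $M$ equations hold by hypothesis, the derivative condition is satisfied at every such $w$, whatever the type $m'$ of its parent edge.

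Conditions (i)--(iii) then hold, so $y_+\in \Ex _e^+$; being nontrivial ($y_+(v)=1$) it spans the one-dimensional space of \lemref{bridgelem}, and likewise $y_-$ spans $\Ex _e^-$. Because the kernel \eqref{rkernel} is invariant under rescaling of $y_+$ and of $y_-$ --- any factor cancels against the Wronskian $W_k$ in the denominator, cf.\ \eqref{Weval} --- these $y_\pm $ produce precisely the kernel used in the discussion preceding \thmref{theres}. That discussion, verifying $[\Delta +q-\lambda ]h_e=f_e$ together with square integrability and the vertex conditions for $h_e=R(\lambda )f_e$, then applies verbatim and identifies \eqref{rkernel} with the resolvent, completing the proof.
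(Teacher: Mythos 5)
Your proposal is correct and follows essentially the same route as the paper: the $L^2$ criterion is obtained by noting each edge integral is a type-dependent constant times the squared near-vertex value, and the resolvent claim is obtained by running the argument of \thmref{muform} in reverse to verify the vertex conditions, so that $y_\pm$ span $\Ex_e^\pm$ and the discussion preceding \thmref{theres} applies. You simply supply more detail (the explicit constants $c_k$, the two-sided bounds, and the scaling invariance of the kernel) than the paper's terser write-up.
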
 

\begin{proof}
The solutions of \eqref{recurse3} satisfy $\mu _m(\lambda ) \not= 0$.
For $j = 1,2$, two edges $(w_j,w_js_m) \in \tree _e^+$ have vertex values satisfying $y_+(w_js_m)/y_+(w_j) = \mu _m(\lambda )$,
so the interpolated edge values $y_j$ given by \eqref{interp} satisfy
\[y_2(0) y_1(x) = y_1(0)y_2(x), \quad 0 \le x \le l_m.\]
As a result,
\[\int_0^{l_m} |y_2(x)|^2 \ dx = \frac{|y_2(0)|^2}{|y_1(0)|^2}\int_0^{l_m} |y_1(x)|^2 \ dx ,\]
so by comparing edges of type $m$ with one of the edges incident on $vs_m$ we see that 
$y_+$ is square integrable on $\tree _e^+$ if and only if \eqref{summcond0} holds.

Running the argument of \thmref{muform} in reverse shows that the continuity and derivative conditions \eqref{derivcond} hold 
at the vertices of $\tree _e^+$ except possibly at $v$.  If \eqref{summcond0} holds, then $y_+ \in \Ex _e^+$ and the claims
about the resolvent formula follow.
\end{proof}

The equations \eqref{recurse3} have implications for the decay of $\mu _m(\lambda )$.

\begin{thm}
For $ 0 < \sigma < \pi$, let $\Omega $ denote the set of $\lambda $ with $| \arg (\lambda ) - \pi | \le \sigma $ 
and $|\lambda | \ge 1$.  For $\lambda \in \Omega $,
\begin{equation} \label{asympt2}
\lim_{|\lambda | \to \infty }  |\mu _m(\lambda )| e^{|\Im (\sqrt{\lambda })|l_m}  =  \frac{1}{2M}. 
\end{equation}

\end{thm}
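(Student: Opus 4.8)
The plan is to substitute the classical large-$|\lambda|$ asymptotics of the solutions $C_m,S_m$ into the multiplier system \eqref{recurse3} and read off the leading behaviour of $\mu_m$ by a dominant-balance argument. Write $\omega = \sqrt{\lambda}$ for the principal branch. For $\lambda \in \Omega$ we have $\arg\omega = \tfrac12\arg\lambda \in [\tfrac{\pi-\sigma}{2},\tfrac{\pi+\sigma}{2}] \subset (0,\pi)$, so $\Im\omega = |\omega|\sin(\arg\omega) \ge |\lambda|^{1/2}\cos(\sigma/2)$; thus $|\Im\sqrt{\lambda}| = \Im\omega \to \infty$ proportionally to $|\omega|$, and this exponential separation is what makes the estimates below uniform on $\Omega$. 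First I would record the standard uniform asymptotics (from the Volterra integral equations for $C_m,S_m$ with bounded $q$),
\[ C_m(l_m,\lambda) = \cos(\omega l_m) + O\!\bigl(|\omega|^{-1}e^{|\Im\omega|l_m}\bigr), \qquad S_m(l_m,\lambda) = \frac{\sin(\omega l_m)}{\omega} + O\!\bigl(|\omega|^{-2}e^{|\Im\omega|l_m}\bigr). \]
Since $\Im\omega \to +\infty$ one has $\cot(\omega l_m) \to -i$, hence $C_m(l_m,\lambda)/S_m(l_m,\lambda) = \omega\cot(\omega l_m)(1+o(1)) \to -i\omega$, a limit independent of the length $l_m$, while $|C_m(l_m,\lambda)|$ grows at the exponential rate $e^{|\Im\sqrt{\lambda}|\,l_m}$.

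Next I would determine $\mu_m$. By \lemref{multsize} we start from $|\mu_m(\lambda)| < 1$. Feeding the asymptotics into the right-hand side of \eqref{recurse3} gives
\[ 2\sum_{k=1}^M \frac{\mu_k(\lambda)-C_k(l_k,\lambda)}{S_k(l_k,\lambda)} = 2Mi\omega\,\bigl(1+o(1)\bigr), \]
the cross terms $\mu_k/S_k$ being negligible because $|S_k|$ grows like $|\omega|^{-1}e^{|\Im\omega|l_k}$ while $\mu_k$ stays bounded. Since the left-hand side $(\mu_m^2-1)/(S_m\mu_m)$ remains bounded whenever $\mu_m$ stays away from $0$ (as $S_m\to\infty$), matching it against a quantity blowing up like $\omega$ forces $\mu_m(\lambda)\to 0$ for every $m$. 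With $\mu_m\to0$ the left-hand side equals $-1/(S_m\mu_m)\,(1+o(1))$, so the balance reads $-1/(S_m\mu_m) = 2Mi\omega\,(1+o(1))$; combining this with $1/S_m \sim (-i\omega)/C_m$ from the ratio above yields the clean intermediate relation
\[ 2M\,\mu_m(\lambda)\,C_m(l_m,\lambda) \longrightarrow 1 . \]
Taking absolute values and substituting the large-$\lambda$ magnitude of $C_m(l_m,\lambda)$ into $2M|\mu_m|\,|C_m| \to 1$ then produces the limit \eqref{asympt2}.

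The main obstacle is not the formal balance but its rigorous justification, uniformly over the unbounded sector $\Omega$ and simultaneously across the coupled nonlinear system. The only a priori input is $|\mu_m|<1$, so the argument must be bootstrapped: one first upgrades this to $\mu_m = o(1)$, then to the sharp rate, all the while controlling how the $o(1)$ error in each $\mu_k$ propagates through the sum in \eqref{recurse3} into the equation governing $\mu_m$. Making this work requires quantitative, uniform remainder bounds in the $C_m,S_m$ asymptotics, used together with the exponential separation $\Im\omega \ge c|\omega|$ to dominate the coupling terms; converting the heuristic `$\sim$' relations into genuine limits valid throughout $\Omega$ is where the real effort lies.
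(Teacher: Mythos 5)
Your proof is essentially the paper's own: both insert the standard large-$|\lambda|$ asymptotics of $C_m,S_m$ into \eqref{recurse3}, use $|\mu_k(\lambda)|\le 1$ together with the exponential growth of $|\sqrt{\lambda}\,S_k(l_k,\lambda)|$ on the sector $\Omega$ to discard the $\mu_k/S_k$ terms, and arrive at the same intermediate limit $1/\bigl(\sqrt{\lambda}\,S_m(l_m,\lambda)\mu_m(\lambda)\bigr)\to -2iM$. The only difference is cosmetic --- you route the final step through $|C_m(l_m,\lambda)|$ rather than $|\sqrt{\lambda}\,S_m(l_m,\lambda)|$ --- and you inherit the paper's own bookkeeping of the constant: since $|\sqrt{\lambda}\,S_m(l_m,\lambda)|\,e^{-|\Im\sqrt{\lambda}|\,l_m}\to 1/2$ (and likewise $|C_m(l_m,\lambda)|\,e^{-|\Im\sqrt{\lambda}|\,l_m}\to 1/2$), the intermediate limit actually gives $1/M$ rather than the stated $1/(2M)$, a discrepancy common to both write-ups that is worth rechecking.
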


\begin{proof}
Rewrite \eqref{recurse3} as 
\begin{equation} \label{recurse4}
\frac{\mu _m (\lambda )}{\sqrt{\lambda } S_m(\lambda, l_m)} - 
\frac{1}{\sqrt{\lambda }S_m(\lambda , l_m)\mu _m(\lambda )}  = 2\sum_{k = 1}^M \frac{\mu _k(\lambda ) - C_k(\lambda ,l_k)}{\sqrt{\lambda }S_k(\lambda ,l_k)}  . 
\end{equation}

Take $\Im (\sqrt{\lambda }) > 0$ and recall that $|\mu _k(\lambda )| \le 1$.
The functions $C_k(\lambda ,l_k), S_k(\lambda ,l_k)$ satisfy the estimates \cite[p. 13]{PT}
\[ |C_k(\lambda , l_k) - \cos (\sqrt{\lambda } l_k)| \le \frac{C}{|\sqrt{\lambda }| }\exp(|\Im (\sqrt{\lambda })|l_k), \]
\[ |S_k(\lambda , l_k) - \frac{\sin (\sqrt{\lambda } l_k)}{\sqrt{\lambda }} | \le \frac{C}{|\lambda |}\exp(|\Im (\sqrt{\lambda })l_k),\]
while
\[\cos (\sqrt{\lambda }x) = \frac{1}{2} e^{-i \sqrt{\lambda } x} (1 + e^{2i  \sqrt{\lambda}x} ), \quad
\sin (\sqrt{\lambda }x) = \frac{i}{2}e^{-i \sqrt{\lambda } x}(1 - e^{2i  \sqrt{\lambda}x} ).\]
For $\lambda \in \Omega $, taking $|\lambda | \to \infty $ in \eqref{recurse4} gives 
\[\lim_{|\lambda | \to \infty }  \frac{1}{\sqrt{\lambda } S_m(l_m)\mu _m(\lambda )}  =  -2iM,\]
which implies \eqref{asympt2}.
\end{proof}

For each $\lambda \in \complex $ the equations \eqref{recurse3} are a system of polynomial equations 
$P_m(\xi _1,\dots , \xi _M,\lambda ) = 0$ for $m = 1,\dots ,M$ which are satisfied by the multipliers $\mu _1,\dots ,\mu _M$.
The independence of the equations and local structure of the solutions may be determined by
computing the gradients $\nabla P_m$ with respect to $\xi _1,\dots ,\xi _M$, with $\lambda $
treated as a parameter.  Recall from \lemref{multsize} that the multipliers satisfy $|\mu _m(\lambda )| < 1$ if 
$\lambda \in \complex \setminus [0,\infty )$.

\begin{thm} \label{grads}
Suppose $\lambda \in \complex$, $S_m(\lambda, l_m) \not= 0$, and $\xi _m(\lambda )^2 + 1 \not = 0$
for $m=1,\dots, M$.  Then the complex gradients $\nabla P_m$ are linearly independent if 
\begin{equation} \label{singset}
\sum_{m=1}^M \frac{\xi _m^2}{\xi _m^2 + 1} \not= 1/2.
\end{equation}
\end{thm}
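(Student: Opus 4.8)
The plan is to form the $M\times M$ Jacobian $J=(\partial P_m/\partial\xi_j)$ evaluated at the multipliers and to show that $\det J$ factors as a nonvanishing diagonal determinant times the scalar $1-2\sum_m \xi_m^2/(\xi_m^2+1)$. Since this scalar vanishes exactly when $\sum_m \xi_m^2/(\xi_m^2+1)=1/2$, the hypothesis \eqref{singset} will force $\det J\neq0$, hence linear independence of the gradients.

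First I would pass to the rational form of the system. Writing $F_m$ for the left-hand side of \eqref{recurse3}, each $P_m$ agrees with $S_m\xi_m\,F_m$, and the factor $S_m\xi_m$ is nonzero because $S_m\neq0$ by hypothesis and the multipliers satisfy $\xi_m\neq0$ (they were introduced as nonzero ratios $y(l_m)/y(0)$). Consequently $\nabla P_m=S_m\xi_m\,\nabla F_m$ at a common zero, so it suffices to show the $\nabla F_m$ are linearly independent. Differentiating $F_m=(\xi_m^2-1)/(S_m\xi_m)-2\sum_k(\xi_k-C_k)/S_k$ is routine: the first summand depends only on $\xi_m$ and contributes $(\xi_m^2+1)/(S_m\xi_m^2)$ on the diagonal, while the coupling sum contributes $-2/S_j$ in every column $j$. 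Thus $J_{mm}=(1-\xi_m^2)/(S_m\xi_m^2)$ and $J_{mj}=-2/S_j$ for $j\neq m$.

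The decisive observation is that $J$ is diagonal-plus-rank-one: with $\mathbf 1$ the all-ones vector and $v$ the column vector with entries $1/S_j$, every row of $J$ equals $-2\,v^{\top}$ up to a diagonal correction, so $J=D-2\,\mathbf 1\,v^{\top}$ where $D=\mathrm{diag}(d_m)$ and $d_m=(1+\xi_m^2)/(S_m\xi_m^2)$. The matrix determinant lemma then yields $\det J=\det(D)\,(1-2\,v^{\top}D^{-1}\mathbf 1)$. Here $\det D=\prod_m(1+\xi_m^2)/(S_m\xi_m^2)\neq0$ exactly under the standing hypotheses $S_m\neq0$ and $\xi_m^2+1\neq0$ (together with $\xi_m\neq0$), and a one-line computation gives $v^{\top}D^{-1}\mathbf 1=\sum_m \xi_m^2/(1+\xi_m^2)$; so the second factor is precisely $1-2\sum_m \xi_m^2/(\xi_m^2+1)$, which is nonzero by \eqref{singset}. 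This completes the argument.

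I expect the only real obstacle to be recognizing and exploiting the diagonal-plus-rank-one structure of $J$: once the matrix determinant lemma is applied the clean factor appears automatically, and matching it to the excluded value $1/2$ in \eqref{singset} is pure bookkeeping. The subsidiary point requiring care is the reduction from $\nabla P_m$ to $\nabla F_m$, which uses that the evaluation point is a genuine solution of the system together with the nonvanishing of $\xi_m$ and $S_m$.
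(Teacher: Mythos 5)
Your proof is correct and follows essentially the same route as the paper: both compute the same partial derivatives, isolate the diagonal-plus-rank-one structure of the Jacobian (the paper writes $\nabla P_m = V_m + W$ with $W$ the common rank-one part), and reduce linear independence to the scalar condition $\sum_m \xi_m^2/(\xi_m^2+1)\neq 1/2$. The only cosmetic difference is that you certify invertibility via the matrix determinant lemma, whereas the paper shows a dependence relation would force the coefficient vector to be an eigenvector with eigenvalue $1/2$ of the rank-one coupling matrix, whose sole nonzero eigenvalue is its trace $\sum_m \xi_m^2/(\xi_m^2+1)$.
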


\begin{proof}
The relevant partial derivatives are 
\[\frac{\partial P_m}{\partial \xi _m} =
\frac{1}{S_m(l_m)}[\frac{1}{\xi _m^2(\lambda ) } - 1] \]
and for $j \not= m$
\[\frac{\partial P_m}{\partial \xi _j} = -2 \frac{1}{S_j(l_j)}.\]
That is, there is an $m$-independent vector function $W$ such that 
\[\nabla P_m = V_m + W, \quad W = -2 \begin{pmatrix}  1/S_1(l_1) \cr \vdots \cr  1/S_M(l_M) \end{pmatrix}, \]
with $V_m$ having $m$-th component equal to 
\[\partial P_m/\partial \xi _m + 2/S_m(l_m) = \frac{1}{S_m(l_m)}[\frac{1}{\xi _m^2(\lambda ) } + 1] \]
and all other components zero.

If the vectors $V_k + W$ are linearly dependent, then there are constants $\alpha _k$
not all zero such that $\sum_k \alpha _k (V_k + W) = 0$.  
 If $S_m(l_m) \not= 0$ the component equations can be written as 
\[\alpha _m[\frac{1}{\xi _m(\lambda )^2} + 1] = 2\sum_{k=1}^M \alpha _k .\]
This linear system is 
\[ {\rm diag}[ \frac{\xi _1^2 + 1}{\xi _1^2}, \dots ,  \frac{\xi _M^2 + 1}{\xi _M^2}] 
\begin{pmatrix} \alpha _1 \cr \vdots \cr \alpha _M \end{pmatrix} =
2 \begin{pmatrix} 1 & \dots & 1 \cr \vdots & \dots & \vdots \cr 1 & \dots & 1\end{pmatrix}
 \begin{pmatrix} \alpha _1 \cr \vdots \cr \alpha _M \end{pmatrix} .\]

If none of the $\xi _m^2$ have the value $-1$, this system says $[\alpha _1,\dots ,\alpha _M]$ is an
eigenvector with eigenvalue $1/2$ for the matrix
\[ \begin{pmatrix} \frac{\xi _1^2}{\xi _1^2 + 1} & \dots &  \frac{\xi _1^2}{\xi _1^2 + 1}
 \cr \vdots & \dots & \vdots \cr  \frac{\xi _M^2}{\xi _M^2 + 1} & \dots &  \frac{\xi _M^2}{\xi _M^2 + 1} \end{pmatrix} .\]
Vectors with $\sum \alpha _k = 0$ are in the null space of this matrix, and the remaining eigenvalue is
the trace, with eigenvector $[\frac{\xi _1^2}{\xi _1^2 + 1}, \dots , \frac{\xi _M^2}{\xi _M^2 + 1}]$.
Thus the condition for dependent gradients is
\[\sum_{m=1}^M \frac{\xi _m^2}{\xi _m^2 + 1} = 1/2.\]

\end{proof}

By applying the inverse and implicit function theorems for holomorphic functions \cite[p. 18-19]{GH} we obtain the following corollary.

\begin{cor} \label{localman}
Suppose $\lambda \in \complex$, $S_m(\lambda, l_m) \not= 0$, and 
\[ \xi _m(\lambda )^2 + 1 \not = 0, \quad \sum_{m=1}^M \frac{\xi _m^2}{\xi _m^2 + 1} \not= 1/2, \quad m=1,\dots, M.\]
Then the solutions of the system \eqref{recurse3} are locally given in $\complex ^M \times \complex $ 
by a holomorphic $\complex ^M$- valued function of $\lambda $.  
\end{cor}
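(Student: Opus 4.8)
The plan is to read \corref{localman} as a direct application of the holomorphic implicit function theorem, with \thmref{grads} supplying the needed nondegeneracy of a Jacobian. Write $P = (P_1,\dots ,P_M)$ for the left-hand sides of the system \eqref{recurse3}, regarded as a map into $\complex ^M$. On the open subset of $\complex ^M \times \complex $ where $S_m(\lambda ,l_m) \not= 0$ and $\xi _m \not= 0$ for each $m$, every $P_m$ is a holomorphic function of $(\xi _1,\dots ,\xi _M,\lambda )$, since the only operations appearing are addition, multiplication, and division by the quantities $S_k(l_k,\lambda )$ and $\xi _m$. Note that any solution of \eqref{recurse3} automatically lies in this set, since the equation contains the term $(\xi _m^2-1)/(S_m\xi _m)$, which is only defined when $S_m \not= 0$ and $\xi _m \not= 0$. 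Fix a solution point $(\xi ^0,\lambda ^0)$ at which the hypotheses of the corollary hold.

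First I would identify the gradients $\nabla P_m$ computed in the proof of \thmref{grads} as the rows of the Jacobian matrix $D_\xi P = (\partial P_m/\partial \xi _j)$ of $P$ with respect to $\xi _1,\dots ,\xi _M$, treating $\lambda $ as a parameter; this is immediate from the formulas $\partial P_m/\partial \xi _m = S_m(l_m)^{-1}[\xi _m^{-2}-1]$ and $\partial P_m/\partial \xi _j = -2 S_j(l_j)^{-1}$ for $j \not= m$ recorded there. Linear independence of these $M$ gradient vectors in $\complex ^M$ is precisely the statement that $D_\xi P$ is invertible, i.e. $\det D_\xi P \not= 0$. By \thmref{grads}, this linear independence holds exactly when $\sum _m \xi _m^2/(\xi _m^2+1) \not= 1/2$, which is among the hypotheses; the auxiliary conditions $S_m(l_m,\lambda ) \not= 0$ and $\xi _m^2+1 \not= 0$ ensure the partial derivatives are finite and the computation of \thmref{grads} is valid at $(\xi ^0,\lambda ^0)$.

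With $P$ holomorphic near $(\xi ^0,\lambda ^0)$ and $D_\xi P$ invertible there, the holomorphic implicit function theorem \cite[p. 18-19]{GH} applies directly: there is a neighborhood of $\lambda ^0$ and a unique holomorphic map $\lambda \mapsto \xi (\lambda ) = (\xi _1(\lambda ),\dots ,\xi _M(\lambda ))$ with $\xi (\lambda ^0) = \xi ^0$ and $P(\xi (\lambda ),\lambda ) = 0$ throughout the neighborhood, whose graph exhausts the solutions of \eqref{recurse3} near $(\xi ^0,\lambda ^0)$. This is exactly the asserted local holomorphic parametrization of the solution set by $\lambda $.

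Since the argument reduces entirely to verifying the hypotheses of the implicit function theorem, I do not expect a serious obstacle. The only point requiring care is the bookkeeping that identifies $\nabla P_m$ with the $m$-th row of $D_\xi P$ and confirms that linear independence of the gradients is equivalent to the nonvanishing of $\det D_\xi P$; both are routine once the partial derivatives from \thmref{grads} are in hand.
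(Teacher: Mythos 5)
Your proposal is correct and follows essentially the same route as the paper: the paper derives \corref{localman} directly from \thmref{grads} by invoking the holomorphic inverse and implicit function theorems of \cite[p. 18-19]{GH}, with the linear independence of the gradients $\nabla P_m$ playing exactly the role of the invertibility of $D_\xi P$ as you describe. The only difference is that you spell out the bookkeeping (gradients as Jacobian rows, independence equivalent to nonvanishing determinant) that the paper leaves implicit.
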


\begin{thm}
There is a discrete set $Z_0 \subset \real  $ and a positive integer $N$ such that for all $\lambda \in \complex \setminus Z_0$ the equations \eqref{recurse3}
satisfied by the multipliers $\mu _m(\lambda )$ have at most $N$ solutions $\xi _1(\lambda ),\dots ,\xi _M(\lambda )$.   
For $\lambda \in \complex \setminus Z_0$, the functions $\mu _m(\lambda )$ are solutions of 
polynomial equations $p_m(\xi _m) = 0$ in the one variable $\xi _m$ of positive degree, with coefficients which are entire functions of $\lambda $.
\end{thm}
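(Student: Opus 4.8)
The plan is to exploit the fact that all $M$ equations \eqref{recurse3} share a single sum. Writing $S_m = S_m(l_m,\lambda)$, $C_m = C_m(l_m,\lambda)$, and
\[ T = \sum_{k=1}^M \frac{\xi_k - C_k}{S_k}, \]
the system is equivalent, wherever the $S_k$ are nonzero, to the $M$ uniform quadratics
\[ \xi_m^2 - 2 S_m T \xi_m - 1 = 0, \qquad m = 1,\dots,M, \]
coupled to the defining relation for $T$; note $\xi_m = 0$ is never a solution, since it forces the left side to equal $-1$. First I would treat $T$ as an auxiliary unknown and eliminate $\xi_1,\dots,\xi_M$. Each quadratic gives $\xi_m = S_m T \pm \sqrt{S_m^2 T^2 + 1}$, and substituting into the relation for $T$ produces, for each sign vector $\epsilon \in \{\pm 1\}^M$, the scalar equation
\[ R_\epsilon(T) := (1-M)T + \sum_{m=1}^M \frac{C_m}{S_m} - \sum_{m=1}^M \frac{\epsilon_m}{S_m}\sqrt{S_m^2 T^2 + 1} = 0. \]
Forming the product $Q(T,\lambda) = \prod_{\epsilon} R_\epsilon(T)$ over all $2^M$ sign vectors eliminates every radical: pairing the two signs of each $\sqrt{S_m^2 T^2+1}$ replaces it by $(S_m^2 T^2+1)/S_m^2$, so $Q$ is a genuine polynomial in $T$ whose coefficients, after clearing the $S_k$ denominators, are polynomials in the entire functions $C_m(l_m,\lambda),S_m(l_m,\lambda)$, hence entire in $\lambda$.

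Next I would recover the decoupled one-variable equations. Since each quadratic gives $T = (\xi_m^2 - 1)/(2 S_m \xi_m)$, substituting this rational expression into $Q(T,\lambda)=0$ and clearing the powers of $\xi_m$ and $S_m$ yields a polynomial $p_m(\xi_m,\lambda)=0$ with entire coefficients that is satisfied by $\mu_m(\lambda)$. The number of solutions is bounded uniformly: each solution $(\xi_1,\dots,\xi_M)$ determines the common value $T$, which is a root of $Q(\cdot,\lambda)$, and conversely each root $T$ determines each $\xi_m$ as one of two quadratic roots. Hence the count is at most $2^M\,\deg_T Q \le 4^M =: N$. Equivalently, after clearing denominators each equation has total degree $2$ in $(\xi_1,\dots,\xi_M)$, so the B\'ezout bound already gives $N = 2^M$ whenever the system is zero-dimensional.

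The crux is to control the exceptional set and show it is discrete and real. The decisive point is a parity count on the leading behavior of the factors $R_\epsilon$. As $T\to\infty$ one has $\sqrt{S_m^2 T^2+1}\sim S_m T$, so the factor $1/S_m$ cancels against $S_m$ and $R_\epsilon(T)\sim \big[(1-M)-\sum_m\epsilon_m\big]T$. Now $\sum_m\epsilon_m\equiv M\pmod 2$, while $1-M\equiv 1+M\pmod 2$, so $(1-M)-\sum_m\epsilon_m$ is \emph{odd}, hence nonzero, for every $\epsilon$. Therefore each factor is genuinely linear at infinity with a nonzero integer leading coefficient, and the leading coefficient of $Q$ in $T$ is the nonzero integer $\prod_\epsilon\big[(1-M)-\sum_m\epsilon_m\big]$; after normalizing $Q$ to entire coefficients this becomes a nonzero constant times a product of powers of the $S_m(l_m,\lambda)$, and the same holds for the leading coefficient of each $p_m$. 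Thus the locus where $Q$ (equivalently $p_m$) drops degree — the only way the count $N$ can fail, or $p_m$ can fail to have positive degree — is contained in $\{\lambda:\prod_m S_m(l_m,\lambda)=0\}$. The zeros of $S_m(l_m,\cdot)$ are Dirichlet eigenvalues, so they lie in $[0,\infty)\subset\real$ and are discrete since $S_m(l_m,\cdot)$ is entire and not identically zero. Taking $Z_0$ to be this discrete real set (together, if needed, with the discrete real zero sets of the finitely many leading coefficients) gives the theorem.

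The main obstacle is the passage from the radical form $R_\epsilon$ to the polynomial $Q$ while simultaneously verifying that the symmetric product over sign vectors really is a polynomial with entire coefficients of the claimed degree, and pinning down its leading coefficient precisely enough to conclude that it vanishes only where some $S_m(l_m,\lambda)=0$. The parity computation above is exactly what makes this succeed, and it is the reason the degenerate locus $Z_0$ can be placed inside $\real$; without it the exceptional set would only be known to be discrete in $\complex$.
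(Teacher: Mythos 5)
Your proposal is correct, and it reaches the theorem by a genuinely different elimination than the paper's. The paper subtracts consecutive equations to obtain $F_{m-1}(\xi_{m-1})=F_m(\xi_m)$ with $F_m(\xi)=(\xi^2-1)/(S_m\xi)$, then eliminates $\xi_M,\dots ,\xi_2$ one at a time by alternately substituting the quadratic relation and squaring, taking $Z_0=\bigcup_m\{S_m(l_m,\lambda)=0\}$ exactly as you do; but it establishes that the resulting polynomial has positive degree only indirectly, by a contradiction argument (a degree-zero outcome would make the solution set two-dimensional, contradicting \corref{localman}, which in turn rests on the gradient computation of \thmref{grads}). You instead introduce the shared value $T$ as an auxiliary unknown, kill all radicals at once by multiplying over the $2^M$ sign vectors, and read off both the positive degree and the location of the degenerate locus from the leading coefficient $\prod_\epsilon\bigl[(1-M)-\sum_m\epsilon_m\bigr]$, whose factors equal $1-2p$ ($p$ the number of $+1$ entries of $\epsilon$) and are therefore odd, hence nonzero. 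This buys three things the paper leaves implicit: an explicit bound ($\deg_T Q=2^M$, $\deg p_m=2^{M+1}$, $N\le 4^M$), independence from the implicit-function-theorem machinery, and a transparent reason why the exceptional set is exactly $\bigcup_m\{S_m(l_m,\cdot)=0\}$, hence real and discrete. In a final write-up you should record two small points that your sketch already essentially contains: the product over all sign vectors is invariant under each branch flip of $\sqrt{S_m^2T^2+1}$, which is what legitimizes evaluating the leading coefficient along the branch $\sqrt{S_m^2T^2+1}\sim S_mT$; and the actual multipliers satisfy $p_m(\mu_m)=0$ because their common value $T$ is a root of $Q$ while $\mu_m\neq 0$ and $S_m(l_m,\lambda)\neq 0$ off $Z_0$, so the substitution $T=(\mu_m^2-1)/(2S_m\mu_m)$ is valid.
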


\begin{proof}
The polynomial equation in the single variable $\xi _1(\lambda )$ will be considered;  
the equations satisfied by the other functions $\mu _m(\lambda )$ may be treated in the same manner.

Notice that the system \eqref{recurse3}  has the form
\[F_1(\xi _1,\lambda ) = g(\xi _1,\dots ,\xi _M,\lambda) , \dots  ,F_M(\xi _1,\lambda ) = g(\xi _1,\dots ,\xi _M,\lambda),\]
with 
 \begin{equation} \label{Fform1}
F_m(\xi _m,\lambda ) = \frac{\xi _m^2(\lambda ) - 1 }{S_m(l_m)\xi _m(\lambda )},
 \end{equation}
\[g(\xi _1,\dots ,\xi _M,\lambda) =  2\sum_{k = 1}^M \frac{\xi _k(\lambda ) - C_k(l_k)}{S_k(l_k)} .\] 
Subtraction of successive equations eliminates the right hand sides from $M-1$ equations, giving a system of $M$ equations,
the equations indexed by the value of $j = 1,\dots ,M$,
\[\begin{matrix}j = 1 & F_1(\xi _1,\lambda ) = g(\xi _1,\dots ,\xi _M,\lambda) \cr
 j = 2 & F_1(\xi _1,\lambda ) - F_2(\xi _2,\lambda ) = 0, \cr
\vdots \cr
j = M & F_{M-1}(\xi _{M-1},\lambda ) - F_M(\xi _M,\lambda ) = 0. \end{matrix} \]

For $m > 1$, the $m$-th equation can be written as
\begin{equation} \label{sqform}
\xi _m(\lambda )^2 =  \Bigl ( \frac{\xi _{m-1}^2(\lambda ) - 1 }{S_{m-1}\xi _{m-1}(\lambda )} \Bigr )S_m\xi _m + 1,
\end{equation}
or by using the quadratic formula,
\begin{equation} \label{linform}
2\xi _m(\lambda ) -  \Bigl ( \frac{\xi _{m-1}^2(\lambda ) - 1 }{S_{m-1}\xi _{m-1}(\lambda )}  \Bigr )S_m
= \Bigl [ \Bigl ( \frac{\xi _{m-1}^2(\lambda ) - 1 }{S_{m-1}\xi _{m-1}(\lambda )}  \Bigr )^2S_m^2  + 4 \Bigr ]^{1/2} .
\end{equation}

The variables $\xi _M, \dots , \xi _2$ can be successively eliminated from the first equation.  Starting with $k = M$ and continuing up the list of indices, 
the first equation can be written as a polynomial equation for $\xi _k$ with coefficients which are polynomials in $\xi _1,\dots ,\xi _{k-1}$
and the entire functions $S_m(l_k,\lambda )$.  
Repeated use of the substitution \eqref{sqform}, followed by clearing of the denominators, reduces the first equation to degree one in $\xi _k$.
These substitutions result in an equivalent system of equations as long as $\lambda \notin Z_0$, where
\begin{equation} \label{Z0}
Z_0 = \bigcup_m \{ S_m(l_m, \lambda ) = 0 \} .
\end{equation}
Now solve for $2\xi _k$, subtract $ [\xi _{m-1}^2(\lambda ) - 1 ] S_m/[S_{m-1}\xi _{m-1}(\lambda )] $,
use the substitution \eqref{linform}, and square both sides.  Since squaring is a two-to-one map, it will not change the dimension
of the set of solutions.  After applying these substitutions, 
the variable $\xi _k$ has been eliminated, and after clearing the denominators, the modified $j=1$
equation is a polynomial in $\xi _1(\lambda ), \dots , \xi _{k-1}(\lambda )$ with entire coefficients. 
The substitution process provides a common bound $N$ for the degrees of the polynomials $p_m$. 

Suppose the final version of the first equation does not have positive degree for $\xi _1$.
Define $Q_m = F_m(\xi _m,\lambda ) - F_{m+1}(\xi _{m+1},\lambda ) $ for $m = 1,\dots ,M-1$.
The system of equations for $\xi _1,\dots ,\xi _M$ is then the system
\[Q_1 = 0, \dots , Q_{M-1} = 0,\]
where
\[\frac{\partial Q_m}{\partial \xi _m} = \frac{1}{S_m}[\frac{1}{\xi _m^2(\lambda ) } - 1],
\quad \frac{\partial Q_m}{\partial \xi _{m+1}} = \frac{-1}{S_{m+1}}[\frac{1}{\xi _{m+1}^2(\lambda ) } - 1], \]
and all other partial derivatives are zero.
Suppose $S_m(\lambda, l_m) \not= 0$, and $\xi _m(\lambda )^2 + 1 \not = 0$ for $m=1,\dots, M$.
Then the $M-1$ gradients $\nabla Q_m$ are linearly independent, so outside of a discrete set of $\lambda $ the functions  
$\xi _1,\dots ,\xi _{K-1}$ would be holomorphic functions of $\lambda ,\xi _M$; that is, the solution set would have dimension $2$,
contradicting \corref{localman} which showed the dimension is $1$.

\end{proof}

\subsection{Extension of multipliers to $[0,\infty )$ and the spectrum}

The mapping $z \to \lambda $ given by 
\[ \lambda  = \Bigl ( \frac{1-z}{1+z} \Bigr )^2, \quad z = \frac{1-\sqrt{\lambda }}{1+\sqrt{\lambda }} \]
is a conformal map from the unit disc $\{ |z| < 1 \}$ onto 
$\lambda \in \complex \setminus [0,\infty )$.   
By using this conformal map and \lemref{multsize}
the functions $\mu _m(\lambda (z) )$ may be considered as bounded holomorphic functions on the unit disc.
Classical results in function theory \cite[p. 38]{Hoffman} insure that  $\mu _m(\lambda (z) )$ has nontangential limits almost everywhere
as a function of $z$, and so the limits
\begin{equation} \label{limeqn}
\mu _m^{\pm}(\sigma ) = \lim _{\epsilon \to 0^+} \mu _m(\sigma \pm i \epsilon ) 
\end{equation}
exist almost everywhere on $[0,\infty )$.  
By \eqref{recurse3} and \eqref{Z0}, the values $\mu _m^{\pm}(\sigma )$ are bounded away from zero uniformly on compact subsets of $\complex \setminus Z_0$. 

Since the functions $\mu _m(\lambda )$ satisfy the polynomials equations $p_m(\xi _m) = 0$, more information about 
$ \mu _m^{\pm}(\sigma )$ is available.  The equations $p_m(\xi _m) = 0$ have entire coefficients
and positive degree for $\lambda \in \complex \setminus [0,\infty )$.  Let $Z_m \subset \complex $ denote the discrete set  
where the leading coefficient vanishes.  A contour integral computation which is a variant of the Argument Principle, \cite[p. 152]{Ahlfors}
or problem 2 of \cite[p. 174]{GK},  shows that for $\lambda \in \complex \setminus Z_m$  the roots of $p_m(\lambda )$, in particular $\mu _m(\lambda )$,
are holomorphic as long as the root is simple.  For $\lambda \in \complex \setminus Z_m$ the roots extend continuously to $\lambda \in [0,\infty )$ even if the
roots are not simple.  The limiting values $\mu ^{\pm} (\sigma )$ need not agree; let 
\[\delta _m(\sigma ) = \mu ^+_m(\sigma ) - \mu ^-_m(\sigma ), \quad \sigma \in [0,\infty ) .\]

\begin{prop}
If $\mu ^+_m(\sigma ) = \mu ^-_m(\sigma )$ for $\sigma \in (\alpha ,\beta ) \setminus Z_m$,
then $\mu _m(\lambda )$ extends holomorphically across $(\alpha , \beta )$. 
\end{prop}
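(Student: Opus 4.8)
The plan is to treat $(\alpha,\beta)$, a subinterval of the branch cut $[0,\infty)$, as an interface between the upper and lower half planes and to glue the two restrictions of $\mu_m$ across it: first on the complement of $Z_m$ by a Morera argument, and then at the remaining isolated points of $Z_m$ by a removability argument.

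First I would fix $\sigma_0 \in (\alpha,\beta)\setminus Z_m$ and a disc $D$ about $\sigma_0$ chosen small enough to meet neither $Z_m$ nor $[0,\infty)\setminus(\alpha,\beta)$. On $D\setminus\real$ the function $\mu_m$ is holomorphic by \lemref{holomult}, and by the Argument Principle computation recalled just before the statement it extends continuously up to $D\cap\real$ from above and from below, the two one-sided boundary functions being $\mu_m^+$ and $\mu_m^-$. The hypothesis $\mu_m^+(\sigma)=\mu_m^-(\sigma)$ on $(\alpha,\beta)\setminus Z_m$ says these extensions agree on $D\cap\real$, so together they define a function $\tilde\mu_m$ that is continuous on $D$ and holomorphic on $D\setminus\real$. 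Morera's theorem then upgrades this to holomorphy on all of $D$: the integral of $\tilde\mu_m$ around any triangle in $D$ vanishes, since a triangle crossing $\real$ splits along the axis into pieces whose integrals vanish by holomorphy off the axis, while the segments run along the axis cancel by continuity. Thus $\mu_m$ continues holomorphically across every point of $(\alpha,\beta)\setminus Z_m$.

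It remains to cross the isolated points of $Z_m\cap(\alpha,\beta)$, and here the bound $|\mu_m(\lambda)|<1$ of \lemref{multsize} finishes the argument: near such a point $\mu_m$ is now holomorphic on a punctured disc and bounded there, so the singularity is removable. The function therefore extends holomorphically across all of $(\alpha,\beta)$. The only substantive step is the continuity of the one-sided limits up to the axis --- without it one has merely almost-everywhere nontangential boundary values, which would not suffice for Morera --- but this continuity is exactly what the root-continuation discussion preceding the proposition supplies, so the remaining work is the standard gluing and removable-singularity machinery. One may note in passing that the reflection identity $\mu_m(\lambda)=\overline{\mu_m(\overline\lambda)}$ of \lemref{holomult} forces the common boundary value to be real on $(\alpha,\beta)\setminus Z_m$, so the continuation could equally be obtained via the Schwarz reflection principle.
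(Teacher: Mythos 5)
Your proposal is correct and follows essentially the same route as the paper: a Morera-type gluing across the real axis on $(\alpha,\beta)\setminus Z_m$ using the continuous one-sided extensions furnished by the root-continuation discussion, followed by Riemann's removable singularity theorem at the isolated points of $Z_m\cap(\alpha,\beta)$ via the bound $|\mu_m(\lambda)|<1$. The extra detail you supply (the triangle-splitting in Morera, the Schwarz reflection alternative) is consistent with, but not different in substance from, the paper's argument.
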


\begin{proof}
On any subinterval $ (\alpha _1,\beta _1) \subset (\alpha ,\beta )$ where $\mu _m(\lambda )$ extends continuously to the common value $\mu _m^{\pm}(\sigma )$,
the extension is holomorphic by Morera's Theorem \cite[p. 121]{Gamelin}.  The points in the discrete set $Z_m \cap (\alpha , \beta )$ appear to be possible obstacles
to the existence of a holomorphic extension, but since the extended function $\mu _m(\lambda )$ is bounded 
the extension can be continued holomorphically across $Z_m \cap (\alpha , \beta )$ too by Riemann's Theorem on removable singularities.
\end{proof}

\begin{thm} \label{holex}
Assume $\sigma \notin Z_0$.  For $m = 1,\dots , M$  suppose $\mu _m^{\pm } (\sigma ) \not= \pm 1$ and 
$\mu _m(\lambda )$ extends holomorphically (resp. continuously) to $\sigma \in \real $ from above (resp. below).
Then the kernel function $R_e(x,t,\lambda )$ of \eqref{rkernel} extends 
holomorphically (resp. continuously) from above (resp. below) to 
\[ R_e^{\pm} (x,t,\sigma ) = \lim_{\epsilon \to 0^+} R_e(x,t,\sigma + i \epsilon ), \quad \sigma \in \real .\]   
\end{thm}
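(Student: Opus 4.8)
The plan is to express the kernel $R_e(x,t,\lambda)$ as a finite algebraic combination of objects each of which already extends to $\sigma$ from the relevant side, and then to verify that the denominators occurring in that combination do not vanish in the boundary limit. The building blocks are the fundamental solutions $C_k(x,\lambda), S_k(x,\lambda)$, the multipliers $\mu_1(\lambda),\dots,\mu_M(\lambda)$, and the Wronskians $W_k(\lambda)$ of \eqref{Weval}. For each fixed $x$ the functions $C_k(x,\lambda)$ and $S_k(x,\lambda)$ are entire in $\lambda$, hence cause no difficulty on the real axis. By \thmref{Yval} the value of $y_+$ at any vertex of $\tree _e^+$ is a finite product of multipliers, and on each edge the interior values are recovered from the two vertex values by the interpolation formula \eqref{interp}; likewise for $y_-$ on $\tree _e^-$. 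Thus, for each fixed pair $(x,t)$ with $t \in e$ and $x \in \tree _M$, each of $y_+(x,\lambda), y_+(t,\lambda), y_-(x,\lambda), y_-(t,\lambda)$ is a finite combination of the entire functions $C_k, S_k$ and the multipliers, the only denominators being the factors $S_k(l_k,\lambda)$ from \eqref{interp} (equivalently, from $y_k'(0,\lambda) = [\mu_k(\lambda) - C_k(l_k,\lambda)]/S_k(l_k,\lambda)$ implicit in \eqref{multder}).

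First I would dispose of these interpolation denominators. Since $\sigma \notin Z_0$, the defining relation \eqref{Z0} gives $S_k(l_k,\sigma) \neq 0$ for every $k$, so each $1/S_k(l_k,\lambda)$ extends holomorphically across $\sigma$. Combined with the standing hypothesis that every $\mu_m(\lambda)$ extends holomorphically (resp. continuously) to $\sigma$ from above (resp. below), this shows that each factor of $y_\pm(x,\lambda)$ and $y_\pm(t,\lambda)$, and hence these functions themselves, extends holomorphically (resp. continuously).

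Next I would treat the Wronskian denominator $1/W_k(\lambda)$ in \eqref{rkernel}, where the crux of the argument lies. By \eqref{Weval} its boundary value is $W_k^\pm(\sigma) = [1 - (\mu_k^\pm(\sigma))^2]/S_k(l_k,\sigma)$. The hypothesis $\mu_k^\pm(\sigma) \neq \pm 1$ forces $1 - (\mu_k^\pm(\sigma))^2 \neq 0$, while $\sigma \notin Z_0$ again gives $S_k(l_k,\sigma) \neq 0$; together these yield $W_k^\pm(\sigma) \neq 0$. Hence $W_k(\lambda)$ extends to a nonvanishing holomorphic (resp. continuous) function near $\sigma$, and $1/W_k(\lambda)$ extends in the same fashion.

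Finally, on each of the two regions $x \le t$ and $t \le x$ the kernel $R_e(x,t,\lambda)$ is the product of $y_-$, $y_+$, and $1/W_k$; being a finite product of functions each extending holomorphically (resp. continuously) from a fixed side, it too extends, and the limit $R_e^\pm(x,t,\sigma) = \lim_{\epsilon \to 0^+} R_e(x,t,\sigma \pm i\epsilon)$ exists with the stated properties. The step demanding the most care is keeping the two families of denominators --- the interpolation factors $S_k(l_k,\lambda)$ and the Wronskians $W_k(\lambda)$ --- bounded away from zero in the boundary limit; the hypotheses $\sigma \notin Z_0$ and $\mu_k^\pm(\sigma) \neq \pm 1$ are exactly what is needed for this, and without them the kernel would generically develop poles or discontinuities on the real axis. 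A minor point to confirm is that allowing $x$ to range over all of $\tree _M$ introduces no convergence issue, since each fixed $x$ lies on a single edge at finite combinatorial distance from $e$, so the representation of $y_\pm(x,\lambda)$ involves only finitely many multiplier factors and holomorphy (resp. continuity) is inherited pointwise.
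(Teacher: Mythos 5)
Your proposal is correct and follows essentially the same route as the paper's proof: control $1/W_k$ via the Wronskian formula \eqref{Weval} together with the hypotheses $\mu_k^{\pm}(\sigma)\neq\pm 1$ and $\sigma\notin Z_0$, extend the vertex values of $y_{\pm}$ as finite products of multipliers via \thmref{Yval}, and extend to edge interiors via the interpolation formula \eqref{interp} using $S_k(l_k,\sigma)\neq 0$. You simply spell out in more detail the nonvanishing of the two families of denominators and the finiteness of the multiplier products at each fixed $x$, which the paper leaves implicit.
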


\begin{proof}
The Wronskian formula \eqref{Weval} shows that $1/W_m(\lambda )$ extends holomorphically (resp. continuously) if $\mu _m(\lambda )$
does and $\mu _m^{\pm } (\sigma ) \not= \pm 1$.   \thmref{Yval} shows that the vertex values $y_{\pm}(w) $ extend in the same fashion as 
the multipliers $\mu _m$.  Finally, the interpolation formula \eqref{interp} provides a holomorphic extension of $y_{\pm}$ from the vertex values as long as
$\sigma \notin Z_0$, that is $S_m(l_m,\sigma )\not= 0 $ for $m = 1,\dots , M$.
\end{proof}

Recall \cite[p. 237,264]{RS1} that if $P$ denotes the family of spectral projections for a self adjoint operator, in this case $\Delta +q$, 
then for any $f \in L^2({\cal T})$
\begin{equation} \label{Stoneform}
\frac{1}{2}[P_{[a,b]} + P_{(a,b)}]f = 
\lim_{\epsilon \downarrow 0} \frac{1}{ 2 \pi i}\int_a^b [R(\sigma  + i\epsilon ) 
- R(\sigma  - i\epsilon )]f \ d\sigma .
\end{equation} 

\begin{thm} \label{acspec}
Suppose $(\alpha ,\beta ) \cap Z_0  = \emptyset $.  For $m = 1,\dots , M$ also assume that $(\alpha ,\beta ) \cap Z_m  = \emptyset $ and 
that $\mu _m^{\pm } (\sigma ) \not= \pm 1$ for all $\sigma \in (\alpha ,\beta )$.  
If $[a,b] \subset (\alpha ,\beta )$, $e$ is an edge of type $m$, and $f \in L^2(e)$, then 
\begin{equation} \label{Sform}
P_{[a,b]} f =  \frac{1}{2 \pi i}\int_a^b [R_e^+ (\sigma ) - R_e^-(\sigma )]f \ d\sigma .
\end{equation}
If $\sigma _1 \in (\alpha , \beta )$, then $\sigma _1$ is not an eigenvalue of
$\Delta + q$. 
\end{thm}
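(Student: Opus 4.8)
The plan is to combine Stone's formula \eqref{Stoneform} with the boundary behavior of the resolvent kernel established in \thmref{holex}. Under the stated hypotheses---$\sigma \notin Z_0$, $(\alpha ,\beta )\cap Z_m = \emptyset $, and $\mu _m^{\pm}(\sigma ) \ne \pm 1$ for all $\sigma \in (\alpha ,\beta )$---the root-continuity remarks preceding \thmref{holex} show that each $\mu _m(\lambda )$ extends continuously to $(\alpha ,\beta )$ from both sides, so \thmref{holex} guarantees that the kernel $R_e(x,t,\lambda )$ of \thmref{theres} extends continuously from above and below to $R_e^{\pm}(x,t,\sigma )$. First I would fix $f \in L^2(e)$ and pair \eqref{Stoneform} with a test function $g \in L^2(\tree _M)$ of compact support, reducing \eqref{Sform} to the scalar identity obtained by taking inner products. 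Because $f$ and $g$ both have compact support, the pairing $\langle R(\sigma \pm i\epsilon )f, g\rangle $ is a double integral of $R_e(x,t,\sigma \pm i\epsilon )$ over a compact region in $x$ and $t$.

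The key step is to pass the limit $\epsilon \to 0^+$ inside the integral over $\sigma $. Here I would invoke the joint continuity of $R_e(x,t,\lambda )$ in $(x,t,\lambda )$ up to the real axis from each side, furnished by \thmref{holex}, on a compact set of $\lambda $ whose intersection with $\real $ is $[a,b]$. Joint continuity on a compact set yields uniform continuity, so $\langle R(\sigma \pm i\epsilon )f, g\rangle \to \langle R_e^{\pm}(\sigma )f, g\rangle $ uniformly for $\sigma \in [a,b]$, and the interchange of limit and integral is then immediate. This gives the weak identity
\[ \langle \tfrac{1}{2}[P_{[a,b]} + P_{(a,b)}]f, g\rangle = \frac{1}{2\pi i}\int_a^b \langle [R_e^+(\sigma ) - R_e^-(\sigma )]f, g\rangle \, d\sigma . \]
Since the left side is bounded by $\| f\| \, \| g\| $ and compactly supported $g$ are dense in $L^2(\tree _M)$, the right side defines the $L^2$ vector $\tfrac{1}{2}[P_{[a,b]} + P_{(a,b)}]f$; this is the sense in which the (possibly non-square-integrable) boundary kernel $R_e^{\pm}$ enters \eqref{Sform}, with the $\sigma $-integration supplying the square integrability.

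Finally I would dispatch the eigenvalue claim and then remove the averaging. For $\sigma _1 \in (\alpha ,\beta )$, apply the weak identity to the shrinking intervals $[\sigma _1 - \delta , \sigma _1 + \delta ]$. Continuity of $\sigma \mapsto \langle [R_e^+(\sigma ) - R_e^-(\sigma )]f, g\rangle $ on a compact neighborhood of $\sigma _1$ bounds the integrand, so the right side is $O(\delta )$ and tends to $0$. On the left, both $P_{[\sigma _1 - \delta ,\sigma _1 + \delta ]}$ and $P_{(\sigma _1 - \delta ,\sigma _1 + \delta )}$ converge strongly to $P_{\{ \sigma _1\} }$ as $\delta \to 0^+$, so the left side tends to $\langle P_{\{ \sigma _1\} }f, g\rangle $. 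Hence $P_{\{ \sigma _1\} }f = 0$ for every $f \in L^2(e)$ and every edge $e$; since such $f$ span $L^2(\tree _M)$, $P_{\{ \sigma _1\} } = 0$ and $\sigma _1$ is not an eigenvalue. In particular the endpoints $a,b$ carry no point mass, so $P_{[a,b]} = P_{(a,b)}$, and the averaged identity collapses to \eqref{Sform}. The main obstacle is precisely the interchange of limit and integral: it is here that the hypotheses $\mu _m^{\pm}(\sigma )\ne \pm 1$ and $\sigma \notin Z_0 \cup Z_m$ are used, through \thmref{holex}, to keep the kernel continuous---and hence locally bounded---up to the boundary, while the reduction to compactly supported test functions is what sidesteps the failure of $R_e^{\pm}$ to be square integrable on the continuous spectrum.
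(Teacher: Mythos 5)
Your proposal is correct and follows essentially the same route as the paper: Stone's formula \eqref{Stoneform} paired against compactly supported test functions, the boundary continuity of the kernel from Theorem~\ref{holex} to justify the uniform passage to the limit, a shrinking-interval argument to rule out point mass at each $\sigma_1 \in (\alpha,\beta)$, and then $P_{[a,b]} = P_{(a,b)}$ to collapse the averaged identity to \eqref{Sform}. The only cosmetic difference is that you show $P_{\{\sigma_1\}} = 0$ directly on edge-supported $f$, whereas the paper argues by contradiction with a putative normalized eigenfunction; these are the same argument.
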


\begin{proof}
As noted above, the assumption that $(\alpha ,\beta ) \cap Z_m  = \emptyset $ means the multipliers $\mu _m$ extend continuously to $[a,b]$ 
from above and below.  Since $\mu _m^{\pm } (\sigma ) \not= \pm 1$ the function $1/W_m(\lambda )$ extends continuously to $[a,b]$.
Based on \thmref{Yval} and the interpolation formula \eqref{interp}, the kernel $R_e(x,t,\lambda )$ described in \eqref{rkernel} 
extends continuously to $[a,b]$ from above and below.
The convergence of $R_e(x,t,\sigma \pm i \epsilon )$ to $R_e(x,t,\sigma  )$ is uniform for
$t,x$ coming from a finite set of edges. 

If the support of $g \in L^2(\tree _M) $ is contained in a finite set of edges, then \eqref{Stoneform} and the uniform convergence of $R_e(x,t,\sigma \pm i \epsilon )$ to $R_e(x,t,\sigma  )$ gives
\[ \langle \frac{1}{ 2}[P_{[a,b]} + P_{(a,b)}]f,g \rangle = \frac{1}{ 2 \pi i}\int_a^b [R_e^+ (\sigma ) - R_e^-(\sigma )]\langle f,g \rangle \ d\sigma .\]
The set of $g$ with with support in a finite set of edges is dense in $L^2(\tree _M )$, so the restriction on the support of $g$ may be dropped. 
Suppose $g$ is an eigenfunction with eigenvalue $\sigma _1 \in (a,b)$ and with $\| g \| = 1$, while $f$ is the restriction of $g$ to the edge $e$.
Then the continuity of $R_e(x,t,\sigma )$ means there is a $C_e$ such that
\[ | \frac{1}{ 2 \pi i}\int_a^b [R_e^+ (\sigma ) - R_e^-(\sigma )]\langle f,g \rangle \ d\sigma | \le C_e|b-a| .\] 
This implies  $\langle P_{\sigma _1}g,g \rangle = 0$, so the eigenfunction $g$ doesn't exist.
Finally, the absence of point spectrum in $(\alpha ,\beta )$ means that $P_[a,b] = P_(a,b)$, giving the formula \eqref{Sform}.

\end{proof}

\begin{thm}
Assume $\sigma \in [0,\infty ) \setminus Z_0$ and for $m = 1,\dots , M$  suppose $\mu _m^{\pm } (\sigma ) \not= \pm 1$. 
Then  $\sigma $ is in the resolvent set of $\Delta + q$ if and only if $\delta _m(\sigma ) = 0$ in open neighborhood of $\sigma $
for $m  = 1,\dots , M$.
\end{thm}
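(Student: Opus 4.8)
The statement is a spectral characterization, so the plan is to prove the two implications separately, reusing the boundary-value machinery already built for the multipliers and the kernel $R_e$. The direction in which vanishing of the jumps $\delta_m$ forces $\sigma$ into the resolvent set is the easy one: it only assembles earlier results. The substantive direction is the converse, where from $\sigma$ in the resolvent set I must recover holomorphy of each $\mu_m$ across a real neighborhood, and the tool for that is to read the multipliers off the vertex values of $R(\lambda)f$.

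For ``$\delta_m\equiv0$ near $\sigma$ for all $m$ $\Rightarrow$ $\sigma$ in the resolvent set,'' I would argue as follows. By hypothesis there is an interval $(\alpha,\beta)\ni\sigma$ on which $\mu_m^+=\mu_m^-$ off $Z_m$ for every $m$, so the preceding proposition (holomorphic continuation when the jump vanishes) yields a holomorphic extension of each $\mu_m$ across $(\alpha,\beta)$. Shrinking $(\alpha,\beta)$ about $\sigma$ I retain $\mu_m^\pm(\sigma)\ne\pm1$ by continuity of the extension, and $\sigma\notin Z_0$ since $Z_0$ is discrete. Then \thmref{holex} gives a holomorphic extension of $R_e(x,t,\lambda)$ across $(\alpha,\beta)$, so $R_e^+=R_e^-$ there. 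Feeding this into Stone's formula \eqref{Stoneform}, exactly as in the proof of \thmref{acspec}, gives $P_{[a,b]}f=\frac{1}{2\pi i}\int_a^b[R_e^+-R_e^-]f\,d\sigma=0$ for all $[a,b]\subset(\alpha,\beta)$ and all $f$ supported on finitely many edges. Density of such $f$ shows the spectral measure puts no mass in $(\alpha,\beta)$, so $(\alpha,\beta)$, and in particular $\sigma$, lies in the resolvent set.

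For the converse I would recover the multipliers from the resolvent. Since the resolvent set is open, a complex disc $D$ about $\sigma$ lies in it and $R(\lambda)$ is holomorphic on $D$. Fix $e=(v,vs_m)$ and, as in \lemref{bridgelem}, choose $f\in L^2(e)$ with $h:=R(\lambda)f$ not identically zero on $\tree_e^+\setminus e$. For $\lambda\in D\setminus\real$ the extended kernel \eqref{rkernel} gives $h=c(\lambda)\,y_+$ on $\tree_e^+$, with $c(\lambda)=W_m(\lambda)^{-1}\int_e y_-f$, and since vertex evaluation is a continuous functional on the domain (\lemref{hololem}) each $H_w(\lambda):=h(w,\lambda)$ is holomorphic on all of $D$. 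Because $c\not\equiv0$ and $y_+$ is nowhere vanishing (\lemref{nozero}), $H_w\not\equiv0$. For adjacent vertices $w,\,w'=ws_k$ in $\tree_e^+$, \thmref{Yval} gives $\mu_k(\lambda)=H_{w'}(\lambda)/H_w(\lambda)$ on $D\setminus\real$; taking $w=vs_m$ with $w'=vs_m^2$ or $w'=vs_ms_k$ recovers every $\mu_k$, $k=1,\dots,M$, from the single function $h$. The ratio $H_{w'}/H_w$ is meromorphic on $D$ and, by \lemref{multsize}, bounded by $1$ off the real axis; a pole would force unboundedness throughout a punctured disc, so there are no poles and $\mu_k$ extends holomorphically across $D\cap\real$. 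Hence $\mu_k^+=\mu_k^-$ there, i.e. $\delta_k\equiv0$ on a neighborhood of $\sigma$ for every $k$.

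The main obstacle is the converse direction, specifically the passage from ``$R$ holomorphic at $\sigma$'' to ``each $\mu_k$ holomorphic at $\sigma$.'' Two points are delicate: (i) arranging a single $f$ whose resolvent image is nonvanishing on $\tree_e^+$, so that the denominators $H_w$ do not vanish identically, which is precisely the nonvanishing supplied by the construction in \lemref{bridgelem} together with \lemref{nozero}; and (ii) upgrading the resulting bounded meromorphic ratio to a genuine holomorphic extension across the real axis, which relies on the a priori bound $|\mu_k|<1$ of \lemref{multsize} to exclude poles. One must also verify that all $M$ multipliers can be read off from vertices lying in the one subtree $\tree_e^+$, which \thmref{Yval} guarantees.
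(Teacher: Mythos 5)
Your proof is correct, and the easy direction (vanishing jumps imply $\sigma$ is in the resolvent set) follows the paper's route exactly: holomorphic continuation of the $\mu_m$ across an interval, then \thmref{holex}, then Stone's formula \eqref{Stoneform} to kill the spectral projection. The substantive direction, however, is argued quite differently from the paper. The paper evaluates the extended kernel \eqref{rkernel} at two specific points of the edge, obtaining $R_m(l_m,0,\lambda)=\frac{\mu_m^2}{1-\mu_m^2}S_m(l_m,\lambda)$ and $R_m(0,0,\lambda)=\frac{\mu_m}{1-\mu_m^2}S_m(l_m,\lambda)$, and reads off first $(\mu_m^+)^2=(\mu_m^-)^2$ and then $\mu_m^+=\mu_m^-$ by pure algebra, using the hypotheses $\sigma\notin Z_0$ and $\mu_m^{\pm}(\sigma)\neq\pm1$ to keep these expressions invertible. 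You instead recover each $\mu_k$ as a ratio $H_{w'}/H_w$ of vertex values of $R(\lambda)f$ for a single well-chosen $f\in L^2(e)$, observe that this ratio is meromorphic on a disc about $\sigma$ where the resolvent is holomorphic, and exclude poles using the a priori bound $|\mu_k|<1$ off the real axis from \lemref{multsize}. Your version buys something: it works with $R(\lambda)f$ for genuine $L^2$ data rather than requiring that the pointwise kernel itself extend holomorphically (a step the paper asserts without much comment), and it yields holomorphy of the $\mu_k$ on a full real neighborhood in one stroke rather than pointwise. The cost is the extra bookkeeping you correctly flag: choosing $f$ so that the transmission coefficient $c(\lambda)=W_m^{-1}\int_e y_-f$ is not identically zero (supplied by the construction in \lemref{bridgelem} together with \lemref{nozero}), and the removable-singularity argument to upgrade the bounded meromorphic ratio to a holomorphic extension. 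Both of these are handled adequately, so I have no corrections.
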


\begin{proof} If $\sigma $ is in the resolvent set then the kernels described in \eqref{rkernel} will have a common holomorphic extension 
to $\sigma $ from above and below.  Evaluation gives 
\[R_m(l_m,0,\lambda ) = \frac{\mu _m^2(\lambda)}{1 - \mu _m^2(\lambda )}S_m(l_m,\lambda ) =  [\frac{1}{1 - \mu _m^2(\lambda )} - 1]S_m(l_m,\lambda )\]
so that $(\mu _m^+)^2(\sigma ) = (\mu _m^-)^2(\sigma )$.  A second evaluation,
\[R_m(0,0,\lambda ) = \frac{\mu _m(\lambda)}{1 - \mu _m^2(\lambda )}S_m(l_m,\lambda ), \]
shows $\mu _m^+(\sigma ) - \mu _m^-(\sigma ) = \delta _m(\sigma ) = 0$.

Now suppose $\delta _m(\sigma ) = 0$ in open neighborhood of $\sigma $.  \thmref{holex} notes that $R_m(x,t,\lambda )$ extends holomophically 
to a neighborhood of $\sigma $.  If $g \in L^2(\graph )$ with support in a finite set of edges, the function
$\langle R_m(\lambda )f,g\rangle $ also extends holomorphically as a single valued function in an interval $(\alpha ,\beta )$ containing $\sigma $.
If $[a,b] \subset (\alpha ,\beta )$, then for any $f \in L^2(e)$
\[ \langle {1 \over 2}[P_{[a,b]} + P_{(a,b)}]f, g \rangle = 0 .\]
The set of $g$ with with support in a finite set of edges is dense in $L^2(\graph )$, so ${1 \over 2}[P_{[a,b]} + P_{(a,b)}]f = 0$
for all $f \in L^2(e)$.  By linearity $[P_{[a,b]} + P_{(a,b)}]h = 0$ for any $h \in L^2(\graph )$ with support in a finite set of edges,
and since the projections are bounded we conclude that $P_{[a,b]} + P_{(a,b)} = 0$ and $(\alpha , \beta )$ is in the resolvent set \cite[p. 357]{Kato}. 

\end{proof}

\begin{cor} \label{resreal}
Assume $\sigma \in [0,\infty ) \setminus Z_0$ and for $m = 1,\dots , M$  suppose $\mu _m^{\pm } (\sigma ) \not= \pm 1$. 
Then  $\sigma $ is in the resolvent set of $\Delta + q$ if and only if $\mu ^+_m(\sigma ) $ is real valued in open neighborhood of $\sigma $
for $m  = 1,\dots , M$.
\end{cor}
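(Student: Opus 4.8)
The plan is to reduce this corollary directly to the preceding theorem by showing that, under the stated hypotheses, the vanishing condition $\delta _m(\sigma ) = 0$ is equivalent to $\mu ^+_m(\sigma )$ being real valued. The bridge between the two statements is the reflection identity $\mu _m(\lambda ) = \overline{\mu _m(\overline{\lambda })}$ established in \lemref{holomult}.

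First I would apply that identity at $\lambda = \sigma + i\epsilon $ with $\sigma \in \real $ and $\epsilon > 0$. Since $\overline{\lambda } = \sigma - i\epsilon $, \lemref{holomult} gives $\mu _m(\sigma + i\epsilon ) = \overline{\mu _m(\sigma - i\epsilon )}$, equivalently $\mu _m(\sigma - i\epsilon ) = \overline{\mu _m(\sigma + i\epsilon )}$. The hypotheses $\sigma \notin Z_0$ and $\mu _m^{\pm }(\sigma ) \not= \pm 1$ are exactly those under which the one-sided limits $\mu _m^{\pm }(\sigma )$ exist as continuous boundary values (this is the setting of the preceding theorem, via \thmref{holex}), so letting $\epsilon \to 0^+$ and using the continuity of complex conjugation yields the boundary identity $\mu _m^-(\sigma ) = \overline{\mu _m^+(\sigma )}$.

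Consequently $\delta _m(\sigma ) = \mu _m^+(\sigma ) - \mu _m^-(\sigma ) = \mu _m^+(\sigma ) - \overline{\mu _m^+(\sigma )} = 2i\,\Im \mu _m^+(\sigma )$, so $\delta _m(\sigma ) = 0$ if and only if $\mu _m^+(\sigma )$ is real. This equivalence is pointwise and holds at every real point of an open neighborhood where the limits are defined, so the condition ``$\delta _m(\sigma ) = 0$ in an open neighborhood for each $m$'' is identical to ``$\mu _m^+(\sigma )$ is real valued in an open neighborhood for each $m$.'' Substituting this into the preceding theorem — which characterizes the resolvent set by $\delta _m(\sigma ) = 0$ in a neighborhood for all $m$ — immediately produces the corollary.

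There is no serious obstacle in this argument; the only step requiring attention is the passage of the reflection identity from the interior of $\complex \setminus [0,\infty )$ to the boundary limits $\mu _m^{\pm }(\sigma )$. This is legitimate precisely because the standing hypotheses force the continuous one-sided extensions to exist, after which the identity survives the limit by continuity of conjugation.
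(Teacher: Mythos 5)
Your argument is correct and is essentially the paper's own proof: both reduce the corollary to the preceding theorem via the reflection identity $\mu _m(\overline{\lambda }) = \overline{\mu _m(\lambda )}$ from \lemref{holomult}, which gives $\mu _m^-(\sigma ) = \overline{\mu _m^+(\sigma )}$ and hence $\delta _m(\sigma ) = 2i\,\Im \mu _m^+(\sigma )$. Your version merely spells out the boundary-limit passage more explicitly than the paper does.
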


\begin{proof}
If  $\mu ^+_m(\sigma ) $ is real valued, then the symmetry $\mu _m(\overline{\lambda }) =  \overline{\mu _m(\lambda )}$ established in \lemref{holomult} 
means $\mu ^-_m(\sigma ) $ has the same real value.  The same symmetry also implies that $\delta _m(\sigma ) \not= 0$ if $\mu ^+_m(\sigma ) $ is not real valued.
\end{proof}

\begin{thm} \label{poslbnd}
For $q \ge 0$ and $M \ge 2$ the spectrum of $\Delta +q$ has a strictly positive lower bound.
\end{thm}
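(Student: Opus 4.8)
The plan is to bound the bottom of the spectrum from below through the quadratic form \eqref{qform}, reducing the statement to a Poincar\'e (spectral gap) inequality on $\tree _M$. Since $\Delta + q$ is the Friedrich's extension, the infimum of its spectrum is the infimum of the Rayleigh quotient over the form domain,
\[\inf \mathrm{spec}(\Delta + q) \;=\; \inf_{f \ne 0} \frac{\int_{\tree _M} |f'|^2 + q|f|^2}{\int_{\tree _M}|f|^2} \;\ge\; \inf_{f \ne 0}\frac{\int_{\tree _M}|f'|^2}{\int_{\tree _M}|f|^2},\]
where the hypothesis $q \ge 0$ lets me discard the potential. So it suffices to produce a constant $c>0$ with $\int_{\tree _M}|f'|^2 \ge c\int_{\tree _M}|f|^2$ for every $f$ in the form domain, which consists of the continuous $H^1$ functions (no derivative condition is imposed by the form). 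Replacing $f$ by $|f|$, which again lies in the form domain and satisfies $\bigl| |f|' \bigr| \le |f'|$ almost everywhere, reduces everything to real, nonnegative, compactly supported $f$.

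The geometric heart of the argument, and the place where $M \ge 2$ enters, is an isoperimetric estimate for finite subtrees. Let $\Omega \subset \tree _M$ be a connected subset that contains $V$ vertices of $\tree _M$ and terminates at $b$ boundary points; because $\tree _M$ is a tree, $\Omega$ contains exactly $V-1$ complete edges, and each of the $b$ boundary points sits on a distinct cut edge. Every vertex of $\tree _M$ has degree $2M$, so counting edge-endpoints incident to the interior vertices gives $2MV = 2(V-1) + b$, whence
\[b = (2M-2)V + 2 \;\ge\; (2M-2)\,V \qquad (M \ge 2).\]
The total length satisfies $\mathrm{vol}(\Omega) \le l_{\max}\,(V-1+b) = O(V)$ with $l_{\max} = \max_m l_m$, so the metric Cheeger constant $h_0 = \inf_\Omega b/\mathrm{vol}(\Omega)$ is strictly positive. (For $M=1$ the coefficient $2M-2$ degenerates to $0$, matching the fact that $\tree _1$ is an isometric copy of the line, whose Laplacian has spectrum $[0,\infty)$; thus $M\ge 2$ is essential.)

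To convert the isoperimetric bound into a spectral gap I would run the standard Cheeger argument through the coarea formula. For real $f \ge 0$ put $g = f^2$; the coarea (Banach indicatrix) identity on the metric graph gives $\int_{\tree _M}|g'| = \int_0^\infty \#\{g = t\}\,dt$, and for almost every $t$ the level set meets only edge interiors, so $\#\{g=t\}$ equals the number of boundary points of the superlevel set $\Omega_t = \{g>t\}$. Applying the isoperimetric inequality to each connected component of $\Omega_t$ (each a finite subtree, to which the degree count applies) and using the layer-cake identity $\int_0^\infty \mathrm{vol}(\Omega_t)\,dt = \int_{\tree _M} g$ yields $\int_{\tree _M}|g'| \ge h_0\int_{\tree _M} g$. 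Since $|g'| = 2f|f'|$, Cauchy--Schwarz gives $h_0\int f^2 \le 2\bigl(\int f^2\bigr)^{1/2}\bigl(\int |f'|^2\bigr)^{1/2}$, hence $\int_{\tree _M}|f'|^2 \ge (h_0^2/4)\int_{\tree _M}|f|^2$. Passing to the closure of the form core, $\mathrm{spec}(\Delta+q)\subset [h_0^2/4,\infty)$ with $h_0^2/4 > 0$.

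I expect the main obstacle to be the careful justification of the coarea step in the metric-graph setting, in particular the bookkeeping for level sets $\Omega_t$ whose boundaries may pass through vertices and the verification that each component is a finite subtree so that the degree count applies uniformly in $t$; the elementary count above is exactly where branching makes the gap appear. A second, machinery-internal route would instead track the physical multipliers $\mu_m(\lambda)$ as $\lambda \to 0^-$: square-summability of $y_+$ over $\tree _e^+$ requires the per-generation growth factor $2M-1$ to dominate the multiplier products, which for $M\ge 2$ holds with a strict margin, so the $\mu_m$ remain real and bounded away from $\pm 1$ on an interval $[0,\epsilon)$, and \corref{resreal} then places $[0,\epsilon)$ in the resolvent set. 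The delicate point in that route is showing the margin persists for arbitrary $q \ge 0$ and edge lengths, which is why I would favor the uniform quadratic-form argument above.
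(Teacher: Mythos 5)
Your proof is correct, but it takes a genuinely different route from the paper. The paper first treats the special case where all edge lengths equal $1$: there the system \eqref{recurse3} collapses to a single quadratic $(2M-1)\mu^2 - 2MC(1,\lambda)\mu + 1 = 0$, the discriminant at $\lambda = 0$ equals $4(M-1)^2 > 0$ for $M \ge 2$, so the multipliers stay real on an interval $[0,\sigma_0)$ and \corref{resreal} places that interval in the resolvent set; general edge lengths are then handled by a smooth change of variables $\phi_m:[0,1]\to[0,l_m]$ that compares the Rayleigh quotient on $\tree_M$ with the one on the unit-edge-length graph up to a positive constant. (Your alternative ``machinery-internal'' route sketched at the end is close in spirit to this, though the paper avoids the issue of general $q$ and lengths in the multiplier analysis precisely by the change-of-variables comparison and by discarding $q\ge 0$ at the level of the form, as you do.) Your argument instead goes directly through the quadratic form \eqref{qform} and a Cheeger inequality: the edge-end count $2MV = 2(V-1)+b$ for a finite subtree, giving $b=(2M-2)V+2$, correctly identifies $M\ge 2$ as the source of a positive isoperimetric constant $h_0$, and the coarea/Cauchy--Schwarz chain yields $\inf\mathrm{spec}(\Delta+q)\ge h_0^2/4$. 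What your approach buys is self-containedness (no multipliers, no \corref{resreal}), uniformity in $q$ and in the edge lengths without any auxiliary comparison graph, and an explicit if crude numerical bound; what the paper's approach buys is a sharper, essentially explicit location of the bottom of the first band in the equal-length case (real multipliers exactly while $\cos^2\sqrt{\sigma}\ge (2M-1)/M^2$) and consistency with the multiplier formalism used throughout. The technical debts are also different: yours is the level-set bookkeeping for the coarea step (components of $\Omega_t$ being finite subtrees with boundary off the vertex set for a.e.\ $t$, which does go through since $\tree_M$ is a tree), while the paper's is the justification that \corref{resreal} applies on $[0,\sigma_0)$ and that the change of variables preserves the vertex conditions.
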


\begin{proof}
Since 
\[\langle (\Delta + q)f,f \rangle = \langle \Delta f,f \rangle + \int_{\tree _M} q|f|^2  ,\]
it suffices to verify the result when $q = 0$.

Consider the case when the edge lengths $l_m$ are all equal to $1$.
Then the system \eqref{recurse3} reduces to
\[(2M-1)\mu ^2(\lambda )  - 2M C(1,\lambda )\mu (\lambda ) + 1 = 0.\]
The quadratic formula gives
\[\mu (\lambda ) = \frac{2MC(1,\lambda ) \pm \sqrt{4M^2C^2(1,\lambda ) - 4(2M-1)}}{2(2M-1)}.\]
Since $C(1,0) = \cos(0) = 1$, the discriminant has the positive value
$4M^2 - 4(2M-1) = 4(M-1)^2$ when $\lambda = 0$, and $\mu (\sigma )$ is real 
as long as $\cos ^2(\sqrt{\sigma }) \ge (2M-1)/M^2$.

Returning to the general case of a graph $\graph $ with unconstrained edge lengths, recall \eqref{qform} that the quadratic form for $\Delta $ is
\[\langle \Delta f,f \rangle = \int_{\tree _M} (f')^2 .\]
Let $x$ be a coordinate for intervals $[0,l_m]$ and $t$ for the interval $[0,1]$.
For $m = 1,\dots , M$ let $x = \phi _m(t) $ be a smooth change of variables.
Assume $\phi ' \ge C_1 > 0$ and $\phi '(t) = 1$ for $t$ in neighborhoods of $0$ and $1$.
If $f$ is in the domain of $\Delta $ for the graph $\graph $, then $f\circ \phi $ will be   
in the domain of $\Delta $ for a graph $\graph _1$ whose edge lengths are all $1$.

The chain rule and the change of variables formula for integrals give
\[ \int_0^{l_m} |f(x)|^2 \ dx = \int_0^1 | f(\phi _m(t))|^2 \phi _m'(t) \ dt , \]
and
\[ \int_0^{l_m} |\frac{df(x)}{dx}|^2 \ dx  = \int_0^1 | \frac{d}{dt}f(\phi _m(t))|^2 \frac{1}{\phi _m'(t)} \ dt . \]
As a consequence there is a constant $C > 0$ such that
\[\frac{\int_{\graph} |\frac{df(x)}{dx}|^2 \ dx }{\int_{\graph } |f(x)|^2 \ dx} \ge C \frac{\int_{\graph _1} |\frac{df(\phi(t))}{dt}|^2 \ dt }{\int_{\graph _1} |f(t)|^2 \ dt}.\]
The calculation for graphs with edge lengths $1$ shows that the expression on the right has a strictly positive lower bound. 
\end{proof}

\begin{cor} \label{ratcase}
Suppose $M \ge 2$ and the lengths $l_m$ are rational.  Then the resolvent set of $\Delta $ includes an unbounded subset of $[0, \infty)$.
\end{cor}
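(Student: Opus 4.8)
The plan is to exploit the fact that, because $q=0$, the coefficients of the multiplier system \eqref{recurse3} become periodic in $\omega=\sqrt{\lambda}$, and to combine this periodicity with the spectral gap from \thmref{poslbnd}. First, since $q=0$ one has $C_m(l_m,\lambda)=\cos(\omega l_m)$ and $S_m(l_m,\lambda)=\sin(\omega l_m)/\omega$, so after cancelling the common factor $\omega$ the system \eqref{recurse3} reads
\[\frac{\mu_m^2 - 1}{\sin(\omega l_m)\,\mu_m} = 2\sum_{k=1}^M \frac{\mu_k - \cos(\omega l_k)}{\sin(\omega l_k)}, \qquad m=1,\dots,M.\]
Writing the rational lengths with a common denominator, $l_m=a_m/b$, and setting $T=2\pi b$, the functions $\cos(\omega l_m)$ and $\sin(\omega l_m)$ are $T$-periodic in $\omega$, and $T l_m = 2\pi a_m\in 2\pi\Z$. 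Since $\sqrt{\lambda}$ carries $\complex\setminus[0,\infty)$ onto the half plane $H=\{\Im\omega>0\}$, I set $\hat\mu_m(\omega)=\mu_m(\omega^2)$; by \lemref{multsize} these are holomorphic on $H$ with $|\hat\mu_m|<1$, and $\sin(\omega l_m)\neq 0$ throughout $H$.

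The key step is to promote the periodicity of the equation to genuine periodicity of the multipliers, namely $\hat\mu_m(\omega+T)=\hat\mu_m(\omega)$ on $H$. Because the coefficients are $T$-periodic, both tuples $\{\hat\mu_m(\omega)\}$ and $\{\hat\mu_m(\omega+T)\}$ solve the displayed system at $\omega$, and both have all components of modulus less than $1$. In the sector around the positive imaginary $\omega$-axis the decay estimate \eqref{asympt2} gives $\hat\mu_m(\omega)\to 0$ and $\hat\mu_m(\omega+T)\to 0$ as $\Im\omega\to\infty$. For such $\omega$ the quadratic $\mu_m^2 - t\sin(\omega l_m)\mu_m - 1 = 0$, where $t$ is the common right-hand side, has exactly one small root, since $|\sin(\omega l_m)|\to\infty$ while its two roots have product $-1$; imposing that every component be small forces $t\to -2\sum_k\cot(\omega l_k)$, so the small-component solution of the system is unique there. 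Hence both tuples agree with this unique decaying branch on an open subset of $H$, and the identity theorem gives $\hat\mu_m(\omega+T)\equiv\hat\mu_m(\omega)$ on all of $H$. Passing to boundary values on the real axis yields $\mu_m^+\big((\sqrt{\sigma}+nT)^2\big)=\mu_m^+(\sigma)$ for every integer $n$.

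It remains to locate one interval of reality and propagate it. By \thmref{poslbnd} the spectrum has a lower bound $\sigma_*>0$, so every $\sigma\in(0,\sigma_*)$ lies in the resolvent set; there $\mu_m$ extends holomorphically across the axis, and the symmetry $\mu_m(\overline{\lambda})=\overline{\mu_m(\lambda)}$ of \lemref{holomult} shows the boundary values $\mu_m^+(\sigma)$ are real. Put $\omega_*=\sqrt{\sigma_*}$, shrinking if necessary so that $0<\omega_*<T$ and $\sin(\omega l_m)\neq 0$ for $\omega\in(0,\omega_*)$. The periodicity just established makes each $\mu_m^+$ real on every interval $\omega\in(nT,\,nT+\omega_*)$, that is on $\sigma\in\big((nT)^2,(nT+\omega_*)^2\big)$. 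On these intervals $\sin(\omega l_m)=\sin((\omega-nT)l_m)\neq 0$, so $\sigma\notin Z_0$, and since $\mu_m^-=\overline{\mu_m^+}=\mu_m^+$ is the same real value, discarding the discrete set where some $\mu_m^+=\pm 1$ leaves the hypotheses of \corref{resreal} satisfied. That corollary then places the remaining $\sigma$ in the resolvent set, and as $n\to\infty$ these form an unbounded subset of $[0,\infty)$, proving the claim.

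The main obstacle is the second paragraph: periodicity of the coefficient functions does not automatically transfer to the particular multiplier branch selected by the resolvent, and the argument must identify the resolvent multipliers with the unique decaying solution of the system near infinity before the identity theorem can be invoked. Establishing that uniqueness, via the decay \eqref{asympt2} together with the observation that one root of each quadratic is forced small while the other blows up, is the crux of the proof.
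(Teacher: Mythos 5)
Your proof is correct in outline and follows the same overall strategy as the paper: pass to $\omega=\sqrt{\lambda}$, observe that for $q=0$ and rational lengths the coefficients of \eqref{recurse3} (after dividing by $\sqrt{\lambda}$) are periodic in $\omega$, upgrade this to periodicity of the multipliers themselves, and then translate the gap $(0,\sigma_*)$ from \thmref{poslbnd} along the period using \corref{resreal}. The one place where you genuinely diverge is the crux step you yourself flag: showing that the resolvent branch of solutions inherits the periodicity of the equations. The paper disposes of this in one line by citing \thmref{intmult}: among all solutions of the algebraic system, the multipliers are singled out by the summability condition \eqref{summcond0}, and that condition depends only on the tuple $(\xi_1,\dots,\xi_M)$ and the tree, not on $\lambda$; since the system at $\lambda$ and at $(\sqrt{\lambda}+2np)^2$ is literally the same system, uniqueness of the resolvent forces the two multiplier tuples to coincide. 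Your alternative --- identify both $\hat\mu_m(\omega)$ and $\hat\mu_m(\omega+T)$ with the unique decaying solution near $i\infty$ via \eqref{asympt2} and then invoke the identity theorem --- also works, but the assertion that ``the small-component solution of the system is unique there'' is not fully established by the observation that $t\to-2\sum_k\cot(\omega l_k)$: closeness of $t$ to a limit shows all small solutions are near one another, not that they are equal, and you would need to close this with a contraction or Rouch\'e-type argument (using that $|\partial t/\partial\xi_k|=2/|\sin(\omega l_k)|\to 0$ and that the small root of each quadratic depends Lipschitz-continuously on $t$ with small constant). The paper's route via \thmref{intmult} sidesteps this entirely and is the cleaner of the two; your route has the minor virtue of staying inside complex function theory, at the cost of an extra uniqueness lemma that still needs to be written down.
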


\begin{proof}
Assume $\lambda \in \complex \setminus (-\infty ,0)$ so that $\sqrt{\lambda }$ may be taken to be continuous and positive
for $\lambda > 0$.
In case $q = 0$,
\[C(l_m, \lambda ) = \cos (l_m \sqrt{\lambda }), \quad \sqrt{\lambda } S(l_m, \lambda ) = \sin(l_m \sqrt{\lambda }),\]
and these functions are periodic in $\sqrt{\lambda }$ with period $2\pi /l_m $. 
If $l_m = \tau _m/\eta _m $ with $\tau  _m, \eta _m$ positive integers, then the functions have a common period
$p = 2\pi \prod_{m=1}^M \eta _m $.   

The functions $C(l_m, \lambda )$ and $S(l_m,\lambda )$ appear as coefficients in the equations \eqref{recurse3}. 
After multiplication by $1/\sqrt{\lambda }$, the equations \eqref{recurse3} exhibit the same periodicity,
so have identical solutions for $\lambda $ and $\lambda _1$ whenever 
$\sqrt{\lambda _1} = \sqrt{\lambda } + 2n p $ for any  positive integer $n$.  

By \thmref{intmult} the solutions of \eqref{recurse3} which are multipliers are determined by the summability condition if $\lambda \in \complex \setminus [0,\infty ) $,
so $\mu _m([\sqrt{\lambda } + 2np]^2) = \mu _m (\lambda ) $ for nonreal $\lambda $.  This identity extends by continuity to $\lambda \in [0,\infty )$.
By \thmref{poslbnd} there is a $\sigma _0 > 0$ such that $[0,\sigma _0)$ is in the resolvent set of $\Delta $.
\corref{resreal} shows that except possibly at a discrete set of points, the points $\sigma \in [0,\infty )$ which are in the resolvent set are characterized
by real values of the multipliers $\mu _m(\sigma )$, so except for a discrete set of possible exceptions,  
$(4n^2p^2 , [\sqrt{\sigma _0} + 2np]^2)$ is a subset of the resolvent set for $\Delta $.
\end{proof}

\section{Sample computations}

In this section some sample spectral computations are carried out for the case $M = 2$.
The first step is to reduce  the system of equations \eqref{recurse3} to equations for individual multipliers.
Two equations of degree four with entire coefficients are obtained.
For $q = 0$ these equations are solved numerically (using Matlab) for positive values $\sigma $ of the spectral parameter.
After eliminating spurious solutions, the multiplier data is displayed in several figures.

\subsection{Elimination step}

When $M=2$ the system of equations \eqref{recurse3} may be written as 
\begin{equation} \label{special1}
\frac{\mu _1^2(\lambda ) - 1 }{S_1(l_1,\lambda )\mu _1(\lambda )}  =  2 \frac{\mu _1(\lambda ) - C_1(l_1,\lambda )}{S_1(l_1,\lambda )}
 + 2 \frac{\mu _2(\lambda ) - C_2(l_2,\lambda )}{S_2(l_2,\lambda )} ,
 \end{equation}
\[\frac{\mu _2^2(\lambda ) - 1 }{S_2(l_2,\lambda )\mu _2(\lambda )}  =  2 \frac{\mu _1(\lambda ) - C_1(l_1,\lambda )}{S_1(l_1,\lambda )}
 + 2 \frac{\mu _2(\lambda ) - C_2(l_2,\lambda )}{S_2(l_2,\lambda )}. \]
Subtracting the second equation from the first gives
\[ \frac{\mu _1^2(\lambda ) - 1 }{S_1(l_1,\lambda )\mu _1(\lambda )}  -
\frac{\mu _2^2(\lambda ) - 1 }{S_2(l_2,\lambda )\mu _2(\lambda )}  = 0,\]
Solving this quadratic equation for $2\mu _2(\lambda )$ gives
\begin{equation} \label{special2}
2\mu _2(\lambda )   - \frac{\mu _1^2(\lambda ) - 1 }{S_1(l_1,\lambda )\mu _1(\lambda )} S_2(l_2,\lambda )
\end{equation}
\[= \pm \Bigl [ \frac{(\mu _1^2(\lambda ) - 1)^2 }{S_1^2(l_1,\lambda )\mu _1^2(\lambda )} S_2^2(l_2,\lambda ) + 4 \Bigr ]^{1/2}. \]

\eqref{special1} is already first order in $\mu _2(\lambda )$, and may be rewritten as
 \[ 2 \mu _2(\lambda )    - \frac{\mu _1^2(\lambda ) - 1 }{S_1(l_1,\lambda )\mu _1(\lambda )} S_2(l_2,\lambda ) \]
\[ = - 2 S_2(l_2,\lambda ) \frac{\mu _1(\lambda ) - C_1(l_1,\lambda )}{S_1(l_1,\lambda )}  + 2 C_2(l_2,\lambda )  .\]
Replacing the left hand side using \eqref{special2} and squaring gives
\[ \frac{(\mu _1^2(\lambda ) - 1)^2 }{S_1^2(l_1,\lambda )\mu _1^2(\lambda )} S_2^2(l_2,\lambda ) + 4 
 =    4 S_2^2(l_2,\lambda ) \Bigl [\frac{\mu _1(\lambda ) - C_1(l_1,\lambda )}{S_1(l_1,\lambda )} \Bigr]^2 \]
\[  -  8 C_2(l_2,\lambda ) S_2(l_2,\lambda ) \frac{\mu _1(\lambda ) - C_1(l_1,\lambda )}{S_1(l_1,\lambda )} + 4C_2^2(l_2,\lambda ) \]
After some clean-up we get
\[ 3S_2^2(l_2,\lambda )\mu _1^4(\lambda )  - 8 \Bigl [  S_1(l_1,\lambda )C_2(l_2,\lambda ) S_2(l_2,\lambda ) + S_2^2(l_2,\lambda )C_1(l_1,\lambda ) \Bigr ] \mu _1^3 (\lambda ) \]
\[+ \Bigl [2S_2^2(l_2,\lambda ) - 4 S_1^2(l_1,\lambda ) +  4 S_2^2(l_2,\lambda ) C_1^2(l_1,\lambda )  + 4C_2^2(l_2,\lambda )S_1^2(l_1,\lambda ) \]
\[ + 8S_1(l_1,\lambda )C_2(l_2,\lambda ) S_2(l_2,\lambda ) C_1(l_1,\lambda ) \Bigr ] \mu _1^2(\lambda ) - S_2^2(l_2,\lambda ) = 0.\]

The equation satisfied by $\mu _2(\lambda )$ is obtained by interchanging the subscripts $1$ and $2$.

\begin{figure}
\begin{center}
\includegraphics[height=75mm, width=100mm]{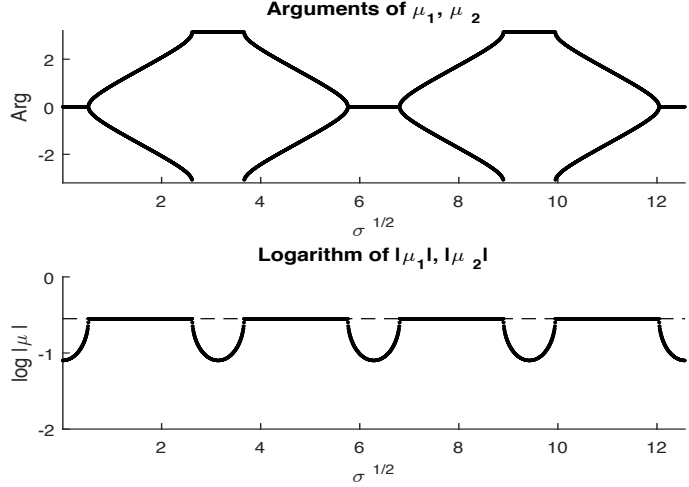}
\caption{Case $l_1 = 1$, $l_2 = 1$ } \label{arg1}
\end{center}
\end{figure}

\subsection{Numerical work}

Figures \ref{arg1}, \ref{arg89}, and \ref{arg2} display multiplier data for three cases.  In all cases 
$q = 0$ and $l_1 = 1$.  The values of $l_2$ are: (i) $l_2 = 1$, (ii) $l_2 = 0.89$, and (iii) $l_2 = 2$.

For a range of positive values of $\sigma $, solutions of the degree four polynomial equations for $\mu _1(\sigma )$ and $\mu _2(\sigma )$
are computed.  Actual multiplier pairs $(\mu _1,\mu _2)$ must satisfy the system \eqref{recurse3}, as well as the bounds
implied by the square integrability condition \eqref{summcond0}.   To eliminate spurious solutions,   
the expressions in \eqref{recurse3} were evaluated, and candidate pairs $(\mu _1, \mu _2)$ were rejected if either equation
had an expression with magnitude greater than $10^{-8}$.  Pairs were also rejected if either candidate multiplier
had $|\mu _j| > 1$, or if the minimum multiplier magnitude exceeded $1/\sqrt{3}$.

Each figure contains two parts, the multiplier arguments and the logarithm of the magnitudes.  
Figure \ref{arg1} is the case with $l_1 = l_2 = 1$.  In this case the two multipliers are equal.
By \corref{resreal}, real points in the resolvent set can be recognized by real values for both multipliers,
except when $\sigma $ lies in a discrete exceptional set.  Eigenvalues in these sets are possible, as discussed in \cite{Carlson97}.

Figure \ref{arg89} illustrates the case $l_2 = 0.89$.  When the multipliers are not real they will appear in conjugate pairs.
Unlike the classical Hill's equation, multipliers may vary in magnitude when they are not real valued. 
The multiplier arguments may exhibit occasional discontinuities.

Figure \ref{arg2} illustrates the case $l_2 = 2$.  
The multiplier argument discontinuities are clearly visible.
Notice that the horizontal axis displays $\sigma ^{1/2}$;
the predicted periodicity from the proof of \corref{ratcase} is evident.

\begin{figure}
\begin{center}
\includegraphics[height=75mm, width=100mm]{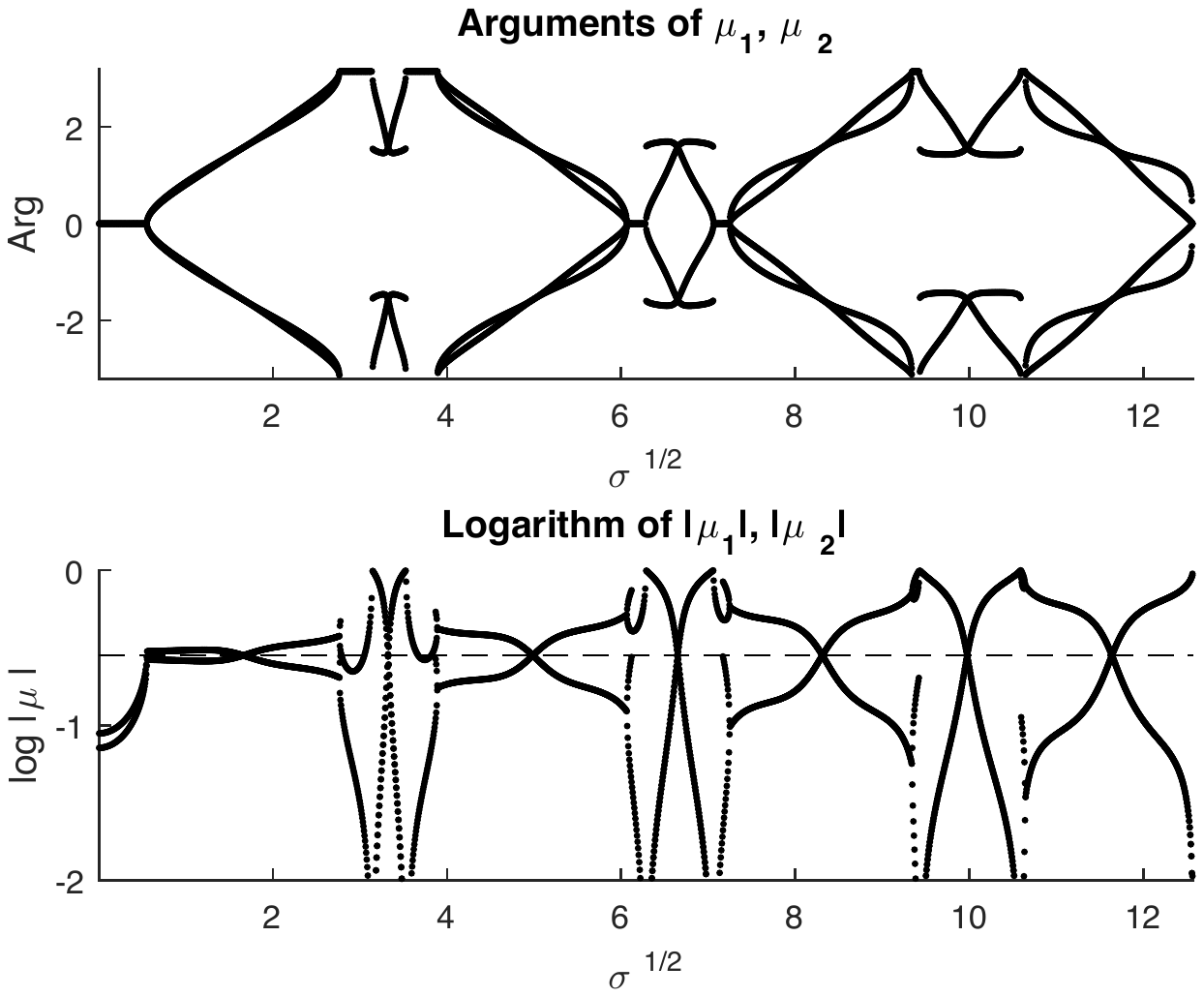}
\caption{ } \label{arg89}
\end{center}
\end{figure}

\begin{figure}
\begin{center}
\includegraphics[height=75mm, width=100mm]{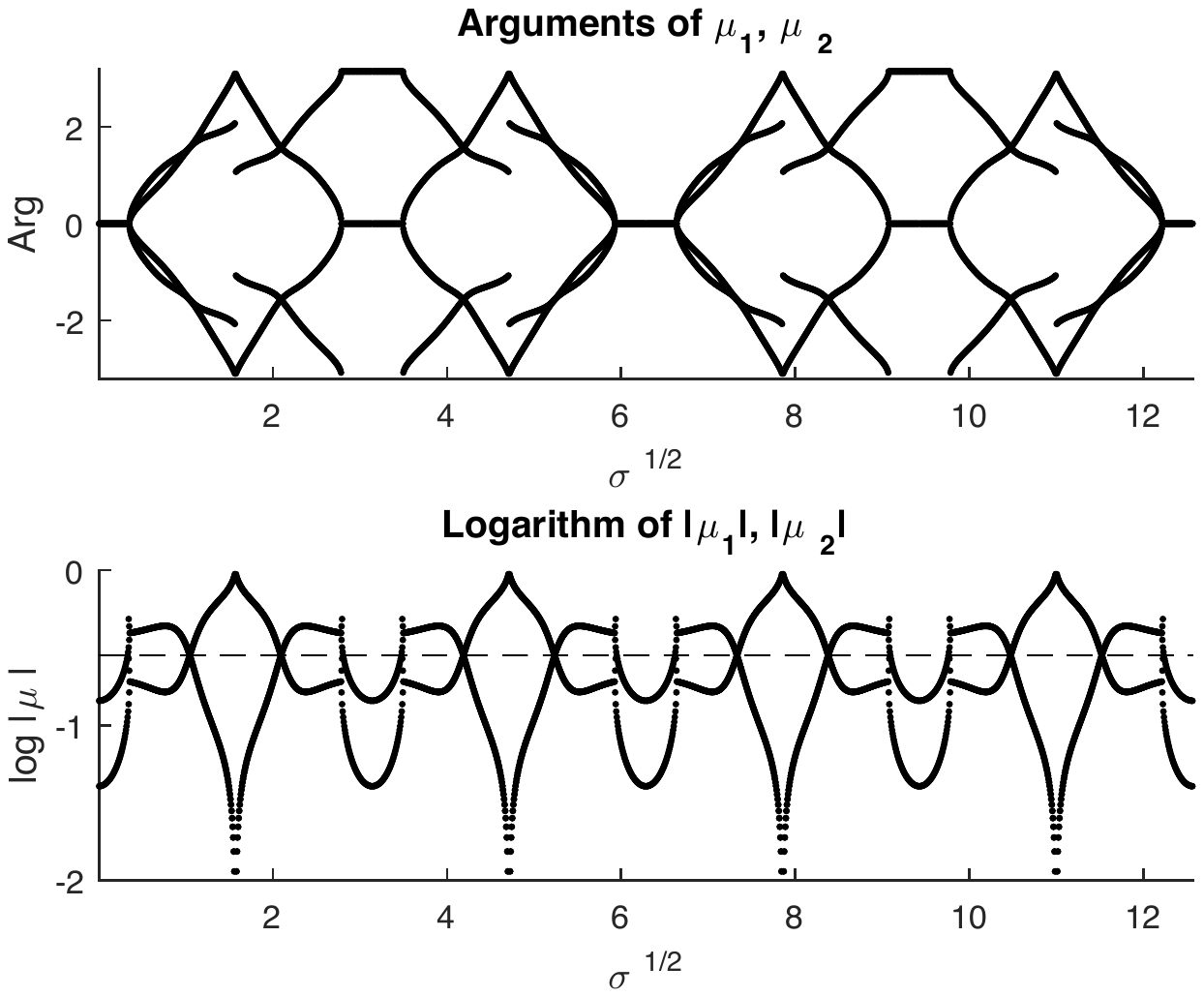}
\caption{ } \label{arg2}
\end{center}
\end{figure}

\newpage

\end{document}